\documentclass[notitlepage,nofootinbib,superscriptaddress]{revtex4-2}
\bibliographystyle{apsrev4-2}

\usepackage[utf8]{inputenc}
\usepackage[T1]{fontenc}


\usepackage{graphicx,graphics,epsfig,subfigure,times,bm,bbm,amssymb,amsmath,amsfonts,amsthm,mathrsfs}

\usepackage[pdfstartview=FitH]{hyperref}
\usepackage{hypernat,ulem}

\usepackage{braket}



\newcommand{\beq}{\begin{equation}}
\newcommand{\eeq}{\end{equation}}
\newcommand{\bes}{\begin{subequations}}
\newcommand{\ees}{\end{subequations}}

\newcommand{\Had}{H_{\text{ad}}}
\newcommand{\Uad}{U_{\text{ad}}}
\newcommand{\Utot}{U_{\text{tot}}}
\newcommand{\D}{\Delta}
\newcommand{\mc}[1]{\mathcal{#1}}

\newtheorem{mydef}{Definition}

\newtheorem{mytheorem}{Theorem}
\newtheorem{mylemma}{Lemma}
\newtheorem{mycorollary}{Corollary}

\newtheorem{myconjecture}{Conjecture}

\begin{document}

\title{Quantum adiabatic theorem for unbounded Hamiltonians with a cutoff and its application to superconducting circuits}
\author{Evgeny Mozgunov$^{1}$, Daniel A. Lidar$^{1-4}$} 
\address{$^{1}$Center for Quantum Information Science \& Technology, University of Southern California, Los Angeles, California 90089, USA\\
$^{2}$Department of Electrical \& Computer Engineering, University of Southern California, Los Angeles, California 90089, USA\\
$^{3}$Department of Physics \& Astronomy, University of Southern California, Los Angeles, California 90089, USA\\
$^{4}$Department of Chemistry, University of Southern California, Los Angeles, California 90089, USA}

\begin{abstract}
We present a new quantum adiabatic theorem that allows one to rigorously bound the adiabatic timescale for a variety of systems, including those described by originally unbounded Hamiltonians that are made finite-dimensional by a cutoff. Our bound is geared towards the qubit approximation of superconducting circuits, and presents a sufficient condition for remaining within the $2^n$-dimensional qubit subspace of a circuit model of $n$ qubits. The novelty of this adiabatic theorem is that unlike previous rigorous results, it does not contain $2^n$ as a factor in the adiabatic timescale, and it allows one to obtain an expression for the adiabatic timescale independent of the cutoff of the infinite-dimensional Hilbert space of the circuit Hamiltonian. As an application, we present an explicit dependence of this timescale on circuit parameters for a superconducting flux qubit, and demonstrate that leakage out of the qubit subspace is inevitable as the tunnelling barrier is raised towards the end of a quantum anneal. We also discuss a method of obtaining a $2^n\times 2^n$ effective Hamiltonian that best approximates the true dynamics induced by slowly changing circuit control parameters.
\end{abstract}

\maketitle


\section{Introduction}
The quantum adiabatic theorem is now more than 100 years old, dating back to Einstein~\cite{Einstein:adiabatic} and Ehrenfest~\cite{Ehrenfest:adiabatic}. Yet, it still continues to inspire new interest and results, in large part owing to its central role in adiabatic quantum computation and quantum annealing, where it can be viewed as providing a sufficient condition for the solution of hard computational problems via adiabatic quantum evolutions~\cite{farhi_quantum_2000,morita:125210,Albash-Lidar:RMP}.

Consider a closed quantum system evolving for a total time $t_f$ subject to the Hamiltonian $H(t)$. Defining the rescaled (dimensionless) time $s=t/t_f$, the evolution is governed by the unitary operator $\Utot(s)$ which is the solution of\footnote{We use a prime to denote $\frac{\partial}{\partial s}$ in this work.}
\begin{equation}
    \Utot'(s) = -it_f H(s) \Utot(s), \quad \Utot(0) = I, \quad s\in[0,1] .
    \label{eq:exact}
\end{equation}

In this work, we assume that the Hamiltonian $H(s) \equiv H_\Lambda(s)$ is a defined as an operator on a finite-dimensional Hilbert space $\mc{H}$ of dimension $\Lambda$, but it is obtained via discretization of an unbounded Hamiltonian $H_\infty$ over an infinite-dimensional Hilbert space. By unbounded we mean that the energy expectation value $\langle \psi | H_\infty |\psi \rangle$ can be arbitrarily large for an appropriate choice of $|\psi\rangle$ within the domain where $H_\infty$ is defined. We will not, however, work with that unbounded Hamiltonian directly, so all our proofs will use the properties of finite-dimensional Hamiltonians, e.g., that the solution to the Schr\"odinger equation exists and the spectrum of $H_\Lambda(s)$ comprises $\Lambda$ discrete (possibly degenerate) eigenvalues. In particular, we will not assume that the limit $\Lambda\to \infty$ of any of the quantities appearing in our results exists. The dimension $\Lambda<\infty$ is what throughout this work we call the {\textit{cutoff}}. We will outline a path to proving a somewhat weaker result for unbounded Hamiltonians $H_\infty$ themselves, but leave a rigorous proof for future work. 

Let $P(s)$ be a finite-rank projection on the low-energy subspace of $H(s)$, i.e., the (continuous-in-$s$) subspace spanned by the eigenvectors with the lowest $d(s)$ eigenvalues. 
A unitary operator $\Uad(s)$ can be constructed that preserves this subspace, i.e.:
\begin{equation}
    P(s) = \Uad (s)P(0) \Uad^\dag (s) .
    \label{eq:intertwining0}
\end{equation}

The adiabatic theorem is essentially the statement that
there exists
$U_{\text{ad}}$ such that
the following holds:\footnote{The norm we use in this work is the operator norm $\|A\| = \sup_{\ket{\psi}}\|A\ket{\psi}\|$ ($\|\ket{\psi}\|=1$),
which is unitarily invariant~\cite{Bhatia:book}: $\|UAV\|=\|A\|$ for arbitrary $A$ and unitary $U$ and $V$.  Additionally, $\|A\| = \|A^\dagger \|$, $\|U\|=1$. Unitarily invariant norms are also submultiplicative:
$\|AB\| \leq \|A\|\|B\|$. For Hermitian operators ($A^\dagger =A$) we have $\|A\| =\sup_{\ket{\psi}}\bra{\psi}\sqrt{A^\dagger A}\ket{\psi} = \sup_{\ket{\psi}}|{\langle \psi| A|\psi\rangle}| \geq {\langle \psi| A|\psi\rangle}$.}  
\begin{equation}
    \|[\Uad (s) -  \Utot(s)]P(0)\|  \leq \frac{\theta}{t_f} \equiv b ,
    \label{eq:diff}
\end{equation}
where $\theta$ is a constant that does not depend on the final time $t_f$, but typically (though not always~\cite{avron_adiabatic_1999,Teufel:book})
depends on the minimum eigenvalue gap $\D$ of $H(s)$ between $P(s)\mc{H}$ and $Q(s)\mc{H}$, where $Q=I-P$. Since the right-hand side (r.h.s.) represents the deviation from adiabaticity, henceforth we refer to $b$ as the 'diabatic evolution bound' and to $\theta$ as the 'adiabatic timescale'. The total evolution time is adiabatic if it satisfies $t_f \gg \theta$. Thus, the system evolves adiabatically (diabatically) if the diabatic evolution bound is small (large).

This version of the adiabatic theorem amounts to finding an expression for $\Uad$, that contains information about the dynamic and geometric phase acquired along the evolution, and can be found in the book \cite{Teufel:book} for unbounded operators. Note that typical textbook expressions (e.g., Ref.~\cite{Messiah}) just bound the overlap between $\Uad (1)|\psi(0)\rangle$ and the final state $\Utot(1)\ket{\psi(0)}$, where $\ket{\psi(0)}$ is the lowest eigenstate of $H(0)$. Instead, we consider any initial state $|\psi(0)\rangle \in P(0) \mathcal{H}$, not just the ground state, and also compute the total phase. This is also more flexible in that, in fact, the projector $P$ can single out any subspace of eigenstates of $H$ (not necessarily the lowest), which may or may not be degenerate.

Techniques exist to improve the bound to $\gamma_k/t^k_f$ for integers $k>1$. This is done by requiring the time-dependent Hamiltonian to have vanishing derivatives up to order $k$ at the initial and final time~\cite{Garrido:62}, or just the final time in the case of an open system~\cite{Venuti:2018aa}. It is even possible to make the bound exponentially small in $t_f$~\cite{Nenciu:93,Hagedorn:02,lidar:102106,RPL:10,Cheung:2011aa,Ge:2015wo}. We will not be concerned with this problem here; instead, we focus on providing an \textit{explicit} expression for the constant $\theta$. We are particularly interested in presenting an expression for $\theta$ that is finite even when used beyond the scope of our proof for some unbounded Hamiltonian $H_\infty(s)$. A paradigmatic example of such a system is a (perturbed) harmonic oscillator whose Hamiltonian changes slowly with time. Such systems are common in quantum computation, e.g., in the context of effective Hamiltonians of superconducting circuits~\cite{Yurke:1984aa} and in describing the coupling between trapped ions via motional degrees of freedom~\cite{Cirac:95}.

A large body of work exists on proofs of Eq.~\eqref{eq:diff}, including for unbounded Hamiltonians, starting with the work of Kato~\cite{Kato:50}, who improved upon the original proof of Born and Fock for simple discrete spectra~\cite{born_beweis_1928} (e.g. a one-dimensional harmonic oscillator), allowing $P(s)$ to be a finite-dimensional spectral projection associated with an isolated eigenvalue (e.g. the hydrogen atom).
Subsequent works, e.g., Ref.~\cite[Sec.~5]{Avron:87} and Refs.~\cite{klein_power-law_1990,Avron_1998,avron_adiabatic_1999,Teufel:book,Jansen:07,Venuti:2015kq}, explored many possible generalizations and refinements of this result, but to the best of our knowledge  a recipe for actually computing the number $\theta$ for a specific unbounded system has not yet been provided.
In order to keep our results accessible to physicists seeking to estimate $\theta$, we use a somewhat non-traditional approach to unbounded Hamiltonians such as the harmonic oscillator or the hydrogen atom. The traditional approach uses abstract mathematical concepts to rigorously and directly work with an allowed family of unbounded Hamiltonians, and is extensively discussed in the literature (see in particular Ref.~\cite{Schmid:2014} for the most general family), but notably lacks calculations for concrete examples or accessible estimates that can be used in specific cases. We note that often the Hamiltonians used in physics can be restricted to finite-dimensional Hilbert spaces after introducing appropriate cutoffs. All our proofs and results concern finite-dimensional bounded Hamiltonians obtained after such cutoffs. Such truncations are common in numerical simulations of experimental systems of the type that our results are designed to be applied to, e.g. Ref.~\cite{khezri2021customized}.

The specific way the cutoff $\Lambda$ is introduced depends on how the initially unbounded Hamiltonian is given to us. If it is provided along with a countable set of basis vectors $\{|n\rangle\},~ n=0,1,2\dots$, and the matrix elements $H_{nm}$ are given explicitly as functions of $n$ and $m$, then just restricting the matrix $H$ to $0\leq n \leq \Lambda -1$ 
provides a Hamiltonian with a cutoff $\Lambda$ that will feature in our results.
If the initially unbounded Hamiltonian is instead provided via operators corresponding to continuous variables, such as momentum and position for a particle on the line, then one must choose an appropriate countable basis, e.g., the harmonic oscillator basis for each of the dimensions. After that, it is straightforward to form a countable basis set and compute matrix elements $H_{nm}$, making the 
cutoff $\Lambda$ applicable as in the first case. Alternatively, one may discretize one of the conjugate variables for each dimension, obtaining a $\Lambda\times\Lambda$ matrix $H$ where each matrix element in principle depends on the step of the discretization grid and thus on $\Lambda$.

We seek an expression for $\theta$ that does not diverge with $\Lambda$ even when the finite-dimensional operator norm $\|H_\Lambda'\|$ may diverge with the cutoff. The adiabatic timescales for unbounded Hamiltonians available in the literature (e.g., see Ref.~\cite[Eq.~(2.2)]{Teufel:book}) achieve this by using a different norm for $H'$, that is free of the cutoff divergence. This expression for $\theta$ is not readily usable for analytic estimates, and it requires much work and prior knowledge for evaluation in a numerical simulation. Here, we resolve this issue by obtaining an analytically tractable expression for $\theta$, interpret the norm of $H'$ used in \cite{Teufel:book} in physical terms, and improve upon it by replacing it by $\|H'P\|$ almost everywhere.%
\footnote{Note that our definition of a cutoff is basis-dependent. It is also important to choose the subspace of interest $P$ consistently for each $\Lambda$. Consider the case of a time-dependent harmonic oscillator $p^2 + \omega^2(t)x^2$, with eigenstates of $p^2 + \omega^2(0)x^2$ used as the basis. While our bound will still technically hold for any choice of $P$, choosing $P$ to project on the highest energy state after the cutoff will lead to a diverging adiabatic timescale. Indeed, $\|PH'Q\|$, where $Q=I-P$, will grow with $\Lambda$. The general methodology of the  choice of basis and $P$ is outside of the scope of this work.}
Moreover, our $\theta$ remains small even for an exponentially large dimension $d$ of the subspace $P\mathcal{H}$, and we make the dependence on the gap $\D$ explicit.

This same approach will also allow us to address the problem of deriving an error bound on the evolution generated by \textit{effective} Hamiltonians $H_{\text{eff}}$ that are operators in a smaller Hilbert space corresponding to the low-energy subspace of the original problem. We identify the isometry $V(s)$ into that Hilbert space and the matrix $H_{\text{eff}}$, such that the solution of the Schr\"odinger equation
$u'(s) = -it_f H_{\text{eff}}(s) u(s)$ with $u(0) = I$ is close to the true evolution due to the same adiabatic theorem stated above:
\begin{equation}
    \|u(s) - V(s)U_{\text{tot}}(s)V^{\dag}(s) \| \leq b \ .
    \label{eq:4}
\end{equation}
We apply our results to circuits of superconducting flux qubits \cite{Mooij:99,Wendin:2017aa}, of the type used, e.g., in quantum annealing~\cite{Harris:2008lp,Yan15a,khezri2020annealpath}. Quantum annealing (reviewed in Refs.~\cite{RevModPhys.80.1061,Albash-Lidar:RMP,Hauke:2020,Chakrabarti:2022}) is a field primarily studying heuristic quantum algorithms for optimization, best suited to run on analog quantum devices. In the qubit language, the quantum annealer is typically initialized in a uniform superposition state that is the ground state of a transverse field Hamiltonian. Over the course of the algorithm, the strength of the transverse field is gradually decreased while simultaneously the strength of the interactions encoding the optimization problem of interest is gradually increased, guiding the quantum evolution towards the ground state that encodes an optimal solution. In the context of superconducting devices, the qubits used for this, with frequency $\omega_{\text{q}}$, are described by a circuit model (which includes capacitors, Josephson junctions, etc.), characterized by the capacitive energy $E_C$ and the Josephson junction energy $E_J \gg E_C$. We express the plasma frequency $\omega_{\text{pl}}(s)$ and the residual transverse field $\omega_{\text{q}}\delta$ at the end of the anneal via the circuit parameters $E_J,E_C$ and the schedule of the control fluxes. We obtain a bound for the adiabatic timescale $\theta$ in Eq.~\eqref{eq:diff}: $\omega_{\text{q}} \theta = O(\omega_{\text{q}}/(\omega_{\text{pl}}(1) \delta))(\ln\frac{\omega_{\text{pl}}(1) }{ \omega_{\text{q}}\delta})^{-1}$, while applying the existing analytically tractable form of the adiabatic theorem~\cite{Jansen:07} yields $\omega_{\text{q}} \theta =\Theta(\Lambda)$,\footnote{The big-$\Theta$ notation is defined by $y =\Theta(x) ~ \Leftrightarrow ~ (y=O(x)$ and $x=O(y)$) which includes proportionality up to a constant.} which diverges with the cutoff. We also check that for finite $\|H'\|$ the existing form~\cite{Jansen:07} gives a result that is consistent with our bound, namely: $\omega_{\text{q}} \theta = O(\omega_{\text{q}}/(\omega_{\text{pl}}(1) \delta))$.
For these expressions written in terms of $E_J$ and $E_C$ see Sec.~\ref{EjEcSec}. Thus, our results include the first non-diverging expression for the adiabatic timescale in the case of unbounded Hamiltonians, as well as a new practical application of existing rigorous forms of the adiabatic theorem.

The structure of the rest of this paper is as follows. We provide detailed definitions required to state our result, as well as compare it with previous work, in Sec.~\ref{sec:result}. The paper is written in a way that allows the reader to skip the proof that follows this section, and move on to applications in Sec.~\ref{sec:fluxQ}. The proof is given in two parts: a short argument for obtaining an $O(1/t_f)$ bound in Sec.~\ref{sec:bigO} and a lengthier Sec.~\ref{ResSec} in which we compute the constant $\theta$. The application to flux qubits can be found in Sec.~\ref{sec:fluxQ}, which is also separated into results and a proof that can be skipped. We give the definition of the effective (qubit) Hamiltonian in Sec.~\ref{eHam}, along with a discussion of how the adiabatic theorem bounds we obtained apply in the effective Hamiltonian setting. Sec.~\ref{sec:fluxQ} and Sec.~\ref{eHam} are independent of each other. We conclude in Sec.~\ref{sec:conc}. Additional calculations in support of the flux qubit analysis are presented in Appendix~\ref{app:A}, and a proof of the intertwining relation is given in Appendix~\ref{app:intertwining}.

\section{Adiabatic and diabatic evolution}
\label{sec:result}

\subsection{Previous work}
\label{graphNormHere}
To set the stage for our results on the adiabatic theorem, we first briefly review key earlier results. We note that unlike these earlier works, we will provide an explicit expression for the adiabatic timescale, that does not diverge with the cutoff of the Hamiltonian in most relevant examples and is ready to be used both analytically and numerically. This is an important aspect of the novelty of our contribution to the topic. 

Such a ready-to-use result was obtained for finite-dimensional (bounded) Hamiltonians by Jansen, Ruskai, and Seiler (JRS), and our results closely follow their work.  They prove several bounds, including~\cite[Theorem 3]{Jansen:07}:
\begin{quote}
Suppose that the spectrum of $H(s)$ restricted to $P(s)$ consists of $d(s)$ eigenvalues (each possibly degenerate, crossing permitted) separated by a gap $2\D(s)$ from the rest of the spectrum of $H(s)$, and $H$, $H'$, and $H''$ are bounded operators. Let $P_{t_f}(s)\equiv \Utot(s) P(0) \Utot^\dag(s)$. Then 
\bes
\label{eq:JRS-AT}
\begin{align}
\|P_{t_f}(s^*)-P(s^*)\| &<\frac{\theta(s^*)}{t_f} \\ 
\theta(s^*) &= \left.\frac{d\|H'\|}{\D^2}\right|_{s=0} +\left.\frac{d\|H'\|}{\D^2}\right|_{s=s^*} + \int_0^{s^*}\left(\frac{d\|H''\|}{\D^2} + 7d\sqrt{d}\frac{\|H'\|^2}{\D^3} \right)ds ,
\end{align}
\ees
\end{quote}
The direct dependence on $\|H'\|$ and $\|H''\|$ is the crucial one from our perspective, and the one we avoid in this work. Indeed these norms diverge with the cutoff for, e.g., a time-dependent harmonic oscillator or the hydrogen atom.

The adiabatic timescale that is harder to use analytically and numerically can be found in Ref.~\cite[Eq.~(2.2)]{Teufel:book}:
\begin{equation}
    \theta  = 
    \|F(0)\| +\|F(1)\| +\int_0^{1} \|F[P',P]\| + \|F'\|ds  ,
\end{equation}
where
\begin{equation}
    F  = \frac{1}{2\pi i} \oint_\Gamma QR(z)  R'(z)dz  + h.c.\ , \quad R(z)= (H-z)^{-1}
\end{equation}
and $\Gamma$ is a contour around the part of the spectrum corresponding to $P\mathcal{H}$. In what follows we give a simplifed non-rigorous summary of the arguments used in Ref.~\cite{Teufel:book} to prove that $\theta < \infty$. The boundedness of the norm of $F$ and its derivative can be traced down to an assumption:
\begin{equation}
    \forall |\phi\rangle,~ \|\phi\|=1:~ \|H'|\phi\rangle\|^2 \leq C_\varepsilon^2 (1  + \|H|\phi\rangle\|^2/\varepsilon^2) \ , \label{c1b}
\end{equation}
where we kept an energy scale $\varepsilon$ to match the dimensions, but $\varepsilon =1$ is usually taken in the mathematical literature. The smallest such constant $C_\varepsilon =\|H'\|_{\mathcal{L}(\mathcal{D},\mathcal{H})}$ is actually the definition of the operator norm for unbounded Hamiltonians with a domain $\mathcal{D}$. The space $\mathcal{D}$ is equipped, besides the usual state norm $\|\psi \|_\mathcal{H}$ inherited from $\mathcal{H}$, with a different state norm $\|\psi \|_\mathcal{D}$ than $\mathcal{H}$, called the graph norm:
\begin{equation}
    \|\psi \|_\mathcal{D} = \sqrt{ \|\psi \|_\mathcal{H}^2 +  \|H_0 \psi \|_\mathcal{H}^2 /\varepsilon^2} 
\end{equation}
for some Hamiltonian $H_0$ (that we take $=H$ for a tighter bound) and some arbitrary energy scale $\varepsilon$. The operator norms are now computed with respect to the spaces they map between:
\begin{equation}
    ||O||_{{\mathcal{L}(\mathcal{A,B})}}  =\text{sup}_{\psi \in \mathcal{A},~ \psi\ne 0 } \frac{||O\psi||_\mathcal{B}}{||\psi||_\mathcal{A}}
\end{equation}
Using this definition $\|H'\|_{\mathcal{L}(\mathcal{D},\mathcal{H})}$ is now a finite number $C_\varepsilon$ under the assumption~\eqref{c1b}. This assumption is commonly used to prove that a solution of the time-dependent Schr\"odinger equation exists, with the difference that a single Hamiltonian $H_0$ is used on the r.h.s. for all times. More importantly, since the resolvent is formally defined by $(H-z)R(z) = 1_\mathcal{H}$ as a map from $\mathcal{H}$ to $\mathcal{D}$, one can prove that the usual  operator norm of the combination $H'R(z)$ is bounded as:
\begin{equation}
   \| H'R(z) \| \le \| H'\|_{\mathcal{L}(\mathcal{D},\mathcal{H})} \|R(z) \|_{\mathcal{L}(\mathcal{H},\mathcal{D})} .
\end{equation}
At the cost of the small increase in the norm of the resolvent, we got a finite number $C_\varepsilon$ in place of the norm of the unbounded operator. Using this idea, \cite{Teufel:book} proves $\theta < \infty$.
Note that for finite-dimensional systems the assumption~\eqref{c1b} can also be written as:\footnote{Our matrix inequalities have the standard meaning: $A\leq B ~ \Leftrightarrow ~ B-A$ has nonnegative eigenvalues.} 
\begin{equation}
    H'^2 \leq C_\varepsilon^2  + C_\varepsilon^2 H^2/\varepsilon^2= c_0 +c_1 H^2 .
\end{equation}
The adiabatic timescale $\theta$ of \cite{Teufel:book} depends on $c_0$ and $c_1$, or equivalently on $ \| H'R(z) \|$, and the same quantities for the second derivative $H''$ coming from $\|F'\|$, though the dependence is never explicitly computed. Here, we will be able to remove the dependence on the constants coming from $H''$, and replace most of the appearances of $c_0$ and $c_1$ by a tighter bound. The physical meaning of the norm $\|O\|_{\mathcal{L}(\mathcal{D},\mathcal{H})}$ is as follows: given a state $|\psi\rangle$ with a bounded expectation value of energy $ \langle \psi| H|\psi \rangle\leq E$, the norm $\|O\|_{\mathcal{L}(\mathcal{D},\mathcal{H})}^2$ is the maximal value of $ \langle \psi| O^2|\psi \rangle/ (1 + E^2/\varepsilon^2)$.

The earlier work by Avron and Elgart~\cite[Sec.~5]{avron_adiabatic_1999}, while mainly focusing on gapless bounded Hamiltonians, discussed the adiabatic theorem for unbounded gapless Hamiltonians. They required that both the resolvent $R(z=i,s)$ and $H(s)R'(z=i,s)$ are bounded. Essentially the same assumption was made by Abou Salem~\cite[Sec.~2]{salem_quasi-static_2007}, in the context of non-normal generators. 

Recent work~\cite{Bachmann:2017aa,Fraas:18} presents a refinement of the adiabatic theorem for a different case of diverging $\|H'\|$ that comes from the thermodynamic limit of the size of a many-body spin system. While the authors do not present an explicit form for $\theta$, we believe their methods are an alternative way to remove the dimension $d$ of the subspace $P\mathcal{H}$, and in fact any dependence on the system size from the bound on local observables.

\subsection{Adiabatic intertwiner}

Following Kato~\cite{Kato:50}, we define an approximate evolution in the full Hilbert space $\mc{H}$:
\begin{equation}
    \Uad '(s) = -i\Had(s) \Uad (s), \quad \Uad (0) = I, \quad s\in[0,1] 
    \label{eq:intertwiner}
\end{equation}
where $\Uad $ is called the \textit{adiabatic intertwiner} and the (dimensionless) adiabatic Hamiltonian is
\beq
\Had(s) = t_f H(s) + i[P'(s),P(s)] . 
\label{eq:Had}
\eeq
Note that both $\Had$ and $\Uad$ are $t_f$-dependent. Here $P(s)$ is a finite-rank projection on the low-energy subspace of $H(s)$ (i.e., the continuous-in-$s$ subspace spanned by the eigenvectors with the lowest $d(s)$ eigenvalues\footnote{The number $d_P$ of these eigenvectors is thus constant and equal to the the dimension of the subspace. Allowing for degeneracy, $d_P \geq d(s)$, and we use $d(s)$ until Sec.~\ref{eHam}, at which point we switch to $d_P$.}). A property of this approximation is that the low-energy subspace is preserved:
\begin{equation}
    \Uad (s)P_0= P(s) \Uad (s)
    \label{eq:intertwining}
\end{equation}
where here and henceforth we denote $P(0)$ by $P_0$, and drop the $s$ time-argument from $P(s)$ where possible.
The proof of this intertwining property is well-known and has been given many times in various forms and subject to various generalizations;
see, e.g., Refs.~\cite{Avron:87,salem_quasi-static_2007,PhysRevA.77.042319,ABLZ:12-SI,Venuti:2015kq}, as well as our Appendix~\ref{app:intertwining}. The idea (due to Kato~\cite{Kato:50}, who presented the original proof; see his Eq.~(22)) is to show that both sides solve the same
initial value problem, i.e., equality holds at $s=0$, and they satisfy the same differential equation after differentiating by $s$. The latter can be shown using Eqs.~\eqref{eq:P'-offD0} and~\eqref{eq:P'-offD} below. 

The operator $P'$ has the following useful properties. Since $P^2 = P$, we have 
\beq
P' =P'P + PP' .
\label{eq:P'-offD0}
\eeq 
Multiplying by $P$ on the right, and letting $Q\equiv I-P$, we obtain $QP'P = P'P$, i.e., 
\beq
PP'P=0 \ , \quad QP'Q=0 ,
\label{eq:P'-offD}
\eeq 
where the proof of $QP'Q =0$ is similar. Thus $P'$ is block off-diagonal:
\beq
P' = PP'Q +QP'P .
\label{eq:P'offdiag}
\eeq

We also note that for a spatially local system the generator related to $i[P',P]$ is approximately a sum of local terms \cite{osborne2007simulating}. This approximation is known as a {\textit{quasiadiabatic continuation}} \cite{HastingsQuasiAdiabatic}, though we will not discuss locality in this work.  

\subsection{Bounds on states and physical observables}
\label{phys_o}

We would like to bound certain physical observables via the quantity $b$ defined in Eq.~\eqref{eq:diff}. Since $b$ bounds the difference between the actual and adiabatic evolution, we refer to $b$ as the `diabatic evolution bound'.

We note that Kato's adiabatic theorem~\cite{Kato:50} established, for bounded Hamiltonians, that the quantity $[\Uad (s) -  \Utot(s)]P_0$ tends to zero as $1/t_f$, but it will still take us most of the rest of this paper to arrive at the point where we can state with conviction that the bound in Eq.~\eqref{eq:diff} does not diverge with the cutoff. This will require extra assumptions; indeed there are contrived unbounded Hamiltonians where Kato's quantity is arbitrarily large for any finite evolution time $t_f$.

Note that using unitary invariance we can rewrite Eq.~\eqref{eq:diff} as $\|x(s)\|\leq b$, where
\beq
x(s) \equiv P_0\Uad^\dag (s)\Utot(s) - P_0 , 
\label{eq:x}
\eeq

\subsubsection{Bound on the final state difference}

Consider an initial state $|\phi\rangle$ in the low-energy subspace ($P_0|\phi\rangle =|\phi\rangle$). We wish to compare the evolution generated by $\Utot$ with the one generated by $\Uad $. Dropping the $s$ time-argument from the $U$'s, the difference in the resulting final states is:
\bes
\label{eq:13}
\begin{align}
\label{eq:13a}
    \|(\Uad  - \Utot)|\phi\rangle\|^{2} &= \|(\Uad  - \Utot)P_0|\phi\rangle\|^{2} =\langle \phi| ((\Uad  - \Utot)P_0)^\dag(\Uad  - \Utot)P_0 |\phi\rangle \\
\label{eq:13b}
    & \leq \|((\Uad  - \Utot)P_0)^\dag(\Uad  - \Utot)P_0\|  \leq \|(\Uad  - \Utot)P_0\|^{2} = \| (\Utot^\dag \Uad  - I)P_0\|^2 \\
\label{eq:13c}
    &= \| ((\Utot^\dag \Uad  - I)P_0)^\dag\|^2 =\| P_0(\Uad^\dag\Utot   - I)\|^2 =     \| x\|^2 \leq b^2 .
\end{align}
\ees
We use this quantity because we would like to describe the error in both the amplitude and the acquired phase of the wavefunction.

\subsubsection{Bound on leakage}

If we are just interested in the leakage from the low-lying subspace, it can be expressed as:
\begin{equation}
    P_{\text{leak}} = \langle \phi |\Utot^\dag Q_0\Utot |\phi\rangle  = \langle \phi |(Q_0\Utot P_0)^\dag Q_0\Utot P_0 |\phi\rangle\leq \|Q_0\Utot P_0\|^2 .
\end{equation}
Then:
\begin{align}
  \|Q_0\Utot P_0\| =  \|Q_0(\Utot - \Uad) P_0\| \leq  \|(\Utot - \Uad) P_0\| = \|( \Uad^\dag\Utot -I) P_0\| = \|x\|\leq b ,
 \end{align} 
so that
\begin{align}
  P_{\text{leak}}  \leq b^2 .
\end{align}

\subsubsection{Bound on the error in an observable $O$}
\label{Obound}
The expectation value for an observable $O$ in the evolved state $\Utot|\phi\rangle $ as opposed to the approximate state $\Uad |\phi\rangle $ is different by:
\begin{equation}
   \langle \phi| \Utot^\dag  O\Utot|\phi\rangle -\langle \phi| \Uad ^{\dag}  O\Uad |\phi\rangle  \leq 2b\|O\| .
   \label{eq:Obound}
\end{equation}
To prove this, note that:
\begin{equation}
   \Utot|\phi\rangle = \Uad |\phi\rangle   + \Delta_U |\phi\rangle, \quad \Delta_U \equiv \Utot -\Uad  , \quad  \Delta_U |\phi\rangle =  -\Utot x^\dag|\phi\rangle  .
\end{equation}
Therefore:
\bes
\begin{align}
\langle \phi| \Utot^\dag  O\Utot|\phi\rangle -\langle \phi| \Uad ^{\dag}  O\Uad |\phi\rangle &= \bra{\phi}\Uad^\dag O \Delta_U\ket{\phi} + \bra{\phi}\Delta_U^\dag O \Utot\ket{\phi}  \\
& \leq \| \Uad^\dag O \Utot x^\dag \| + \| x \Utot^\dag O \Utot \|  \\
& \leq \|O\| \left(\|x^\dag\| + \|x\|  \right)
\end{align}
\ees
from which Eq.~\eqref{eq:Obound} follows.

One of the immediate consequences is that measuring $Z$ (or any other unit-norm observable) on one qubit in a $n$ qubit system after the evolution can be described by an approximate evolution $\Uad $ to within an error of $(2b+b^2)$ in the expectation value.

\subsubsection{Bound on the JRS quantity}
The quantity appearing in the JRS bound~\eqref{eq:JRS-AT} satisfies
\beq
\|P_{t_f}-P\|  = \|  \Utot P_0\Utot^\dag  - \Uad  P_0\Uad^{\dag} \| = \|  \Uad^\dag\Utot  P_0 -   P_0\Uad^{\dag} \Utot\| = \| Q_0 \Uad^\dag\Utot  P_0 -   P_0\Uad^{\dag} \Utot Q_0\|  ,
\label{eq:20}
\eeq 
where in the last equality we used $Q_0=I-P_0$ and added/subtracted $P_0\Uad^{\dag} \Utot P_0$.

Using the definition of $x$ [Eq.~\eqref{eq:x}], we can express:
\beq
  P_0\Uad^\dag \Utot =  P_0  + x, \quad \Uad^\dag\Utot  P_0 =P_0 - \Uad^\dag\Utot x^\dag ,
\eeq
so that Eq.~\eqref{eq:20} becomes:
\beq
\|P_{t_f}-P\|  =  \| Q_0 \Uad^\dag\Utot x^\dag  P_0 +   P_0x Q_0\| = \max(\|\Uad^\dag\Utot x^\dag\|, \|x\|) = \|x\| \leq b  ,
\label{eq:connect-to-JRS}
\eeq 
where the second equality follows since $Q_0 \Uad^\dag\Utot x^\dag  P_0$ and $P_0x Q_0$ are two opposite off-diagonal blocks and their eigenvalues do not mix, and the last equality follows from the unitary invariance of the operator norm.

We proceed to explicitly express the bound $b$ in the next section.

\subsection{Statement of the theorem}
Collecting the definitions of the previous sections, we present our main result:

\begin{mytheorem}[Adiabatic theorem]
\label{th:AT}
Assume that $\forall s\in[0,1]$ there exist positive numbers $c_0, c_1$ such that the Hamiltonian $H(s)$ satisfies 
\begin{equation}
    H'^2 \leq c_0 + c_1 H^{2}\ .
    \label{eq:main-assump}
\end{equation}
Let $P(s)$ denote the projection onto a continuous-in-$s$  eigensubspace of the Hamiltonian $H(s)$ corresponding to $d(s)$ eigenvalues, that occupies an interval $r(s)$ in energy centered around zero energy and is separated by a gap $2\D(s)$ from all other eigenvalues; see Fig.~\ref{fig:contour}. Assume that the initial state $\ket{\phi} \in P(0)\equiv P_0$.  Then the adiabatic intertwiner $\Uad$ [the solution of Eq.~\eqref{eq:intertwiner}] satisfies the following bounds on its difference with the true evolution $\Utot$:
\begin{equation}
   \|P_0\Uad^\dag \Utot - P_0\| \leq b, \quad \|(\Uad  -  \Utot)P_0\|  \leq b ,\quad \|(\Uad  - \Utot)|\phi\rangle\| \leq b , \quad \|  \Utot P_0\Utot^\dag  - \Uad  P_0\Uad^{\dag} \| \leq b ,
   \label{eq:ineq-th}
\end{equation}
where $b = \theta/t_f$ and $\theta$ is given by

\begin{align}
    & \quad \theta =  \tau^2(0)\|P_0H'(0)Q_0\| +\tau^2(s^*)\|P(s^*)H'(s^*)Q(s^*)\| + \int_0^{s^*}ds [\tau^3(5\|PH'Q\| + 3 \|PH'P\|)\|PH'Q\| \notag \\
    &\qquad \qquad + \tau^2\|PH''Q\|  + 3\tau^3\sqrt{\sum_{k=0}^{1} c_k\|PH'H^kQ\|^2}]   ,
\label{adtime}
\end{align}

Here $t_f$ is the total evolution time, $s^*\in[0,1]$ is the final value of $s$ and
\beq
\tau = \min\left(\frac{\sqrt{d(s)}}{\D(s)},\frac{2r(s)+ 2\pi \D(s)}{2\pi \D^2(s)}\right) .
\label{eq:tau}
\eeq
Another valid $\theta$ can be obtained from Eq.~\eqref{adtime} by replacement:
\begin{equation}
    \tau\|PH'HQ\| ~ \to ~ \|PH' Q \|\left( 1 + \min\left(\frac{\sqrt{d(s)}}{2\D(s)}r(s),\frac{2r(s)+ 2\pi \D(s)}{4\pi \D^2(s)}(r(s)+2\D(s))\right)\right) .
\end{equation}
\end{mytheorem}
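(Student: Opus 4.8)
The plan is to reduce all four inequalities in~\eqref{eq:ineq-th} to a single estimate on $x(s)$ from~\eqref{eq:x}: by the manipulations of Sec.~\ref{phys_o} --- in particular Eqs.~\eqref{eq:13} and~\eqref{eq:connect-to-JRS} --- together with unitary invariance of the norm, all four quantities are bounded by $\|x(s^*)\|$, so it suffices to show $\|x(s^*)\|\le\theta/t_f$. To get a differential equation for $x$, I set $W\equiv\Uad^\dag\Utot$ (unitary); differentiating and using~\eqref{eq:intertwiner}, \eqref{eq:Had}, \eqref{eq:exact} gives $W'=-\Uad^\dag[P',P]\Uad\,W$, and the intertwining relation~\eqref{eq:intertwining} turns $\Uad^\dag[P',P]\Uad$ into $KP_0-P_0K$ with $K\equiv\Uad^\dag P'\Uad$. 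With $x(0)=0$ and the block structure~\eqref{eq:P'-offD} of $P'$, this yields $x'(s)=P_0KQ_0W=\Uad^\dag(PP'Q)\Utot$. Bounding this directly gives only $\|x\|=O(1)$, so the $1/t_f$ has to come from an integration by parts --- which is exactly why the proof splits into a short argument for the order (Sec.~\ref{sec:bigO}) and a longer one for the constant $\theta$ (Sec.~\ref{ResSec}).

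For the integration by parts I would introduce an operator $B(s)=PBQ$ solving $[H,B]=\tfrac{1}{it_f}PP'Q$ --- a Sylvester-type equation $PHP\cdot B-B\cdot QHQ=\tfrac{1}{it_f}PP'Q$ --- represented by a contour integral of the resolvents $R(z,s)=(H(s)-z)^{-1}$ over a ``stadium'' enclosing the $d(s)$ eigenvalues of $P(s)$ at distance $\D(s)$ (hence at distance $\ge\D(s)$ from the rest of the spectrum). Differentiating $\Uad^\dag B\Utot$ with the two Schr\"odinger equations and $\Had=t_fH+i[P',P]$ gives $\Uad^\dag(PP'Q)\Utot=\tfrac{d}{ds}(\Uad^\dag B\Utot)+\Uad^\dag[P',P]B\Utot-\Uad^\dag B'\Utot$, so integrating $x'$ from $0$ to $s^*$,
\beq
x(s^*)=\big[\Uad^\dag B\Utot\big]_0^{s^*}+\int_0^{s^*}\!\big(\Uad^\dag[P',P]B\Utot-\Uad^\dag B'\Utot\big)\,ds .
\eeq
By unitarity of $\Uad$ and $\Utot$ it then remains only to bound $\|B\|$, $\|[P',P]B\|$ and $\|B'\|$ pointwise in $s$.

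The resolvent estimates are where the gap and the hypothesis enter. Each inverse-commutator/Sylvester solve, and each conversion of $P'$ into $H'$, costs a factor $\tau$ of~\eqref{eq:tau}: one obtains $\tau$ either from an eigenbasis/Frobenius argument, which sees the rank $d$ and gives $\sqrt d/\D$, or from the stadium-contour bound, with contour length $2r+2\pi\D$ and resolvents $\le1/\D$ on it, giving $(2r+2\pi\D)/(2\pi\D^2)$ --- hence $\tau$ is the minimum of the two. This yields $\|PP'Q\|\le\tau\|PH'Q\|$, $\|[P',P]\|\le2\tau\|PH'Q\|$, and $\|B\|\le t_f^{-1}\tau^2\|PH'Q\|$, which produces the two boundary terms $\tau^2\|PH'Q\|$ and (with the bound on $B$) the $5\tau^3\|PH'Q\|^2$ part of the integrand. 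Differentiating the resolvent representation of $B$ produces additionally: an extra resolvent for each $H'$ released by $R'=-RH'R$ (hence the $\tau^3$ prefactors); a term $\tau^2\|PH''Q\|$; terms in which $P'=PP'Q+QP'P$ enters through the differentiated projectors, accounting for the $\|PH'P\|$ contributions (assembled with $[P',P]B$ into $(5\|PH'Q\|+3\|PH'P\|)\|PH'Q\|$); and one term in which $H'$ is flanked by powers of $H$ on the $Q$-side and cannot be controlled by projected operator norms alone. That last term is precisely where assumption~\eqref{eq:main-assump}, $H'^2\le\sum_{k=0}^{k_{\max}}c_kH^{2k}$, must be used: a Cauchy--Schwarz against the $c_k$ converts it into $\sqrt{\sum_kc_k\|PH'H^kQ\|^2}$, giving the last term $3\tau^3\sqrt{\sum_kc_k\|PH'H^kQ\|^2}$ of~\eqref{adtime}. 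Collecting all pieces, taking norms, and integrating then gives $\theta$ as stated, with $b=\theta/t_f$.

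The hard part is that last step. Differentiating the resolvent representation of $B$ generates a proliferation of terms, and the entire novelty of the theorem is that \emph{each} of them must be estimated using only the projected quantities $PH'Q$, $PH'P$, $PH''Q$ (dressed by powers of $\tau$), never a bare $\|H'\|$ or $\|H''\|$, with the single genuinely unbounded leftover absorbed by~\eqref{eq:main-assump}; getting the constants to assemble into the compact form~\eqref{adtime} is the main bookkeeping challenge. Subsidiary technical points I would need to dispatch are: that $\Had$ is self-adjoint so that $\Uad$ exists, which needs $\|[P',P]\|<\infty$ (itself following from the gap and~\eqref{eq:main-assump}); that the integration by parts and the contour manipulations are legitimate for an unbounded $H$ (domains, analyticity of $R(z,s)$ off the spectrum); and that the two independent estimates really do combine into the $\min$ appearing in $\tau$ in~\eqref{eq:tau}.
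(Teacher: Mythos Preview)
Your proposal is correct and follows essentially the same route as the paper: the Sylvester solution $B$ you introduce is, up to the built-in $1/(it_f)$ factor and the restriction to the $PQ$ block, precisely the paper's twiddle object $P{P'}^\sim$, and your integration-by-parts identity followed by the resolvent and triple-resolvent bookkeeping mirrors Secs.~\ref{sec:bigO}--\ref{ResSec} step for step (including the two alternative contour/eigenbasis estimates that give the $\min$ in $\tau$). The only nitpick is that the mechanism by which~\eqref{eq:main-assump} absorbs the unbounded leftover is not Cauchy--Schwarz but operator monotonicity, $A\le B\Rightarrow X A X^\dag\le X B X^\dag$ with $X=P{H'}^\sim$ (resp.\ $P{P'}^\sim$), applied to $\|XH'\|^2=\|X\,H'^2\,X^\dag\|$ and then the triangle inequality over $k$.
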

Note that the first three inequalities stated in Eq.~\eqref{eq:ineq-th} were already established in Eqs.~\eqref{eq:13}, and the last in Eq.~\eqref{eq:20} along with Eq.~\eqref{eq:connect-to-JRS}. The new aspect of Theorem~\ref{th:AT} is the value of the bound $\theta$, which does not involve $\|H'\|$ or higher derivatives that may diverge with the cutoff used to define $H(s)$. Moreover, 
$\|PH'Q\|$ gives a tighter bound than $\|H'\|_{\mathcal{L}(\mathcal{D},\mathcal{H})}$ that would have appeared from the direct translation of the adiabatic theorem for unbounded Hamiltonians given in Ref.~\cite{Teufel:book}. Indeed, 
\bes
\begin{align}
 \|PH'Q\| &=\|QH'P\| \leq \|H'P\| =\text{max}_{\psi \in P\mathcal{H}, \|\psi\| =1} \|H'\psi\| \leq \|H'\|_{\mathcal{L}(\mathcal{D},\mathcal{H})} \sqrt{ 1 + \|H \psi\|^2/\varepsilon^2} \\
 & \leq  \|H'\|_{\mathcal{L}(\mathcal{D},\mathcal{H})} \sqrt{ 1 + r(s)^2/4\varepsilon^2} .
 \end{align}
\ees
In terms of $c_0,c_1$: $\|PH'Q\| \leq \sqrt{c_0 + c_1 r(s)^2/4}$. When the above inequalities are tight, our bound would match the one that could in principle be obtained from \cite{Teufel:book}. However in many relevant cases such as a harmonic oscillator with small time-dependent anharmonicity $\|PH'Q\|$ is parametrically less than the r.h.s. We also find the form of $PH'Q$ to be more insightful than $\|H'\|_{\mathcal{L}(\mathcal{D},\mathcal{H})}$.

Since the constants $c_0$ and $c_1$ depend on the choice of the constant energy offset, we chose zero energy to lie in the middle of the eigenvalues corresponding to $P\mathcal{H}$. We note that for bounded $H'$ the assumption~\eqref{eq:main-assump} is automatically satisfied with $c_1(s) =0$ and $c_0(s) = \|H'\|^2$, since ${H'}^2-\|H'\|^2 I \leq 0$ (a negative operator) by definition of the operator norm. Using this, we can reduce Eq.~\eqref{adtime} to a form that depends on $\|H'\|$, which allows direct comparison to Eq.~\eqref{eq:JRS-AT} (from Ref.~\cite{Jansen:07}) using $\tau = \frac{\sqrt{d}}{\D}$:
\begin{mycorollary} 
\label{cor:JRScomp}
The JRS adiabatic timescale $\theta^{\text{JRS}}(s^*)$ and the weaker version of our new adiabatic timescale $\theta^{\text{new}}(s^*)$ are:
\bes
\label{eq:thetas}
\begin{align}
\label{eq:theta-JRS}
\theta^{\text{JRS}}(s^*) &= \left.\frac{d\|H'\|}{\D^2}\right|_{s=0} +\left.\frac{d\|H'\|}{\D^2}\right|_{s=s^*} + \int_0^{s^*}\left(\frac{d\|H''\|}{\D^2} + 7d\sqrt{d}\frac{\|H'\|^2}{\D^3} \right)ds ,
\\
\theta^{\text{new}}(s^*) &= \left.\frac{d\|PH'Q\|}{\D^2}\right|_{s=0} +\left.\frac{d\|PH'Q\|}{\D^2}\right|_{s=s^*} +\notag \\
&+ \int_0^{s^*}\left(\frac{d\|PH''Q\|}{\D^2} + d\sqrt{d}\frac{\|PH'Q\|(5\|PH'Q\| + 3\|PH'P\| + 3\|H'\|)}{\D^3} \right)ds ,
\label{eq:theta-new}
\end{align}
\ees
\end{mycorollary}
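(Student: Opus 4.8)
The plan is to obtain Corollary~\ref{cor:JRScomp} directly from Theorem~\ref{th:AT} together with the quoted JRS bound, with no new estimates---only a specialization and a simplification. The expression for $\theta^{\text{JRS}}(s^*)$ is nothing but \cite[Theorem 3]{Jansen:07} as reproduced in Eq.~\eqref{eq:JRS-AT}, combined with the identity $\|P_{t_f}-P\|=\|x\|$ already established in Eqs.~\eqref{eq:20}--\eqref{eq:connect-to-JRS}; that line therefore requires no further argument.

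For $\theta^{\text{new}}(s^*)$, the first step is to check that when $\|H'(s)\|<\infty$ the admissibility hypothesis~\eqref{eq:main-assump} holds with $k_{\max}=0$ and $c_0(s)=\|H'(s)\|^2$. This is precisely the operator inequality ${H'}^2-\|H'\|^2I\leq 0$, which follows from the definition of the operator norm applied to the Hermitian operator ${H'}^2$ (the footnote to Eq.~\eqref{eq:diff}). Hence Theorem~\ref{th:AT} applies, and it remains to evaluate Eq.~\eqref{adtime} with these choices. Since $H^0=I$, the radical there collapses to $\sqrt{c_0\,\|PH'Q\|^2}=\|H'\|\,\|PH'Q\|$, so the last term of the integrand in~\eqref{adtime} becomes $3\tau^3\|H'\|\,\|PH'Q\|$.

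The second step is the substitution $\tau=\sqrt d/\D$. This is legitimate because the $\tau$ of Eq.~\eqref{eq:tau} is a minimum, hence bounded above by $\sqrt d/\D$, and it enters Eq.~\eqref{adtime} only through the positive powers $\tau^2$ and $\tau^3$ multiplying nonnegative quantities; replacing it by $\sqrt d/\D$ can only enlarge $\theta$, so the bounds~\eqref{eq:ineq-th} continue to hold with the resulting (generally non-optimal) $\theta^{\text{new}}$. Using $\tau^2=d/\D^2$ and $\tau^3=d\sqrt d/\D^3$, the two boundary terms of~\eqref{adtime} become $\left.\frac{d\|PH'Q\|}{\D^2}\right|_{s=0}+\left.\frac{d\|PH'Q\|}{\D^2}\right|_{s=s^*}$, and merging the two $\tau^3$ contributions in the integrand gives $\frac{d\sqrt d}{\D^3}\|PH'Q\|\bigl(5\|PH'Q\|+3\|PH'P\|+3\|H'\|\bigr)+\frac{d\|PH''Q\|}{\D^2}$, which is exactly Eq.~\eqref{eq:theta-new}.

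I do not expect a genuine obstacle here: the corollary is bookkeeping layered on Theorem~\ref{th:AT}, and the only points deserving an explicit sentence are the two just made---the trivial verification of~\eqref{eq:main-assump} for finite $\|H'\|$, and the remark that replacing $\tau$ by $\sqrt d/\D$ preserves validity. The form~\eqref{eq:theta-new} is displayed in this particular way precisely so that it can be compared term by term with~\eqref{eq:theta-JRS}, which is the whole purpose of the corollary; in particular one could instead retain the sharper $\tau$ of Eq.~\eqref{eq:tau}, or exploit $k_{\max}>1$, to obtain tighter statements, but those refinements are deferred to the applications.
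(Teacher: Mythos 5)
Your proposal is correct and follows essentially the same route as the paper: the finite-$\|H'\|$ case is handled by noting that assumption~\eqref{eq:main-assump} holds with $k_{\max}=0$, $c_0=\|H'\|^2$, and then Eq.~\eqref{adtime} is specialized with $\tau=\sqrt{d}/\D$, while the JRS line is simply quoted from Eq.~\eqref{eq:JRS-AT}. Your explicit remark that substituting the upper bound $\sqrt{d}/\D$ for the minimum in Eq.~\eqref{eq:tau} only enlarges $\theta$ and hence preserves the bounds~\eqref{eq:ineq-th} is a small justification the paper leaves implicit, but it is the same argument.
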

We see that though our new adiabatic timescale has slightly larger numerical coefficients, the projected form of the operators can provide a qualitative improvement over the JRS result.\footnote{We emphasize that $\theta^{\text{new}}$ did not appear in the derivation of $\theta^{\text{JRS}}$, though some intermediate formulas arrived at in Ref.~\cite{Jansen:07} may seem similar at first glance. The derivation of $\theta^{\text{JRS}}$ involves bounds on $\|P'\|$, whereas in our case $P$ and $Q$ do not involve derivatives and serve to reduce the norm of $H'$ or $H''$ in between them.} Note that we can also write a bound that is free of the dimension $d$ if the second option for $\tau$ in Eq.~\eqref{eq:tau} is smaller than the first.

\section{Diabatic evolution bound}
\label{sec:bigO}

We will calculate a diabatic evolution bound $b$ on the quantity in Eq.~\eqref{eq:diff} for some  $s^* \in [0,1]$:
\begin{equation}
    \|[\Uad(s^*) -  \Utot(s^*)]P(0)\| = \| f(s^*) - P_0\| ,
   \label{eq:18}
\end{equation}
where 
\begin{equation}
   f(s) \equiv P_0\Uad^\dag (s)\Utot(s) = x(s)+P_0 .
\end{equation}
We would like to express $f(s^*)$ via  $f'(s)$:
\begin{equation}
    f(s^*) = P_0 + \int_0^{s^*} f'(s)ds 
\label{eq:fs*}
\end{equation}
Recalling that $\Utot$ satisfies Eq.~\eqref{eq:exact} and
$\Uad $ satisfies Eq.~\eqref{eq:intertwiner}, the derivative is:
\begin{align}
     f'(s) =P_0 ( { \Uad^\dag }'\Utot + \Uad^\dag \Utot')
     = P_0 \Uad^\dag (it_f H  -[{P'}^\dag,P] - it_f H )\Utot 
     = -P_0 \Uad^\dag [{P}',P]\Utot 
     \label{stopPoint}
\end{align}
where we used ${P'}^\dag = P'$.
Note how the $O(t_f) $ term cancelled, so the expression appears to be $O(1)$. However, it is in fact $O(1/t_f) $, as we show next. 

For any operator $X(s)$ define $\tilde{X}(s)$ (``twiddle-$X$")~\cite{Avron:87} such that 
\begin{equation}
    [X(s),P(s)] = [H(s),\tilde{X}(s)] ,
    \label{eq:tilde}
\end{equation}
and the diagonal of $\tilde{X}$ in the eigenbasis of $H(s)$ is zero. Note that $\tilde{X}$ has units of time relative to $X$.

For instance, ${P'}^\sim$ is defined via:\footnote{Our convention is that the tilde takes precedence over derivatives, i.e., $\tilde{X}'\equiv (\tilde{X})'$. When the derivative is to be taken first we write the tilde to the right of the operator, i.e., ${X'}^\sim\equiv (X')^\sim$.}
\begin{equation}
        [P'(s),P(s)] = [H(s),{P'}^\sim(s)]
\end{equation}
The details of why $\tilde{X}$ exists and how it is expressed via $X$ are given in Sec.~\ref{ResSec}. 
Proceeding with bounding Eq.~\eqref{stopPoint}, we can now rewrite it as:
\begin{align}
 f'(s) =  -P_0 \Uad^\dag [H,{P'}^\sim]\Utot .
\end{align}
Note that, using Eqs.~\eqref{eq:exact} and~\eqref{eq:intertwiner}:
\begin{equation}
    ( \Uad^\dag {P'}^\sim\Utot)' =  \Uad^\dag  (it_f H {P'}^\sim - [P',P]{P'}^\sim  + {{P'}^\sim}'-{P'}^\sim it_f H ) \Utot ,
\end{equation}
which we can rearrange as:
\begin{equation}
     \Uad^\dag [H,{P'}^\sim ]\Utot  = \frac{1}{it_f} [ ( \Uad^\dag {P'}^\sim \Utot)' +  \Uad^\dag  ([P',P]{P'}^\sim   - {{P'}^\sim }')\Utot ] .
\end{equation}
Using this in Eq.~\eqref{stopPoint}, we obtain the desired $O(1/t_f) $ scaling:
\begin{align}
     f'(s) = \frac{iP_0}{t_f} [ ( \Uad^\dag {P'}^\sim \Utot)' -  \Uad^\dag  (P'{P'}^\sim   + {{P'}^\sim }')\Utot] ,
\end{align}
where using Eq.~\eqref{eq:intertwining} we simplified one term in the commutator as $P_0 \Uad^\dag P= P_0 \Uad^\dag$, and also using Eq.~\eqref{eq:P'-offD}, we have $P_0 \Uad^\dag P' P= \Uad^\dag P P' P =0$, so that the other term with $P'P$ in the commutator vanishes. Plugging this back into Eq.~\eqref{eq:fs*}, we get:
\begin{align}
    f(s^*) -P_0 =\frac{iP_0}{t_f}   \left( ( \Uad^\dag {P'}^\sim \Utot)|_0^{s^*} -\int_0^{s^*}  \Uad^\dag  (P'{P'}^\sim   + {{P'}^\sim }')\Utot ds \right) .
\end{align}

Using $P_0 \Uad^\dag = \Uad^\dag P$ throughout, this results in the following bound on the quantity in Eq.~\eqref{eq:18} we set out to bound:
\bes
    \label{eq:deb1}
\begin{align}
    \|[\Uad(s^*) -  \Utot(s^*)]P(0)\| &= \| f(s^*)-P_0 \| \leq b = \frac{\theta}{t_f} \\
    \theta & = 
    \|P_0{P'}^\sim (0)\| +\|P(s^*){P'}^\sim (s^*)\| +\int_0^{s^*} \|PP'{P'}^\sim \| + \|P{{P'}^\sim }'\|ds  .
\end{align}
\ees
The adiabatic timescale $\theta$ given here is not particularly useful in its present form. Thus, we next set out to find bounds on each of the quantities involved. Our goal will be to bound everything in terms of block-off-diagonal elements of $H$ and its derivatives, i.e., terms of the form $\|PHQ\|$, $\|PH'Q\|$, etc.

\section{Bounds via the resolvent formalism}
\label{ResSec}
Some of the material in this section closely follows Jansen \textit{et al.} (JRS)~\cite{Jansen:07}, adjusted for clarity for our purposes.
We start from the well-known resolvent formula, and then develop various intermediate bounds we need for the final result.

\subsection{Twiddled operators}

If $\Gamma$ is a positively oriented loop in the complex plane encircling the spectrum associated with an orthogonal eigenprojection $P$ of a Hermitian operator $H$, then~\cite{Reed-Simon:book4}: 
\beq
P = \frac{i}{2\pi } \oint_\Gamma (H - z)^{-1} dz ,
\eeq
where $(H-z)^{-1}$ is known as the resolvent.

Using this, it was shown in Lemma 2 of JRS~\cite{Jansen:07} that for every operator $X$ there is a solution $\tilde{X}$ to Eq.~\eqref{eq:tilde} if the eigenvalues in $P$ are separated by a gap in $H$. This solution is written in terms of contour integrals involving the double resolvent:\footnote{Eq.~\eqref{tildeSol} is (up to a minus sign) how the twiddle operation was originally defined in Ref.~\cite[Eq.~(2.11)]{Avron:87}.}
\begin{equation}
    \tilde{X} = \frac{1}{2\pi i} \oint_\Gamma (H - z)^{-1} X (H-z)^{-1}dz = -[({X}^\dag)^\sim]^\dag,
    \label{tildeSol}
\end{equation}
where the contour $\Gamma$ again encircles the portion of the spectrum within $P$. Here $ \tilde{X}$ is block-off-diagonal. The twiddle operation was introduced in Ref.~\cite{Avron:87}, where it was defined via Eq.~\eqref{tildeSol}.

Note that since $P$ and $Q$ both commute with $H$, we can move both $P$ and $Q$ under the twiddle sign, i.e., using Eq.~\eqref{tildeSol} we have 
\bes
\label{eq:51}
\begin{align}
P\tilde{X} &= (PX)^\sim \ , \quad Q\tilde{X} = (QX)^\sim \ , \quad \tilde{X}P = (XP)^\sim \ , \quad \tilde{X}Q = (XQ)^\sim \\
P\tilde{X}Q &= (PXQ)^\sim \ , \quad Q\tilde{X}P = (QXP)^\sim .
\end{align}
\ees
Also note that $\tilde{X}$ is block-off-diagonal~\cite{Jansen:07}, i.e.: 
\bes
\label{eq:24}
\begin{align}
& P\tilde{X}P = Q\tilde{X}Q = 0 \\
& P\tilde{X} = P\tilde{X}Q = \tilde{X}Q\ ,\qquad  Q\tilde{X} = Q\tilde{X}P = \tilde{X}P . 
\label{eq:24b}
\end{align}
\ees

\subsection{Bound on $P'$} 

By definition, $[P,H] = 0$. Differentiating, we obtain:
\begin{equation}
    [H',P] = [P',H] .
\end{equation}
We also know that $P'$ is block-off-diagonal, so by definition [Eq.~\eqref{eq:tilde}] 
\beq
P' = -{H'}^\sim .
\label{eq:P'}
\eeq 
But the tilde operation only depends on the block-off-diagonal elements of $H'$, so that 
\beq
P' = -(PH'Q + QH'P)^\sim ,
\label{eq:P'2}
\eeq 
which implies that as long as this quantity is bounded, $P'$ is as well:
$\|P'\| = \|(PH'Q + QH'P)^\sim\|$.

\subsection{Bound on $\tilde{X}$} 

Suppose that the spectrum of $H(s)$ (its eigenvalues $\{E_i(s)\}$) restricted to $P(s)$ consists of $d(s)$ eigenvalues (each possibly degenerate, crossing permitted) separated by a gap $2\D(s)$ from the rest of the spectrum of $H(s)$. I.e., $d(s)\leq d$, the dimension of the low energy subspace. Under these assumptions JRS proved the following bound in their Lemma~7: 
\begin{equation}
    \|\tilde{X}(s)\| \leq \frac{\sqrt{d(s)}}{\D(s)}\|X\| .
    \label{eq:tildeXbound1}
\end{equation}

We will also use an alternative bound that did not appear in~\cite{Jansen:07}. 
We start with:
\begin{equation}
    \|(H(s)-z)^{-1}\| = \max_i\frac{1}{|E_i(s) -z|} \leq \frac{1}{\D(s)}
    \label{eq:35}
\end{equation}
for $z$ on the contour $\Gamma$ in Eq.~\eqref{tildeSol}, illustrated in Fig.~\ref{fig:contour}. This contour 
is of length $2r(s) + 2\pi \D(s)$ where $r$ is the spectral diameter of $P\mc{H}$ w.r.t $H$. Since $P(s)$ is a spectrum projector, $P\mc{H}$ has a basis of eigenvectors of $H(s)$ with eigenvalues $\lambda_i^{P}$, and we can define:
\beq
r(s) = \max_{\{ |\phi_{{\min}}\rangle, |\phi_{\max}\rangle: \| |\phi\rangle\| =1, P|\phi\rangle = |\phi\rangle \}} (\langle \phi_{\max}|H(s)|\phi_{\max}\rangle -\langle \phi_{{\min}}|H(s)|\phi_{{\min}}\rangle) =  [\max_i\lambda_i^P - {\min}_i\lambda_i^P] ,
\eeq  

\begin{figure}
\centering
\includegraphics[width=0.3\columnwidth]{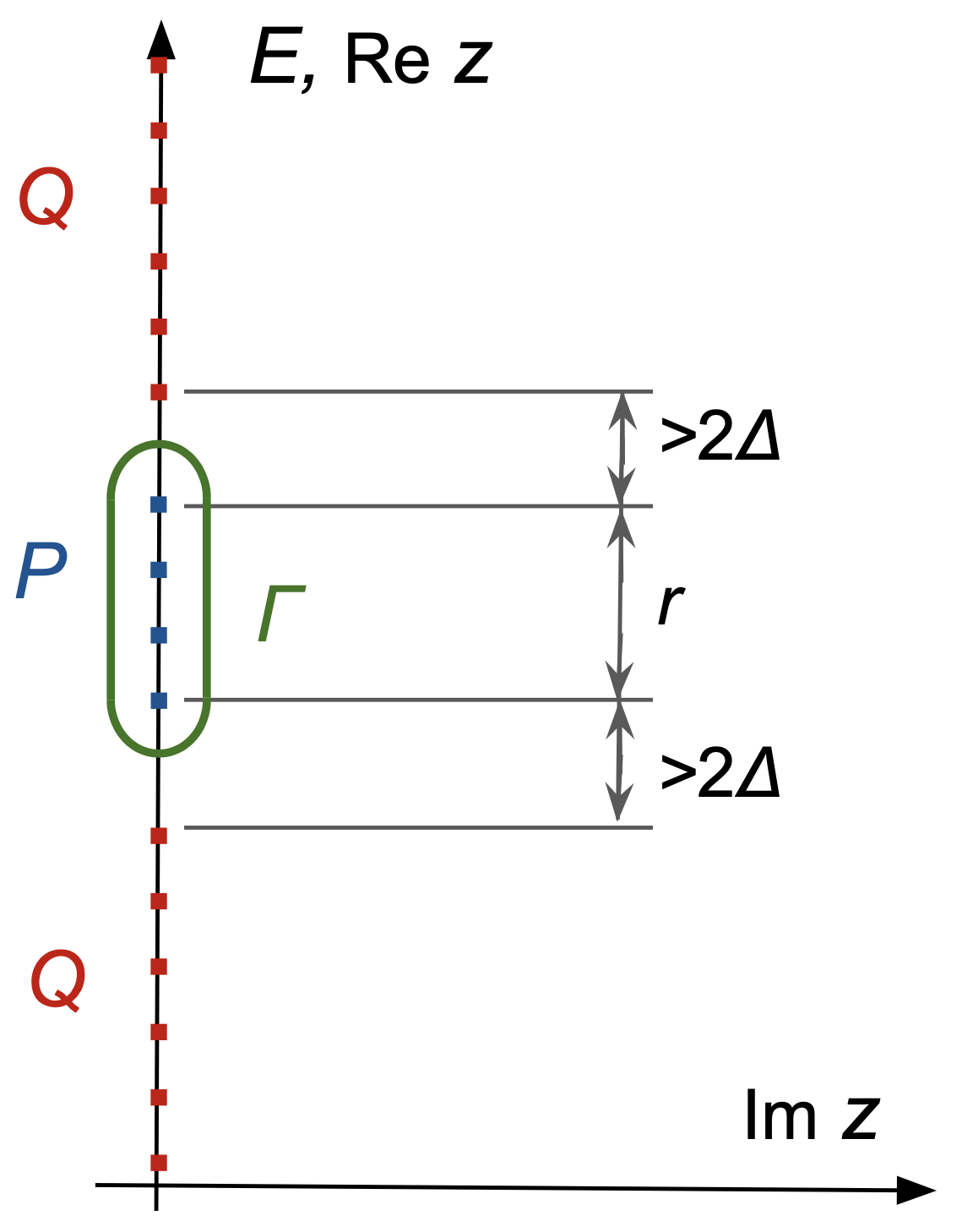}
\caption{An illustration of the integration contour and the various quantities that appear in the statement of Theorem~\ref{th:AT}.}
\label{fig:contour}
\end{figure}

So, bounding the solution $\tilde{X}(s)$ from Eq.~\eqref{tildeSol} directly results in:
\begin{equation}
    \|\tilde{X}(s)\| \leq \frac{2r(s)+ 2\pi \D(s)}{2\pi \D^2(s)}\|X\| .
    \label{eq:tildeXbound2}
\end{equation}
This new bound can be tighter than Eq.~\eqref{eq:tildeXbound1} because it does not depend on $d$,  though this can be offset by $\D$ and $r$.

As stated in Theorem~\ref{th:AT}, we define $\tau$ via Eq.~\eqref{eq:tau} and combine the bounds~\eqref{eq:tildeXbound1} and~\eqref{eq:tildeXbound2} to write
\begin{equation}
     \|\tilde{X}(s)\| \leq \tau(s)\|X\| .
     \label{timescale}
\end{equation}
Here, $\tau$ roughly means the adiabatic timescale. The bound~\eqref{timescale} can be seen as one of the main reasons for introducing the twiddle operation. We will use it repeatedly below. We will omit the $s$-dependence of $\tau$ and $\tilde{X}$ whenever possible in what follows. Note that if $Y$ is any operator that commutes with $H$ then by Eq.~\eqref{tildeSol} we have $\tilde{X}Y = (XY)^\sim, ~  Y\tilde{X} =(YX)^\sim$. Therefore:
\begin{equation}
     \|\tilde{X}Y\|\leq \tau\|XY\|,~ ~ \|Y\tilde{X}\|\leq \tau\|YX\|  \quad \text{if}\quad [Y,H]=0.
     \label{timescale2}
\end{equation}

Likewise, using Eqs.~\eqref{eq:51},~\eqref{eq:24}, and~\eqref{timescale} we can remove a twiddle under the operator norm for the price of a factor of $\tau$ while inserting $P$ and $Q$ at will:
\begin{align}
\label{eq:66}
\|P\tilde{X}\| &= \|\tilde{X}Q\| = \|P\tilde{X}Q\| = \|(P{X}Q)^\sim\| \leq \tau \|PXQ\| .
\end{align}

\subsection{Combining everything into the diabatic evolution bound}

We now combine the various intermediate results above to bound the r.h.s. of Eq.~\eqref{eq:deb1}.

Together with $\|\tilde{X}\| \leq \tau\|X\|$ [Eq.~\eqref{timescale}], Eq.~\eqref{eq:51} yields $\|P(s){P'}^\sim (s)\| \leq \tau \|P(s){P'}(s)\|$. Thus, Eq.~\eqref{eq:deb1} becomes:
\begin{equation}
    \| f(s^*)-P_0 \| \leq \frac{1}{t_f} \left(\tau(0)\|P_0P'(0)\| +\tau(s^*)\|P(s^*)P'(s^*)\| +\int_0^{s^*} \|PP'{P'}^\sim\| + \|P{{P'}^\sim }'\|ds \right) .
\end{equation}
Now, using $[P,H]=0$ and $PP'P=0$, note that: 
\begin{align}
PP'{P'}^\sim P &= PP' \frac{1}{2\pi i} \oint_\Gamma (H - z)^{-1} P' P (H-z)^{-1}dz = PP' \frac{1}{2\pi i} \oint_\Gamma (H - z)^{-1} (P'-PP') (H-z)^{-1}dz \\
&= PP' \frac{1}{2\pi i} \oint_\Gamma (H - z)^{-1} P' (H-z)^{-1}dz = PP'{P'}^\sim .
\end{align}
Also, 
$\|PP'\| = \|(PP')^\dag\| = \|P'P\|$ (since $P$ and $P'$ are Hermitian), so that using Eq.~\eqref{eq:51} we get:
\beq
\|PP{P'}^\sim P\| = \|PP'(P'P)^\sim\| \leq \|PP'\|\|(P'P)^\sim\| \leq \|PP'\|(\tau\|P'P\|) = \tau \|PP'\|^2 .
\eeq
Thus
\beq
    \theta =  \left(\tau(0)\|P_0P'(0)\| +\tau(s^*)\|P(s^*)P'(s^*)\| +\int_0^{s^*} \tau\|PP'\|^2 + \|P{{P'}^\sim }'\|ds \right) .
\eeq

We multiply Eq.~\eqref{eq:P'2} from the left by $P$ to give 
\beq
PP' = -P(PH'Q+QH'P)^\sim = -(PH'Q)^\sim ,
\label{eq:60}
\eeq 
where we used Eq.~\eqref{eq:51}. Therefore, using $\|\tilde{X}\| \leq \tau\|X\|$ again, we find:
\bes
    \label{eq:deb2}
\begin{align}
   & \| f(s^*)-P_0 \| \leq \frac{\theta}{t_f} \\
   & \theta = \tau^2(0)\|P_0H'(0)Q_0\| +\tau^2(s^*)\|P(s^*)H'(s^*)Q(s^*)\| + \int_0^{s^*}
    (\tau^3\|PH'Q\|^2 + \|P{{P'}^\sim }'\|)ds  .
\end{align}
\ees
We have nearly achieved the goal of expressing the diabatic evolution bound in terms of block-off-diagonal elements of $H$ and its derivatives. The last term is not yet in this form and will require the development of additional tools, which we do next.

\subsection{Derivative of the resolvent formula} \label{ResAss}

To take derivatives of the twiddled expressions we need to differentiate the resolvent $R(z,s) = (H(s)-z)^{-1}$. 
By differentiating the identity $(H(s)-z))R(z,s) = I$ we obtain 
\begin{equation}
    \frac{\partial}{\partial s} R(z,s) = -R(z,s)H'(s)R(z,s) .
    \label{eq:41}
\end{equation}

We will apply the derivative formula to our derivation. For example, using Eq.~\eqref{tildeSol} we obtain 
\beq
{P'}^\sim = \frac{1}{2\pi i} \oint_\Gamma (H - z)^{-1} P' (H-z)^{-1}dz
\label{eq:P'sim}
\eeq 
and hence taking the derivative results in 
\begin{align}
  {{P'}^\sim}'  =    \frac{1}{2\pi i} \oint (H - z)^{-1} [-H'(H - z)^{-1}P' +  P'' - P'(H - z)^{-1}H' ] (H-z)^{-1}dz .
  \label{step1}
\end{align}

To bound this expression, we need to prove one more fact.

\subsection{Fact about a triple resolvent} 

We will need to analyze expressions of the form
\begin{equation}
 F(A,B) = \frac{1}{2\pi i} \oint (H - z)^{-1} A(H - z)^{-1}B (H-z)^{-1}dz ,
\end{equation}
which we will use with $A,B = H'$ for the norm of $P''$ and $A,B = H',P'$ for the bound on ${{P'}^\sim}'$ above. I.e., 
\beq
{{P'}^\sim}'  = -F(H',P') + {P''}^\sim -F(P',H')  .
\label{eq:P'tilde'}
\eeq

JRS proved a bound on $F(A,B)$. Since $F(A,B)$ has both diagonal and off-diagonal blocks, they found the bound for each block. We review their proof below, starting from a useful expression for the triple resolvent.

Consider the commutator with the Hamiltonian:
 \begin{align}
     [H,F(A,B)] =\frac{1}{2\pi i} \oint [H-z,(H - z)^{-1} A(H - z)^{-1}B (H-z)^{-1}]dz = A\tilde{B} - \tilde{A}B ,
 \end{align}
where we have inserted $z$ since it is not an operator and therefore commutes with the other term,
and where the second equality follows from Eq.~\eqref{tildeSol}.
 
 Let us denote the off-diagonal block projection by $o(X) =PXQ +QXP=[P,(P-Q)X]$. Note that $P$ and $Q$ commute with $H$, so when we apply $[P,(P-Q)~ \cdot]$ to both sides of the above equation, we get, after some simple algebra:
 \begin{equation}
     [H,o(F(A,B))] = [-(P-Q)(A\tilde{B} - \tilde{A}B),P] .
     \label{eq:44}
 \end{equation}
Now we can apply the definition of the twiddle operation, $[H,\tilde{X}] = [X,P]$ [with $X=-(P-Q)(A\tilde{B} - \tilde{A}B)$], to Eq.~\eqref{eq:44}. It follows that 
\beq
o[F(A,B)] =-\{(P-Q)(A\tilde{B} - \tilde{A}B)\}^\sim .
\label{eq:45}
\eeq

\begin{mylemma}
Multiplication by $(P-Q)$ commutes with the twiddle operation, i.e., $ \{(P-Q)X\}^\sim = (P-Q)\tilde{X}$. 
\end{mylemma}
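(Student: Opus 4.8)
The plan is to read off the result directly from the contour-integral representation of the twiddle operation in Eq.~\eqref{tildeSol}, using the fact that $P$ (and hence $Q=I-P$) is a spectral projector of $H$ and therefore commutes with the resolvent. Concretely, I would start from
\beq
\{(P-Q)X\}^\sim = \frac{1}{2\pi i}\oint_\Gamma (H-z)^{-1}(P-Q)X(H-z)^{-1}\,dz ,
\eeq
and then observe that since $[P,H]=0$ and $[Q,H]=0$ we also have $[(P-Q),(H-z)^{-1}]=0$ for every $z$ on $\Gamma$. Thus $(P-Q)$ slides to the left of the first resolvent factor in the integrand, and since $(P-Q)$ does not depend on $z$ it can be pulled out of the contour integral entirely, giving
\beq
\{(P-Q)X\}^\sim = (P-Q)\,\frac{1}{2\pi i}\oint_\Gamma (H-z)^{-1}X(H-z)^{-1}\,dz = (P-Q)\tilde{X} ,
\eeq
which is the claim.

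Alternatively — and this is perhaps the cleanest route given what has already been established — one can avoid touching the integral at all and invoke Eq.~\eqref{eq:51} directly: there it was shown that $P\tilde{X}=(PX)^\sim$ and $Q\tilde{X}=(QX)^\sim$, precisely because $P$ and $Q$ commute with $H$. Combining these with the linearity of the twiddle operation in its argument $X$ (immediate from Eq.~\eqref{tildeSol}, which is linear in $X$), we get
\beq
(P-Q)\tilde{X} = P\tilde{X} - Q\tilde{X} = (PX)^\sim - (QX)^\sim = \{(P-Q)X\}^\sim .
\eeq

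There is essentially no obstacle here: the only thing to verify is that $(P-Q)$ commutes with the resolvent, which follows from $P$ being the spectral projector associated with the eigenvalues enclosed by $\Gamma$, so that $[P,H]=0$ and consequently $[P,(H-z)^{-1}]=0$ on $\Gamma$. One should also note in passing that the resulting operator $(P-Q)\tilde{X}$ automatically inherits the defining normalization of the twiddle (vanishing diagonal blocks in the eigenbasis of $H$) because $\tilde X$ itself is block-off-diagonal by Eq.~\eqref{eq:24} and left-multiplication by the diagonal operator $(P-Q)$ preserves the block structure; hence $(P-Q)\tilde X$ is indeed the twiddle of $(P-Q)X$ in the sense of Eq.~\eqref{eq:tilde}, not merely a solution of the commutator equation up to a diagonal piece. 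This is the only point where a sentence of justification is warranted; everything else is a one-line computation.
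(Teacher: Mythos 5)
Your proof is correct, but it takes a genuinely different route from the paper's. The paper argues at the level of the implicit definition, Eq.~\eqref{eq:tilde}: it shows that $Y=\{(P-Q)X\}^\sim$ and $Y'=(P-Q)\tilde X$ satisfy the same defining commutator equation, $[H,Y]=[(P-Q)X,P]=(P-Q)XP-PX$ (obtained by multiplying $[H,\tilde X]=[X,P]$ by $P-Q$, which commutes with $H$), and that both are block-off-diagonal, so they coincide by uniqueness of the off-diagonal solution. You instead work with the explicit resolvent representation, Eq.~\eqref{tildeSol}, commuting $P-Q$ past the first resolvent and pulling it out of the contour integral, or — even more economically — you invoke Eq.~\eqref{eq:51} together with the linearity of $X\mapsto\tilde X$, which is immediate from Eq.~\eqref{tildeSol}. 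Both arguments are sound. Your route buys brevity: it is essentially a one-line consequence of machinery the paper has already established for Eq.~\eqref{eq:51}, and the normalization (vanishing diagonal blocks) comes for free because the contour formula always produces the block-off-diagonal solution. The paper's route stays with the defining relation and a uniqueness argument, never touching the integral in the proof itself; there the off-diagonality of $(P-Q)\tilde X$ is not a side remark but an essential hypothesis, which is precisely the check the paper performs and that you correctly flag in your closing paragraph — though in your own argument it is automatic rather than something that needs separate justification.
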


\begin{proof}
To prove this statement we need to show that $Y=\{(P-Q)X\}^\sim$ and $Y' = (P-Q)\tilde{X}$ satisfy the same defining equation and are both block-off-diagonal. The defining equation of the first is $[H,Y] = [(P-Q)X,P] = (P-Q)XP-PX$. As for the second, note that if we multiply $[H,\tilde{X}]= [X,P]$ by $(P-Q)$ then, since $H$ commutes with $P-Q$, we obtain $[H,Y']=(P-Q)[X,P] = (P-Q)XP-PX = [H,Y]$. Thus $Y'$ satisfies the same defining equation as $Y$. Moreover, by Eq.~\eqref{eq:tilde} $Y=\{(P-Q)X\}^\sim$ is a block off-diagonal operator, and so is $\tilde{X}$, so that $(P-Q)\tilde{X} $ is thus also block-off-diagonal.
\end{proof}

Thus, by Eq.~\eqref{eq:45}, 
\begin{equation}
o[F(A,B)] = -(P-Q)(A\tilde{B} - \tilde{A}B)^{\sim}
\label{eq:46}
 \end{equation}
 
For the block-diagonal part, we need to apply a different strategy. By pole integrations identical to those in~\cite{Jansen:07}, 
which only require that there be a finite number of eigenvalues inside the low energy subspace, we can prove that
 \begin{equation}
F(A,B)-o[F(A,B)] = (P-Q)\tilde{A}\tilde{B}
 \end{equation}
Combining the last two results, we finally obtain (the same as Eq.~(13) in~\cite{Jansen:07}):
  \begin{equation}
F(A,B) = (P-Q)[\tilde{A}\tilde{B} - (A\tilde{B} - \tilde{A}B)^{\sim}] .
\label{step2}
 \end{equation}

Now, using Eqs.~\eqref{tildeSol},~\eqref{eq:P'} and~\eqref{eq:41}, we can express $P''$ as: 
 \begin{equation}
  P''=-{H'^\sim}'  =  \frac{1}{\pi i} \int (H - z)^{-1} H'(H - z)^{-1}H'   (H-z)^{-1}dz - H''^\sim .
\end{equation}
It then follows from Eq.~\eqref{step2} that:
 \begin{equation}
  P''  =  2(P-Q)\{(H'^\sim)^2 -[H',H'^\sim]^\sim\} - H''^\sim 
\label{biggestExpr}
\end{equation}

\subsection{Bounding the last term in the diabatic evolution bound}

We are interested in bounding the last term in Eq.~\eqref{eq:deb2}, which, using Eq.~\eqref{eq:P'tilde'} we can write as:
\begin{equation}
    \|P{{P'}^\sim }'\| = \|P( -F(H',P')- F(P',H') + {P''}^\sim)\| ,
\end{equation}
We now use $F(A,B) = (P-Q)[\tilde{A}\tilde{B} - (A\tilde{B} - \tilde{A}B)^{\sim}]$ [Eq.~\eqref{step2}] to write 
\beq
    \|P{{P'}^\sim }'\| = \|P( -{H'}^\sim {P'}^\sim + (H'{P'}^\sim - {H'}^\sim P')^\sim
    - {P'}^\sim {H'}^\sim + (P'{H'}^\sim - {P'}^\sim H')^\sim + {P''}^\sim)\| .
\eeq
Recall that $P' = -{H'}^\sim$ [Eq.~\eqref{eq:P'}], so that
\beq
\label{eq:62}
    \|P{{P'}^\sim }'\| = \|P( -{H'}^\sim {P'}^\sim + (H'{P'}^\sim)^\sim
    - {P'}^\sim {H'}^\sim - ({P'}^\sim H')^\sim + {P''}^\sim)\| .
\eeq

Repeatedly using the fact that twiddled operators are block-off-diagonal and using Eq.~\eqref{eq:66}:
\begin{align}
\|P{H'}^\sim {P'}^\sim\| &= \|P{H'}^\sim Q {P'}^\sim\| = \|P{H'}^\sim Q {P'}^\sim P\| \leq \| P{H'}^\sim Q\| \|{P'}^\sim P\| \leq \tau^2 \| P{H'} Q\| \|P{P'} \| ,
\end{align}
where in the last inequality we used Eq.~\eqref{tildeSol} and the fact that both $P$ and $P'$ are Hermitian to write $\|{P'}^\sim P\| = \|(P{P'}^\sim)^\dag\| = \|P{P'}^\sim \|$. Similarly:
\begin{align}
\|P(H'{P'}^\sim)^\sim\| &=\|P(H'{P'}^\sim)^\sim Q\| \leq \tau\|PH'({P'}^\sim Q)\| = \tau\|PH'P(P{P'}^\sim)\| \leq \tau \|PH'P\| \|P{P'}^\sim\| \notag \\
&\leq \tau^2 \|PH'P\| \|P P'\| ,
\end{align}
where in the second equality we used $P\tilde{X} = \tilde{X}Q$ [Eq.~\eqref{eq:24b}]. The remaining terms in Eq.~\eqref{eq:62} are similarly bounded:
\bes
\begin{align}
\|P{P'}^\sim {H'}^\sim\| &= \|P{P'}^\sim Q {H'}^\sim\| =  \|P{P'}^\sim Q {H'}^\sim P\| \leq  \|P{P'}^\sim\| \| P {H'}^\sim Q\|  \leq \tau^2 \| P {H'} Q\| \|P{P'}\| \\
\|P({P'}^\sim H')^\sim\| & \leq \tau \|P {P'}^\sim H' \| \\
\|P{P''}^\sim\| &\leq \tau \|P P''\| .
\end{align}
\ees

Combining these bounds yields:
\bes
\label{eq:|PP'sim'|}
\begin{align}
    \|P{{P'}^\sim }'\| &\leq \tau^2(2\|PH'Q\|  +\|PH'P\|)\|PP'\|  + {\tau}(\|P{P'}^\sim H'\| + \| PP''\|)  \\
     &\leq \tau^3(2\|PH'Q\| +\|PH'P\|)\|PH'Q\|  +{\tau}(\|P{P'}^\sim H'\|  + \| PP''\|) ,
\end{align}
\ees
where in the second line we used $\|PP'\|= \|P{H'}^\sim\| = \|P{H'}^\sim Q\|\leq \tau \|PH'Q\|$.

Finally, we use Eq.~\eqref{biggestExpr} for $P''$:
\bes
\label{eq:|PP''|}
\begin{align}
    \|PP''\| &= \| 2P(H'^\sim)^2 -2P[H',H'^\sim]^\sim - PH''^\sim\| \\
    &= 2\|P {H'}^\sim Q {H'}^\sim P\| +2\|P(H'{H'}^\sim)^\sim Q\| + 2\|P({H'}^\sim H')^\sim\| + \|PH''^\sim\| \\
    &\leq 2\| P {H'}^\sim Q\| \| Q {H'}^\sim P\| + 2\tau\| PH'{H'}^\sim Q\| +2\tau\|P{H'}^\sim H'\| + \|PH''^\sim\| \\
    &\leq 2\tau^2(\|PH'Q\|  + \|PH'P\|) \|PH'Q\| +2 \tau \|PH'^\sim H'\| +\tau\|P{H''}Q\| 
\end{align}
\ees

To deal with the two terms that still contain $\sim$ ($\|P{P'}^\sim H'\| $ and $\|PH'^\sim H'\|$), we have no choice but to use the constants $c_0, c_1$ introduced in Sec. \ref{sec:result}:
\begin{equation}
    {H'}^2 \leq c_0 +c_1  H^{2} .
    \label{eq:assum0}
\end{equation}

We use this assumption as follows. First, it implies that $P{H'}^\sim {H'}^2 {H'}^\sim  P \leq \sum_{k=0}^1c_kP{H'}^\sim {H}^{2k} {H'}^\sim P$. Hence, upon taking norms of both sides:
\begin{align}
    \|P{H'}^\sim H'\|^2 &= \|P{H'}^\sim {H'}^2 {H'}^\sim  P\| \leq \sum_{k=0}^1 c_k\|P{H'}^\sim {H}^{2k} {H'}^\sim  P\| = \sum_{k=0}^1 c_k\|P{H'}^\sim {H}^k\|^2 = \sum_{k=0}^1 c_k\|P(H' {H}^k)^\sim\|^2 \notag \\
&    \leq \sum_{k=0}^1 c_k\tau^2\|PH'H^k Q\|^2, 
\end{align}
where in the first equality we used $\|A\|^2 = \|AA^\dag\|$ and in the last equality we made use of $\tilde{X}Y = (XY)^\sim$ when $[Y,H]=0$, and then applied Eq.~\eqref{eq:66}. 

Similarly, using $P' = -{H'}^\sim$:
\begin{align}    
    \|P{P'}^\sim H'\|^2 &=\|P{H'}^{\sim\sim}H'\|^2 = \|P{H'}^{\sim\sim}{H'}^2 {H'}^{\sim\sim} P\| \leq \sum_{k=0}^1 c_k\|P{H'}^{\sim\sim}{H}^{2k} {H'}^{\sim\sim} P\| \notag \\ 
    &= \sum_{k=0}^1 c_k\|P{H'}^{\sim\sim}{H}^k\|^2 \leq \sum_{k=0}^1 c_k\tau^4\|PH'H^kQ\|^2 .
\end{align}

The quantity $\|PH'HQ\|$ appearing for $k=1$ is usually well-behaved with $\Lambda$ as we will see in examples in Sec.~\ref{sec:fluxQ}. In case it is not, we need to take a step back and recall that we obtained it via the bound $\|P(H'H)^\sim Q\| \le \tau \|PH'HQ\|$, which follows from Eq.~\eqref{eq:66}. We thus consider undoing this bound and replacing 
$\tau\|PH'HQ\| \to \|P(H'H)^\sim Q\|$. Using the definition of the $\sim$ operation [Eq.~\eqref{tildeSol}]:
\bes
\begin{align}    
P(H'H)^\sim Q &=  P\frac{1}{2\pi i} \oint_\Gamma (H - z)^{-1} H'(H-z+z) (H-z)^{-1}Qdz \\
&= PH' Q + P\frac{1}{2\pi i} \oint_\Gamma z(H - z)^{-1} H' (H-z)^{-1}Qdz ,
\end{align}
\ees
where to obtain the second equality we used $P\frac{1}{2\pi i} \oint_\Gamma (H - z)^{-1}dz H' Q = PPH' Q$.

The choice of zero energy right in the middle of the eigenvalues corresponding to $P\mathcal{H}$ ensures that $|z|\leq r/2+\Delta$ for $z\in \Gamma$ (see Fig.~\ref{fig:contour}). Using this fact along with Eq.~\eqref{eq:tildeXbound2} then results in the bound
\begin{equation}
   \|P(H'H)^\sim Q \| \leq 
   \|PH' Q \|( 1 + \tau_{\text{new}}(r/2+\Delta)) , \qquad \tau_{\text{new}} \equiv \frac{2r +2\pi \Delta}{2 \pi \Delta^2} .
   \label{eq:ourbound}
\end{equation}
Alternatively, a slight adjustment to the derivation in Ref.~\cite{Jansen:07} gives:
\begin{equation}
   \|P(H'H)^\sim Q \| \leq  \|PH' Q \|( 1 + \tau_{\text{JRS}}r/2), \qquad \tau_{\text{JRS}} \equiv \frac{\sqrt{d(s)}}{\Delta(s)} .
      \label{eq:JRSbound}
\end{equation}
Combining Eqs.~\eqref{eq:ourbound} and~\eqref{eq:JRSbound} we obtain an alternative form for our bound
\begin{equation}
    \tau\|PH'HQ\| ~ \to ~ \|PH' Q \|( 1 + \text{min}(\tau_{\text{new}}(r/2+\Delta),\tau_{\text{JRS}}r/2)) .
\end{equation}

Collecting all these bounds into Eqs.~\eqref{eq:|PP'sim'|} and~\eqref{eq:|PP''|}, we obtain:
\bes
\label{eq:78}
\begin{align}
    \|P{{P'}^\sim }'\|  &\leq \tau^3(2\|PH'Q\| +\|PH'P\|)\|PH'Q\| +  \sqrt{\sum_k c_k\|PH'H^kQ\|^2}  +  2\|PH'Q\|^2) \\ 
    &\quad +2 \tau^3 (\|PH'P\| \|PH'Q\| +\sqrt{\sum_k c_k\|PH'H^kQ\|^2}) +\tau^2\|P{H''}Q\|  \\ 
    &=\tau^3(4\|PH'Q\| + 3 \|PH'P\|)\|PH'Q\| + \tau^2\|PH''Q\| + 3\tau^3\sqrt{\sum_k c_k\|PH'H^kQ\|^2}
\label{eq:78c}
\end{align}
\ees
We are now ready to write down the diabatic evolution bound in its final form, by combining Eqs.~\eqref{eq:diff},~\eqref{eq:18},~\eqref{eq:deb2}, and~\eqref{eq:78}:
\bes
\label{eq:deb3}
\begin{align}
   & \|[\Uad (s^*) -  \Utot(s^*)]P_0\| \leq  \frac{\theta}{t_f}\\
    & \quad \theta =  \tau^2(0)\|P_0H'(0)Q_0\| +\tau^2(s^*)\|P(s^*)H'(s^*)Q(s^*)\| + \int_0^{s^*}ds [\tau^3(5\|PH'Q\| + 3 \|PH'P\|)\|PH'Q\|  \notag \\
    &\qquad \qquad + \tau^2\|PH''Q\| + 3\tau^3\sqrt{\sum_k c_k\|PH'H^kQ\|^2}]   ,
    \label{eq:deb3-b}
\end{align}
\ees 
where the expression for $\theta$ coincides with the one in Eq.~\eqref{adtime}, and hence serves as the end of the proof of Theorem~\ref{th:AT}. It is worth recalling here also that $\tau$ contains a gap dependence via Eq.~\eqref{eq:tau}.

Note that despite appearances due to the block-off-diagonal form of this bound, all of the terms involved can be bounded by norms of some $d_P\times d_P$ matrices ($d_P=\text{rank}(P)$):
\begin{equation}
    \|PH'Q\| \leq \sqrt{\|P{H'}^2 P \|}, \quad  \|PH'HQ\| \leq  \sqrt{\|PH'H^{2} H' P \|} ,
    \label{eq:93}
\end{equation}
where the inequalities follow by writing (for any Hermitian operator $A$): $\|PAQ\|= \text{max}_{|v\rangle, |w\rangle} \langle v|PAQ|w\rangle \leq  \text{max}_{|v\rangle, |w\rangle} ($ ~\quad\quad ~ ~ ~$ \langle v|PA|w\rangle) =\|PA\|$, and $\|PA\|^2= \|PA(PA)^\dag\| \leq \|PA^2 P\|$, so that $\|PAQ\|^2 \leq \|PA^2 P\|$. 

Before we proceed, let us comment briefly on a physical consequence of the bound  $\|[\Uad (s^*) -  \Utot(s^*)]P_0\| \leq  \frac{\theta}{t_f}$ that we have just proven [Eq.~\eqref{eq:deb3}]. In Sec.~\ref{Obound} we gave a bound on the difference in expectation value of an observable $O$ between the exact and the adiabatic evolution. Suppose that $O$ is a unit norm observable such as the Pauli matrix $\sigma^z\equiv Z$ or $\sigma^x\equiv X$;  measuring $Z$ on a single qubit in an $n$-qubit system is a standard ``computational basis" measurement.  For this example, Eq.~\eqref{eq:Obound} then becomes:
\begin{equation}
   \langle \phi| \Utot^\dag  Z\Utot|\phi\rangle -\langle \phi| \Uad ^{\dag}  Z\Uad |\phi\rangle  \leq \frac{2\theta}{t_f}  .
\end{equation}
This means that a measurement of $Z$ at $t_f$ has an expectation value that -- provided $\frac{\theta}{t_f}\ll 1$ -- is well described by an expectation value computed from the evolution $U_{\text{ad}}$ that never leaves the low-energy subspace, which is the qubit subspace. The error between the two is given by the bound above. In Sec.~\ref{eHam} we discuss the effective Hamiltonian (a qubit Hamiltonian for this example) generating this approximate evolution in more detail, with the aim of providing a recipe for numerical simulations of qubit Hamiltonians that can predict the outcomes of superconducting circuit experiments.

\section{Examples}
\label{sec:fluxQ}

We consider examples motivated by adiabatic quantum computing and quantum annealing with flux qubits~\cite{Dwave,Harris:2010kx,Weber:2017aa,Novikov:2018aa,khezri2020annealpath}. We first discuss inductively coupled flux qubits in terms of generic circuit Hamiltonians. We use Theorem~\ref{th:AT} to derive general bounds on the deviation between the actual evolution described by these circuit Hamiltonians and the evolution in the desired low-energy subspace defined by $P$. Next we discuss specific models of single flux qubits, for which we can explicitly exhibit the dependence of our bounds on the circuit parameters.

\subsection{Application to coupled flux qubits} 

\begin{figure}
\centering
\includegraphics[width=0.5\columnwidth]{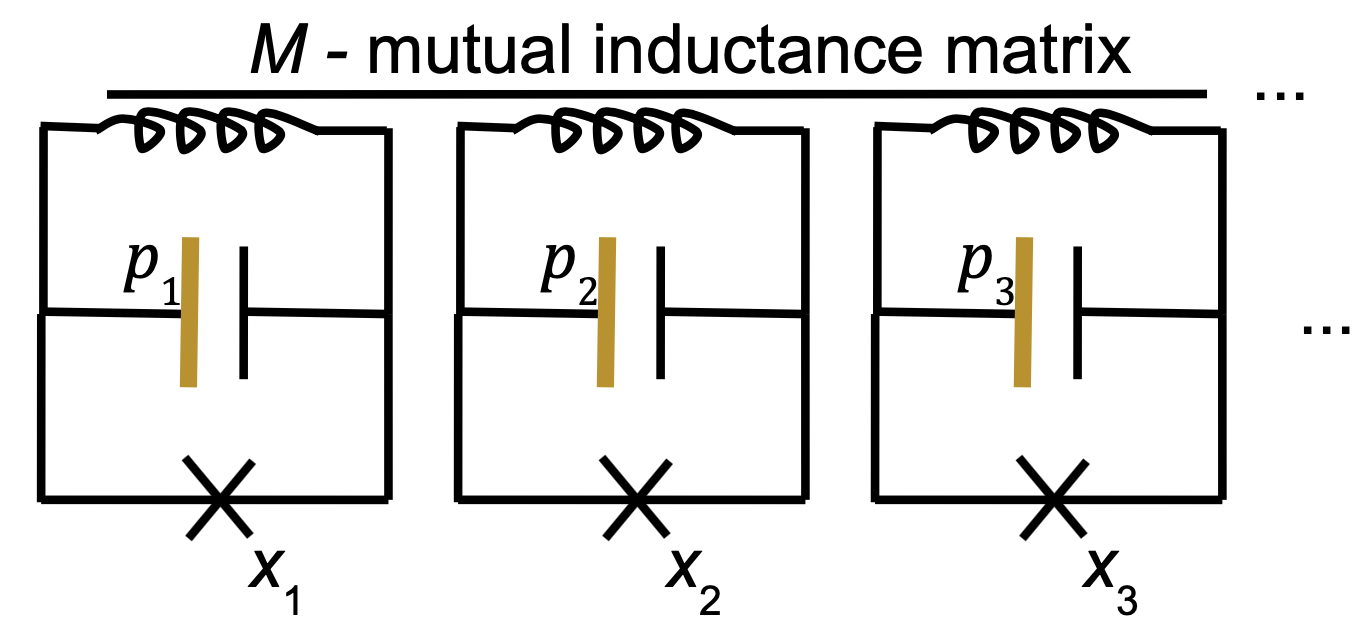}
\caption{The circuit corresponding to Eq. (\ref{eq:91}). The horizontal line above the inductors indicates that each pair is coupled via a mutual inductance $M_{ij}$, and the self inductance is the diagonal $M_{ii}$}
\label{fig:dWc}
\end{figure}

An interesting example is the circuit Hamiltonian describing inductively coupled superconducting flux qubits~\cite{Orlando:99}:
\begin{equation}
\label{eq:91}
    H_{\text{flux}}(s) = \sum_i \hat{p}_i^2 + B_i(s) \cos (\hat{x}_i + \varphi_i(s))  + \sum_{ij}M_{ij}(s)\hat{x}_i\hat{x}_j  ,
\end{equation}
where $\hat{p}_i$ and $\hat{x}_i$ are canonically conjugate momentum and position operators, respectively. The remaining quantities are scalar control parameters: $\varphi_i$ are  control fluxes, $M_{ij}$ are matrix elements of the mutual inductance matrix, and $B_i$ are barrier heights depending on more control fluxes~\cite{Wendin:2017aa}. A simplified circuit described by this equation is shown in Fig.~\ref{fig:dWc}. For notational simplicity we drop the hat (operator) notation below.

The Hamiltonian $H_{\text{flux}}(s)$ is defined over an infinite-dimensional Hilbert space and is unbounded: $\|H_{\text{flux}}(s)\| =\text{max}_{|v\rangle}\langle v|H_{\text{flux}}(s)|v\rangle$  is infinite for $|v\rangle$ maximized over a typical Hilbert space. One such space can be defined by choosing
\begin{equation}
    H_0 =  \sum_i p_i^2 +M_{ii}(0)x_i^2 ,
\end{equation}
and considering eigenvectors $|v\rangle =\otimes_i |n_i\rangle$ of this collection of harmonic oscillators. Clearly in some contexts in physics arbitrarily high $n_i$ will appear as a physical state, which would lead to arbitrarily large $\langle v|p_i^2|v\rangle$, $\langle v|x_i^2|v\rangle$, $\langle v|H_0|v\rangle$,  and $\langle v|H_{\text{flux}}(s)|v\rangle$. Indeed the operators involved would normally be referred to as unbounded. We note that in the definition of the norm  $\|\cdot\|_{\mathcal{L}(\mathcal{D},\mathcal{H})}$\cite{Teufel:book} discussed in Sec.~\ref{graphNormHere}, these operators are bounded with respect to the Hamiltonian. We choose instead to impose a cutoff on the Hamiltonian directly. This allows us to compare with the JRS result which requires a finite-dimensional Hamiltonian. Consider a projector $P_\Lambda$ on states with all $n_i\leq \Lambda$, and for any operator $O$ on the original infinite-dimensional Hilbert space  define $O^\Lambda$ as the finite-dimensional matrix that is the $P_\Lambda$ block of $P_\Lambda O^\Lambda P_\Lambda$. Now using the standard definition of the norm for finite-dimensional matrices we can get $\|p_i^\Lambda\| = \Theta(\sqrt{\Lambda}),~,\|x_i^\Lambda\| = \Theta(\sqrt{\Lambda}),~\|H_0^\Lambda\| = \Theta(\Lambda),~ \|H_{\text{flux}}^\Lambda(s)\| = \Theta(\Lambda)$. Below we will omit the superscript $\Lambda$, but all the expressions that follow are understood to hold in this finite-dimensional space.

\subsubsection{Constant mutual inductance matrix}

We first consider the case $M_{ij}(s)=M_{ij}$.
As we shall see, in this case $\|H'\|$ does not grow with the cutoff, $H'^2 \leq c_0$ is sufficient, and previously developed bounds such as JRS's will not depend on the cutoff either, though recall that by Corollary~\ref{cor:JRScomp} we can obtain a tighter bound.

The derivative is:
\begin{equation}
    H_{\text{flux}}'(s) = \sum_i  B_i'(s) \cos (x_i + \varphi_i(s))   -B_i(s)\varphi_i'(s) \sin (x_i + \varphi_i(s)) , 
\end{equation}
and we note that 
\begin{equation}
    \| H_{\text{flux}}'(s)\|  \leq \sum_i  |B_i'(s)| +B_i(s)|\varphi_i'(s)|  = \sqrt{c_0(s)} ,
\end{equation}
where as long as $B_i(s)$ and $\varphi_i(s)$ are smooth functions of $s$ then $c_0(s)$ is finite, does not depend on the cutoff $\Lambda$ and has dimensions of energy:
\begin{equation}
   c_0(s) =  \left(\sum_i  |B_i'(s)| +B_i(s)|\varphi_i'(s)| \right)^2
\end{equation}

The final error upper bound [Eq.~\eqref{adtime}] simplifies to: 
\bes
\label{eq:95}
\begin{align}
 \theta &= 
  {\tau^2(0)} \|P_0H'(0)Q_0\| +\tau^2(s^*)\|P(s^*)H'(s^*)Q(s^*)\| + \int_0^{s^*}ds [\tau^3(5\|PH'Q\| + 3 \|PH'P\|)\|PH'Q\| \\
        &\qquad + \tau^2\|PH''Q\| + 3\tau^3\sqrt{c_0}\|PH'Q\|]   .
        \end{align}
\ees

Now, since in this example $\|H'(s)\|$ is finite and $\Lambda$-independent $\forall s$, in fact the projection $P$ is not necessary and known bounds are already $\Lambda$-independent. Indeed, the JRS bound for $\theta(s^*)$ quoted in Eq.~\eqref{eq:JRS-AT} 
is clearly $\Lambda$-independent for the present example [recall Corollary~\ref{cor:JRScomp}]. Thus in the next subsection we consider an example where $\|H'(s)\|$ diverges with $\Lambda$.

\subsubsection{Time-dependent mutual inductance matrix}

Generally, to implement a standard adiabatic quantum computing or quantum annealing protocol, the mutual inductance matrix $M_{ij}$ cannot be constant (e.g., see Ref.~\cite{Harris:2010kx}). Thus we consider a second example of a circuit Hamiltonian of superconducting flux qubits, more appropriate for both quantum annealing and our purpose of demonstrating the case of unbounded Hamiltonians with cutoff. Consider the Hamiltonian in Eq.~\eqref{eq:91} and its derivative:
\begin{equation}
    H_{\text{flux}}'(s) = \sum_i  B_i'(s) \cos (x_i + \varphi_i(s))   -B_i(s)\varphi_i'(s) \sin (x_i + \varphi_i(s))  + \sum_{ij}M_{ij}'(s)x_ix_j  .
    \label{eq:97}
\end{equation}
The term $M_{ij}'(s)x_ix_j$, containing the derivative of the time-dependent mutual inductance matrix, now grows arbitrarily large in norm with $\Lambda$ due to the $x_ix_j$ terms (recall that $x_i$ are operators), so that the JRS version of the adiabatic theorem [Eq.~\eqref{eq:JRS-AT}] has an adiabatic timescale that is arbitrarily large in $\Lambda$ and we need to resort to Theorem~\ref{th:AT}. Note that $M_{ij}(s)$ is always a positive matrix. Denote its lowest eigenvalue by $l = {\min}\lambda_M$. Then we can bound:
\begin{equation}
    M \geq lI, \quad \Rightarrow \quad \sum_{ij}M_{ij}(s)x_ix_j \geq l\sum_ix_i^2 .
\end{equation}
Note also that
\begin{equation}
    \|M'\| I \geq M' \quad \Rightarrow \quad  \|M'\|\sum_ix_i^2\geq  \sum_{ij}M_{ij}'(s)x_ix_j ,
\end{equation}
so that we obtain:
\begin{equation}
    \frac{\|M'\|}{l}\sum_{ij}M_{ij}(s)x_ix_j \geq \sum_{ij}M_{ij}'(s)x_ix_j .
\end{equation}
Substituting this inequality into Eq.~\eqref{eq:97} we have: 
\begin{equation}
    H_{\text{flux}}'(s) \leq \sum_i  |B_i'(s)| +B_i(s)|\varphi_i'(s)| +  \frac{\|M'\|}{l}\sum_{ij}M_{ij}(s)x_ix_j .
\end{equation}
We now add a (positive) $p^2$ term and add and subtract the $\cos$ term to complete the Hamiltonian:
\begin{equation}
    H_{\text{flux}}'(s) \leq \sum_i  (|B_i'(s)| +B_i(s)|\varphi_i'(s)|) +  \frac{\|M'\|}{l}H_{\text{flux}}(s) - \frac{\|M'\|}{l} B_i(s) \cos (x_i + \varphi_i(s)) .
\end{equation}
Bounding the last term in the same way as the first two, we obtain:
\begin{equation}
    H_{\text{flux}}'(s) \leq \sum_i  (|B_i'(s)| +B_i(s)|\varphi_i'(s)|) +  \frac{\|M'\|}{l}H_{\text{flux}}(s) + \frac{\|M'\|}{l} \sum_i|B_i(s)| .
\end{equation}
Denote $a_0 = \sum_i  (|B_i'(s)| +B_i(s)|\varphi_i'(s)|) + \frac{\|M'\|}{l} |B_i(s)|  $ and $a_1 =\frac{\|M'\|}{l} $,
then $H_{\text{flux}}' \leq a_0 +a_1H_{\text{flux}}$.
For the square of the derivative, we obtain:
\begin{equation}
    H_{\text{flux}}'^{2} \leq (a_0 + a_1 H_{\text{flux}})^2 \leq  (a_0 + a_1 H_{\text{flux}})^2 +(a_0 - a_1 H_{\text{flux}})^2 \leq 2a_0^2 + 2a_1^2H_{\text{flux}}^2
\end{equation}
Thus the constants we defined in the general notation of Eq.~\eqref{eq:main-assump} are $\sqrt{c_0} =\sqrt{2}a_0$ and $\sqrt{c_1} =\sqrt{2}a_1$, or, explicitly:
\begin{equation}
    \sqrt{c_0} =  \sqrt{2}\sum_i  (|B_i'(s)| +B_i(s)|\varphi_i'(s)|) + \frac{\|M'\|}{l} |B_i(s)|, \quad  \sqrt{c_1} =\sqrt{2}\frac{\|M'\|}{l}
\end{equation}
The final numerator in the diabatic evolution bound [Eq.~\eqref{adtime}] becomes:
\begin{align}
\label{eq:106}
    \theta & = \tau^2(0) \|P_0H'(0)Q_0\| +\tau^2(s^*)\|P(s^*)H'(s^*)Q(s^*)\| + \int_0^{s^*}ds [\tau^3(5\|PH'Q\| + 3 \|PH'P\|)\|PH'Q\| \notag \\
        &\qquad + \tau^2\|PH''Q\| + 3\tau^3\sqrt{c_0\|PH'Q\|^2 +c_1\|PH'HQ\|^2}]    .
        \end{align}
Contrasting this with Eq.~\eqref{eq:95} for the case of a constant mutual inductance matrix, we see that the only differences are the appearance of the new term $c_1\|PH'HQ\|^2$ and an extra contribution from $M_{ij}'$ to every $H'$.

\subsection{Adiabatic timescale via superconducting qubit circuit parameters}
\label{EjEcSec}

The bounds above are stated in terms of the circuit parameters $B_i$ and $M_{ij}$ but are too abstract to be practically useful. In this subsection we consider more specific models and arrive at practically useful bounds which also illustrate the utility of our approach for dealing with unbounded operators with a cutoff.

We consider two types of flux-qubit circuit Hamiltonians:

\bes
\label{eq:H-flux}
\begin{align}
    H_{\text{CJJ}} &= E_C \hat{n}^2  + E_J b \cos \hat{\phi} + E_L (\hat{\phi}- f)^2 , \quad \phi \in [-\infty, \infty],
    \label{eq:H-CJJ}\\
        H_{\text{CSFQ}} &= E_C \hat{n}^2 + E_J b \cos \hat{\phi} - E_\alpha \cos\frac{1}{2}(\hat{\phi}- f)  \quad \phi \in [-2\pi, 2\pi].
    \label{eq:H-CSFQ}
\end{align}
\ees
\begin{figure}
\centering
\includegraphics[width=0.5\columnwidth]{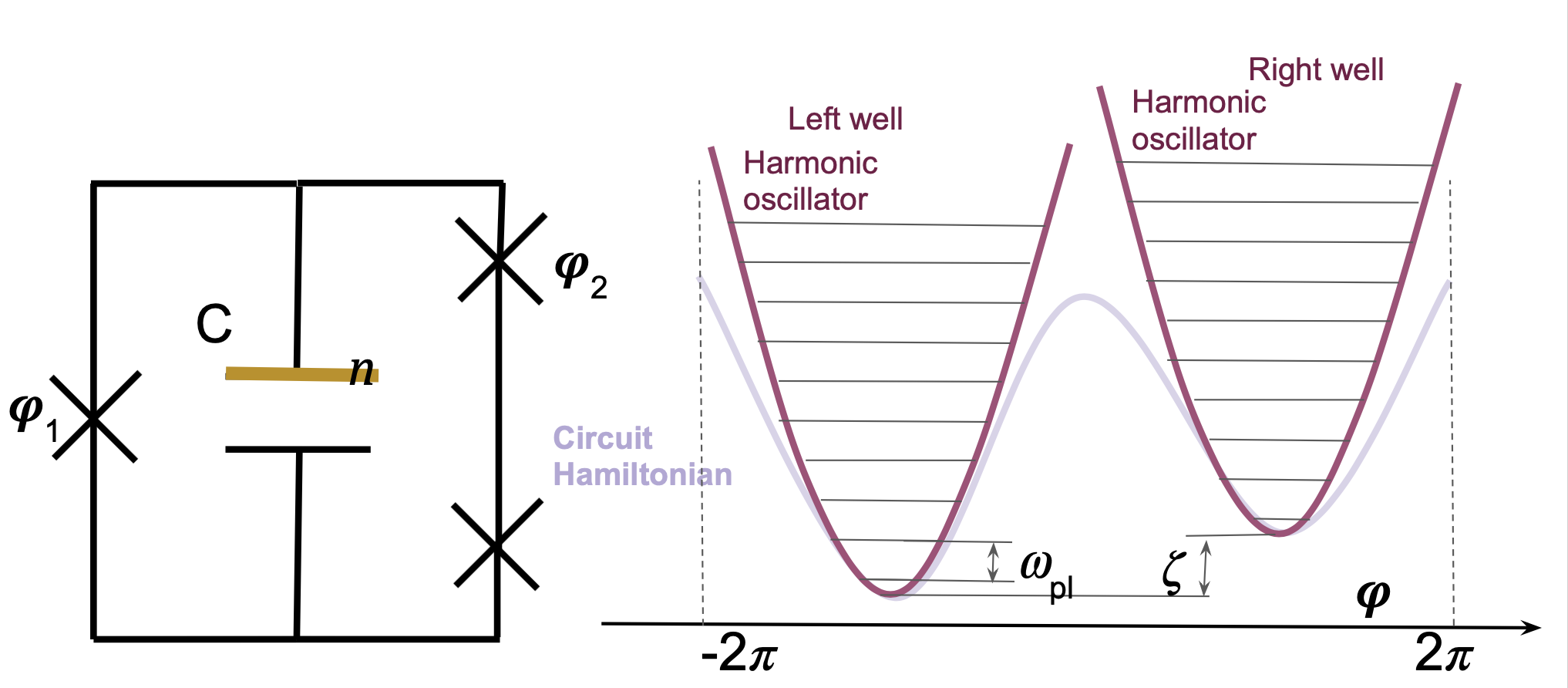}
\caption{The circuit loosely corresponding to Eq. (\ref{eq:H-CSFQ}), and the potential for the phase variable $\phi$. The lowest two wells are approximated as harmonic oscillators, with bias $\zeta$ and tunneling $\xi$ between the groundstates of the wells. The relationship between $\varphi_1,\varphi_2$ and $\phi$ is discussed in Ref.~\cite[Supplementary Material, p.17]{Yan15a}, which also explains how $H_{\text{CSFQ}}$ can be obtained by analyzing the circuit shown here.}
\label{fig:CSFQone}
\end{figure}

As we explain below, $H_{\text{CJJ}}$ describes a compound Josephson junction (CJJ) rf SQUID qubit~\cite{Harris:2008lp}, while $H_{\text{CSFQ}}$ describes a capacitively shunted flux qubit (CSFQ)~\cite{Yan15a}. $H_{\text{CSFQ}}$ can be obtained by analyzing the circuit displayed in Fig.~\ref{fig:CSFQone}.
Note that in the notation of Eq.~\eqref{eq:91}, the canonically conjugate operators $\hat{n}$ (charge stored in the capacitor $C$) and $\hat{\phi}$ (flux threading the circuit) are identified with $\hat{p}$ and $\hat{x}$, respectively, and that in the transmon case $E_L=E_\alpha=0$~\cite{transmon-invention}.\footnote{Note a factor of $4$ difference in the definition of $E_C$ between the latter and our Eq.~\eqref{eq:H-flux}: our definition is $E_C = (2e)^2/2C$, and $H = E_C n^2 +\dots$, while the definition in Ref.~\cite{transmon-invention} is $E_C =e^2/2C$ and $H = 4 E_C n^2+\dots$.} 

The quadratic self-inductance term $E_L (\hat{\phi}- f)^2$ is responsible for the divergence of $\|H'_{\text{CJJ}}\|$ with the cutoff $\Lambda$, just like 
the time-dependent mutual inductance in Eq.~\eqref{eq:91}. Thus, the JRS adiabatic theorem once again provides an unphysical dependence on the cutoff and the bound we derived in Eq.~\eqref{eq:106} can be used instead. The adiabatic timescale depends on the choice of schedules for the controls $b$ and $f$. To illustrate what enters this choice, we first explain how $H_{\text{CJJ}}$ can be reduced to an effective qubit Hamiltonian. We would like to stress that we only need the qubit approximation for the schedule choice; the adiabatic timescale we find is a property beyond the qubit approximation, and the approximation itself is not used anymore after the schedule is set. Before presenting the result for CJJ qubits, we borrow the same set of tools to find the effective qubit Hamiltonian and explicitly compute our bounds for the capacitively shunted flux qubit
described by a simpler Hamiltonian $H_{\text{CSFQ,sin}}$ where we retain just one of the trigonometric terms:
\begin{equation}
     H_{\text{CSFQ,sin}} = E_C \hat{n}^2 + E_J b \cos \hat{\phi} - E_\alpha \sin\frac{\hat{\phi}}{2}\sin \frac{f}{2}\ ,  \qquad \phi \in [-2\pi, 2\pi].
     \label{eq:120}
\end{equation}
Note that the derivatives of $H_{\text{CSFQ}}$ and $H_{\text{CSFQ,sin}}$ do not grow in  norm with the cutoff $\Lambda$, so in this case the JRS adiabatic theorem provides a useful baseline, but as explained below we will obtain a somewhat tighter bound.

The quantities $b\geq 1$ and $f\geq 0$ are time-dependent controls that can be chosen at will. Ideally we would like the effective qubit Hamiltonian (Sec.~\ref{eHam}) to match a desired quantum annealing ``schedule" $\omega_{\text{q}}( (1-s )X + sZ)$ where $s=t/t_f$ is the dimensionless time. However in practice for calibration of the annealing schedule an approximate method for choosing $b(s),f(s)$ is used instead. Here we will also follow this approximate method for simplicity; thus we will not know the true effective qubit-Hamiltonian $H_{\text{eff}}$ the schedule is implementing, but we will be able to accurately bound the error of that qubit description. This is in line with our goal of providing a useful theoretical result to guide current experiments with superconducting circuits: the error would characterize, e.g., the leakage to the non-qubit states for fast anneals. The true effective Hamiltonian $H_{\text{eff}}$, and correspondingly a precise method for choosing $b(s),f(s)$, can be found straightforwardly in a numerical simulation, which we leave for future work.

The approximate method is as follows. 
\begin{mydef}
\label{imprec}
Using the exact circuit description, we compute a $2\times2$ operator $H_\text{q}$ defined as follows. $H_\text{q}$ acts on a $2$-dimensional Hilbert space corresponding to the low-energy subspace of the circuit Hamiltonian. The basis for $H_\text{q}$ in that subspace is chosen to diagonalize the low-energy projection of $\hat\phi$. The energy levels of $H_q$ are chosen to exactly match the two levels of the circuit Hamiltonians, up to a constant shift. Once we obtain the relationship between $b(s),f(s)$ and $H_q$, we find $b(s),f(s)$ by requiring:
\begin{equation}
    H_{\text{q}} =\omega_{\text{q}}( (1-s  +\delta)X + sZ) ,
    \label{eq:Hq}
\end{equation}
where $\delta>0$ is a certain precision parameter we discuss below (ideally $\delta=0$).
\end{mydef}

Note that the true effective Hamiltonian $H_{\text{eff}}$ is isospectral to $H_{\text{q}}$, and is a rotation of $H_{\text{q}}$ to the basis determined by $U_{\text{eff}}$ as will be prescribed in Sec.~\ref{eHam}. In this section we only obtain explicit values of $\theta$ [the timescale in the error bounds~\eqref{eq:thetas}] for an evolution up to $s=s^*$, and demonstrate an improvement (small for CSFQ qubits, diverging as $\Theta(\Lambda)$ for CJJ qubits) relative to the JRS version, which yields
\begin{equation}
    \theta^{\text{JRS}}_{\text{CSFQ}}(s^*) = O\left(\frac{1}{\omega_{\text{pl}}(s^*) (1-s^* +\delta)} \right), \quad \theta^{\text{JRS}}_{\text{CJJ}}(s^*) =\Theta(\Lambda)
   \label{eq:113}
\end{equation}
while our new bound yields
\begin{equation}
   \theta^{\text{new}} = O\left(\frac{1}{\omega_{\text{pl}}(s^*) (1-s^* +\delta) \text{ln}\frac{\omega_{\text{pl}}(s^*) }{ \omega_\text{q}(1-s^* +\delta) }} \right) .
   \label{eq:114}
   \end{equation}
Here the qubit approximation starts at $b(0)=1$ and ends at $b(s^*)>1$. In the Introduction [below Eq.~\eqref{eq:4}] these results were reported for the special case of $s^*=1,$  $b(1) =\mathcal{B}>1$.  The gap $2\Delta(s)$ separating the qubit subspace from the rest of the Hilbert space (recall Fig.~\ref{fig:contour}) will turn out to be well approximated by the plasma frequency $\omega_{\text{pl}}(s) =2\sqrt{E_C E_J b(s)}$. To leading order only the final value of that gap $\omega_{\text{pl}}(s^*) =2\sqrt{E_C E_J b(s^*)}$ enters our bound. These results hold in the relevant regime $E_J/E_C\gg1$, $1-s^* +\delta \ll 1$. The quantities appearing in our result for the adiabatic timescale are illustrated in Fig.~\ref{fig:deldefined}.

\begin{figure}
\centering
\includegraphics[width=0.5\columnwidth]{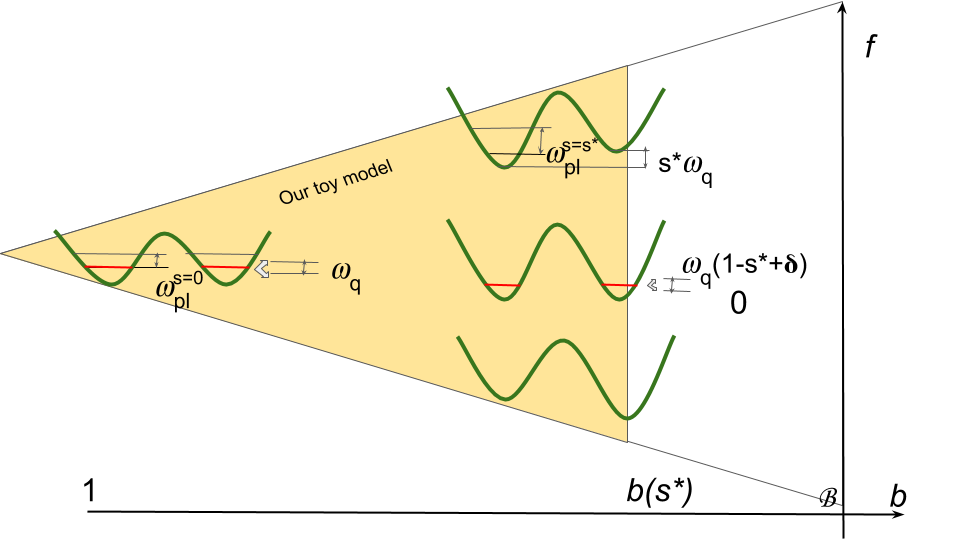}
\caption{For various target Hamiltonians between $+\omega_{\text{q}}Z$ and $-\omega_{\text{q}}Z$ the anneal paths in the parameter space $b(s),f(s)$ occupy the white triangle. The yellow triangle indicates the range of applicability of the qubit approximation for anneals with $t_f \gg \theta(s^*)$. The splittings $\omega_{\text{q}}$ and $\omega_{\text{q}} \delta$ are obtained at zero bias at the beginning and the end of the anneal respectively. Maximum bias also yields $\omega_{\text{q}}$ at the end of the anneal. The plasma frequency $\omega_{\text{pl}}$ is the frequency of each well, and increases throughout the anneal towards the value $\omega_{\text{pl}}(s^*)$ that enters $\theta(s^*)$ in Eq.~\eqref{eq:113} .}
\label{fig:deldefined}
\end{figure}

For notational simplicity we again drop the hat (operator) symbols from now on. The goal of the rather lengthy calculations that follow in the remainder of this section is to assign a physical significance to the various quantities that appear in Eqs.~\eqref{eq:theta-JRS} and~\eqref{eq:theta-new}, expressed in terms of the parameters of CJJ and CSFQ circuits, so as to eventually derive Eqs.~\eqref{eq:113} and~\eqref{eq:114}.

\subsubsection{Compound Josephson junction (CJJ) rf SQUID}

\begin{figure}
\centering
\includegraphics[width=0.5\columnwidth]{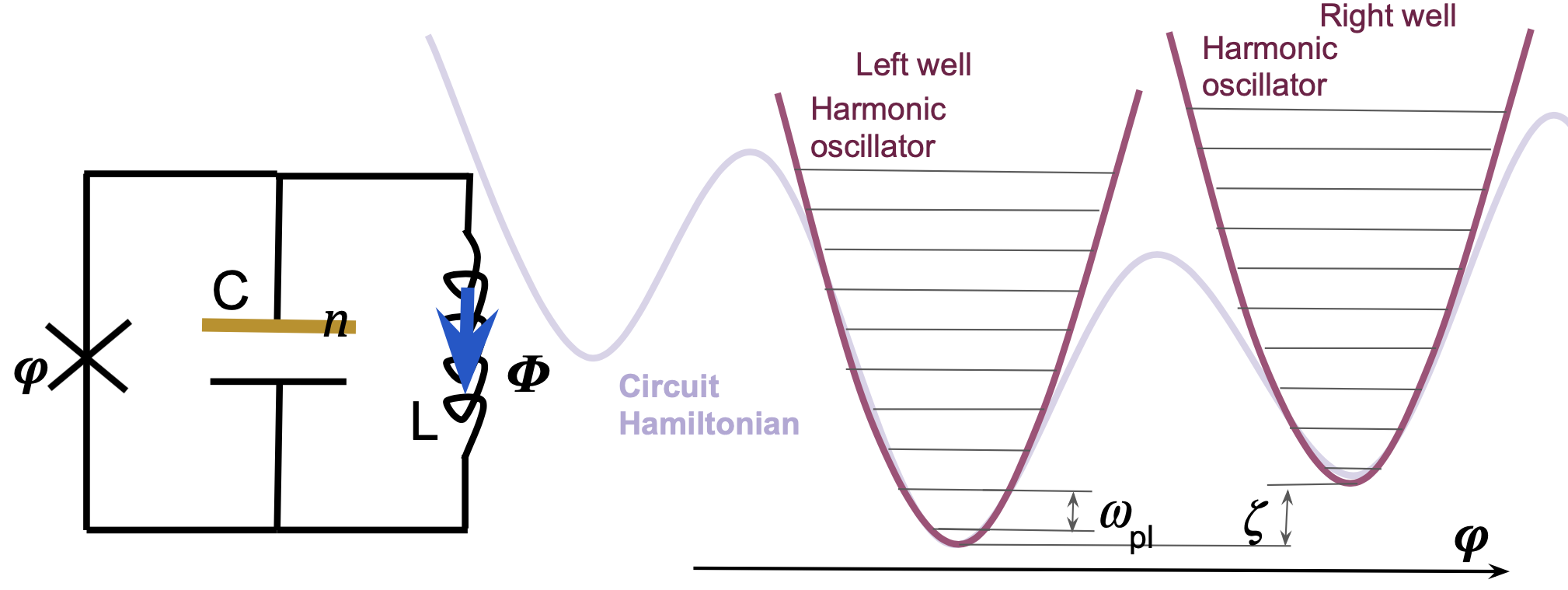}
\caption{The circuit corresponding to Eq. (\ref{eq:H-CJJ}), and the potential for the phase variable $\phi$. The lowest two wells are approximated as harmonic oscillators, with bias $\zeta$ and tunneling $\xi$ between the groundstates of the wells.}
\label{fig:dWone}
\end{figure}

Consider a D-Wave (CJJ rf SQUID) qubit~\cite{Harris:2008lp}. It consists of a large (main) loop and a small (CJJ) loop subjected to external flux biases $\Phi_x^q$ and $\Phi_x^{\text{CJJ}}$, respectively. The CJJ loop is interrupted by
two identical Josephson junctions connected in parallel with total capacitance $C$. For illustration purposes we represent this loop as a single junction with some external phase control in a circuit diagram in Figs.~\ref{fig:dWc} and~\ref{fig:dWone}. The two counter-circulating persistent current states along the main loop comprise the qubit $\ket{0}$ and $\ket{1}$ states, and can be understood as the states localized in the two wells of a double well potential, described below. 

The circuit Hamiltonian of this qubit can be written as in Eq.~\eqref{eq:H-CJJ}, where $n =Q/(2e)$ denotes the (normalized) quantized charge stored in the capacitance, $\phi = \frac{2\pi}{\Phi_0}\Phi$ is the (normalized) quantized total flux threading the main loop, $f=(2\pi/\Phi_0)\Phi_x^q$ and $E_Jb=-E_J^{\text{conventional}}\cos\left({\pi \Phi_x^{\text{CJJ}}}/{\Phi_0}\right)$ depend on the fluxes threading the main and small loops, respectively, $\Phi_0 = h/(2e)$ is the flux quantum (we use units of $h=1$ throughout), and $E_C= (2e)^2/(2C)$, $E_L=(\Phi_0/(2\pi))^2/(2L)$, and $E_J$ are the charging, (normalized) inductive, and Josephson energy, respectively. Note that the conventional notation for the Josephson energy translates to ours as $E_J^{\text{conventional}} =E_J \mathcal{B}$. The fluxes $\Phi_x^{\text{CJJ}}$ and $\Phi_x^q$ (and hence the parameters $b$ and $f$) are time-dependent and controllable, while the rest are fixed parameters set by the hardware. 

While $H_{\text{CJJ}}$ describes the physical circuit, we wish to implement the low energy Hamiltonian of a qubit with frequency $\omega_{\text{q}}$, as defined by Eq.~\eqref{eq:Hq}, using the approximate method given in Def.~\ref{imprec}. We now discuss how to make this transition. 
Treating the term $E_J b \cos {\phi} + E_L ({\phi}- f)^2$ as a classical potential in the variable $\phi$, it represents a cosine potential superimposed on a parabolic well. The two lowest states in this potential are the qubit states, separated by $\omega_{\text{q}}$. These two states need to be separated from non-qubit states, and the corresponding gap $\Delta$ is given by half the plasma frequency $\omega_{\text{pl}}$.

For a transmon, where $E_L=0$, one has $\omega_{\text{q}} = \omega_{\text{pl}}-E_C$~\cite{transmon-invention}, where the plasma frequency is given by 
\beq
\omega_{\text{pl}}(s) =2\sqrt{E_C E_J b(s)} .
\label{eq:omega_pl-transmon}
\eeq
Note that $b=1$ corresponds to when the cosine potential is shallowest, i.e., when the tunneling barrier is lowest, which is the initial point of the anneal with $s=0$. At the other extreme, when $b=\mc{B}$, the tunneling barrier is at its maximum and this corresponds to the end of the anneal with $s=1$.  

In the presence of the parabolic well there are additional levels in local minima of the raised cosine potential. For $f=0$ the two degenerate global minima appear at $\phi=\pm \pi$ and the lowest local minima at $\phi=\pm 3\pi$.  Thus, to ensure that the additional levels in the local minima are higher than the qubit frequency we can set $\min\omega_{\text{pl}}(s) = \omega_{\text{pl}}(0) \approx (\pm 3 \pi)^2E_L-(\pm\pi)^2E_L =8E_L \pi^2$.
Next, using $b(0)=1$, if $E_C \ll E_J$ (as it must, to ensure $\omega_{\text{q}}\ll \omega_{\text{pl}}$) then $E_L =O( \sqrt{E_C E_J}) \ll E_J$, which we will assume:
\beq
E_C,E_L \ll E_J .
\label{eq:116}
\eeq

We now wish to choose the controls of $H_{\text{CJJ}}$ so that $H_\text{q}$ in Definition \ref{imprec} takes the form:
\beq
H_{\text{q}}(s)=\xi(s) X + \zeta(s) Z ,
\label{eq:Hq2}
\eeq 
so that $\zeta(s) = \omega_{\text{q}} s$ [compare to Eq.~\eqref{eq:Hq}].
Focusing just on the minima at $\phi=\pm \pi$ but now allowing $f>0$, we have $ \zeta(s) = E_L(-\pi-f(s))^2 - E_L(\pi-f(s))^2 $, so that, upon neglecting the $f^2$ term:
\beq
   f(s) = \frac{\zeta(s)}{4E_L \pi} = s \frac{\omega_{\text{q}}}{4E_L \pi}  , \label{fDef}
\eeq
subject to $f(1)  <\pi$, i.e., we have the additional constraint $\omega_{\text{q}} <4 E_L \pi^2$.
 
Following Ref.~\cite{transmon-invention}, we can identify the bandwidth (peak-to-peak value for the charge dispersion of the energy levels in the periodic potential) of the $E_L =0$ Hamiltonian with the coefficient $\xi(s)$ in the effective qubit Hamiltonian. Under the assumed inequality~\eqref{eq:116}, Eq.~(2.5) of Ref.~\cite{transmon-invention} with $m=0$ yields
\begin{equation}
    \xi(s) = 8 E_C \sqrt{\frac{2}{\pi}} \left( \frac{2E_J b(s)}{E_C}\right)^{3/4} e^{-\sqrt{32b(s)E_J/E_C}} .
    \label{Koc}
\end{equation}
Thus, a sufficiently large $b(1)=\mathcal{B}$ ensures an exponentially small $\xi(1)$, which shows that we can operate the system in the annealing regime, i.e., the regime where $H_{\text{q}}(s)$ interpolates smoothly from $X$ to $Z$. Recall that $ b(0)=1$, thus $\xi(0) = \omega_{\text{q}}$ serves as a definition of $\omega_{\text{q}}$.
Let $\xi(1) /\omega_{\text{q}} = \delta$ be the desired precision. Then we can choose the remaining time dependent control $b(s)$ by solving Eq.~\eqref{Koc} for $b(s)$ and setting $\xi(s) = \omega_{\text{q}}(1-s + \delta)$ [again compare Eq.~\eqref{eq:Hq2} to Eq.~\eqref{eq:Hq}].
This together with Eq.~\eqref{fDef} fully defines the schedule. 

This mathematical model in fact describes a family of qubits, different by  $\omega_{\text{q}},\omega_{\text{pl}}(1)$ and $\delta$. The family is spanned by varying the ratio $E_J/E_C$  and $\mathcal{B}$,  in the region where both are $\gg1$ to ensure the applicability of Eq.~\eqref{Koc} and the smallness of the precision parameter $\delta$. Note that in the $E_J/E_C\gg1,\mathcal{B}\gg1$ regime the aforementioned conditions $\omega_{\text{q}} < \omega_{\text{pl}},4 E_L \pi^2$ are automatically satisfied. Among the qubits in the family, a smaller $\omega_{\text{q}}/\omega_{\text{pl}}(1)$ will allow a (relatively) faster anneal while the qubit approximation is maintained, but exactly how $E_J/E_C$ and $\mathcal{B}$ (or equivalently, $\omega_{\text{q}}/\omega_{\text{pl}}(1)$ and $\delta$) enter needs to be investigated via the adiabatic theorem, which we will delay until we analyze a simpler CSFQ case below.

We have thus shown how to reduce the circuit Hamiltonian $H_{\text{CJJ}}$ to an effective qubit Hamiltonian $H_{\text{q}}$, and how the circuit control functions $b(s)$ and $f(s)$ relate to the effective qubit annealing schedule functions $\xi(s)$ and $\zeta(s)$.

\subsubsection{Capacitively shunted flux qubit (CSFQ)}
\label{mainCalcSC}
We now repeat the analysis for a periodic $\phi$, i.e., for $H_{\text{CSFQ}}$ [Eq.~\eqref{eq:H-CSFQ}].
In this case the potential $E_J b \cos \phi - E_\alpha \cos\frac{1}{2}(\phi- f)$ exhibits only two wells.
For simplicity of the analysis, we instead choose to work with the Hamiltonian $H_{\text{CSFQ,sin}}$ given in Eq.~\eqref{eq:120}.
Recall that this Hamiltonian omits one of the terms in the trigonometric decomposition of $\cos\frac{1}{2}(\phi- f)$ and has the benefit that the wells are centered exactly at $\phi =\pm \pi$ for all $f$. Thus it ignores the diabatic effects from the wells shifting along the $\phi$ axis in the complete CSFQ Hamiltonian~\eqref{eq:H-CSFQ}. That effect can be included in the calculation straightforwardly, but for our example we choose the simplest nontrivial case.  Each well independently experiences narrowing as $b$ grows, leading to diabatic transitions out of the well's ground state. The physical meaning of the adiabatic timescale is to characterize the dynamics associated with this deformation of the harmonic oscillator, but by using the general machinery of our and the JRS bounds, we can obtain the result via algebra alone, without having to rely on physical intuition.

To apply the different versions of the adiabatic theorem expressed in Corollary~\ref{cor:JRScomp} we will need bounds on the derivatives of the simplified CSFQ Hamiltonian~\eqref{eq:120} (we drop the subscript and hat symbols for simplicity): 
\bes
\begin{align}
    H' &=E_J b' \cos \phi - \frac{E_\alpha}{2} f' \sin\frac{\phi}{2}\cos \frac{f}{2} \\
    H'' &=E_J b'' \cos \phi -\frac{E_\alpha}{2}  \sin\frac{\phi}{2} \left( f''\cos \frac{f}{2} - \frac{{f'}^2}{2}\sin \frac{f}{2}\right) .
\end{align}
\label{eq:H'H''}
\ees
In the JRS case one directly bounds the operator norm:
\bes
   \label{eq:121}
\begin{align}
   \label{eq:121a}
    \|H'\| &\leq E_J |b'| + \frac{E_\alpha}{2}|f'| \\
    \|H''\| &\leq E_J |b''|  + \frac{E_\alpha}{2}(|f''| +\frac{1}{2} |f'|^2) .
   \label{eq:121b}
\end{align}
\ees
In the case of our new version of the adiabatic theorem we will need bounds on the projected quantities. In any case, it is clear that we need to find bounds on the derivatives of $b$ and $f$, which we now proceed to derive.

\begin{figure}
\centering
\includegraphics[width=0.5\columnwidth]{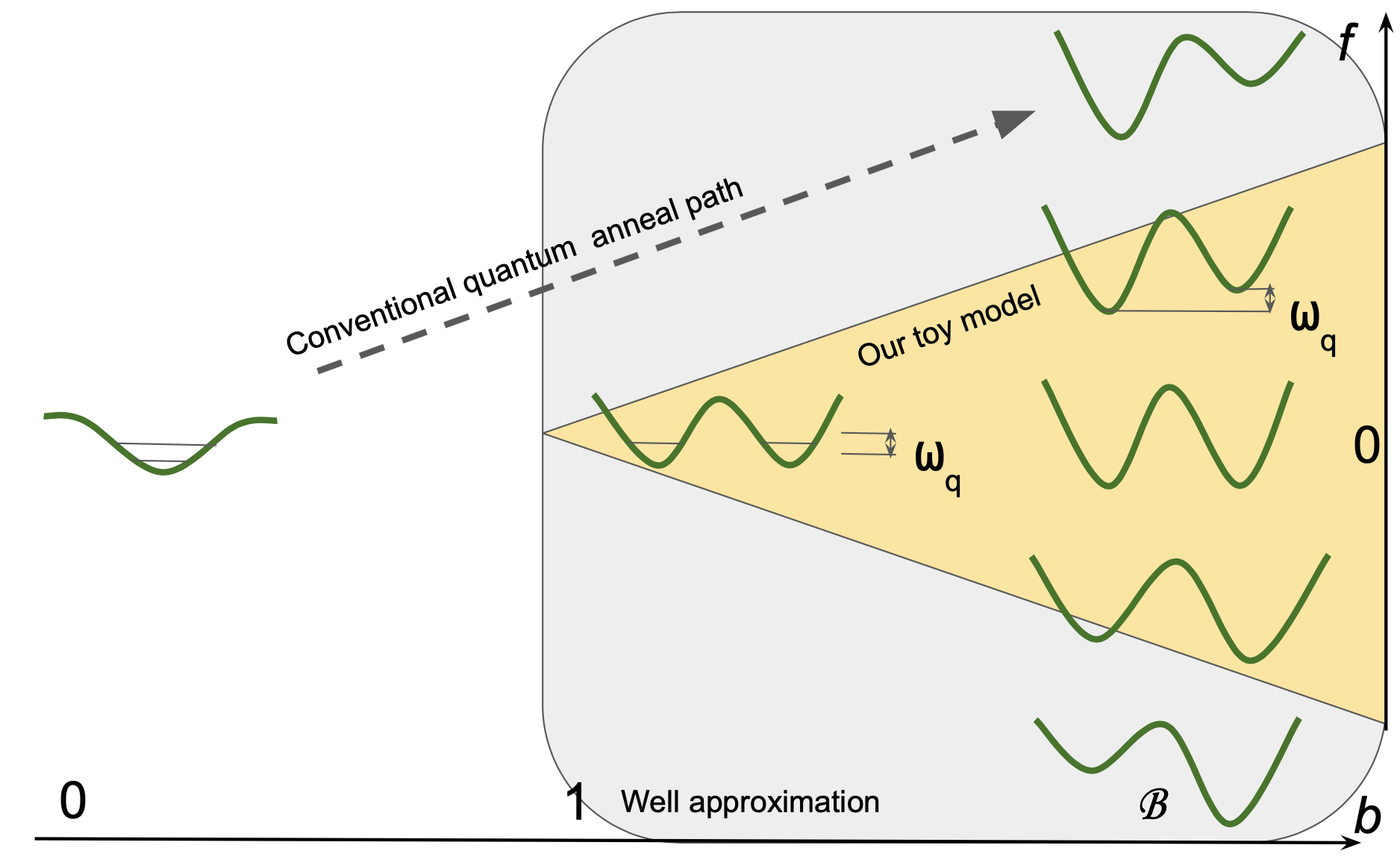}
\caption{The region in the space of control parameters $b(s), f(s)$ where quantum annealing of a flux qubit is analytically tractable within the well approximation.}
\label{fig:wellwell}
\end{figure}

\paragraph{The effective Hamiltonian.}
Define the well basis as the low energy basis diagonalizing $\phi$ projected into the low-energy subspace.
The qubit Hamiltonian in the well basis (see Definition \ref{imprec}) is:
\beq
H_{\text{q}}(s)=\xi(s) X + \zeta(s) Z ,
\eeq 
In the limit $E_\alpha \ll E_J$ we can approximate the width of the wells as equal, which leads to
\begin{equation}
    \zeta(s) \approx   
    E_\alpha \sin\frac{{\pi}}{2}\sin \frac{f}{2}-E_\alpha \sin\frac{{-\pi}}{2}\sin \frac{f}{2}
    = 2E_\alpha\sin\frac{1}{2} f(s) 
    \label{eq:122}
\end{equation}
[in this case the same result is obtained with the complete potential $E_\alpha \cos\frac{1}{2}(\phi- f)$].
We can also neglect the adjustment to the tunneling amplitude through the barrier of height $b E_J$ coming from the bias $ \zeta(s) \leq 2E_\alpha$ between wells. This again uses $E_\alpha \ll E_J$. Repeating the argument leading to Eq.~\eqref{Koc}, the zero-bias expression [Eq.~(2.5) of Ref.~\cite{transmon-invention} with $m=0$] holds for the tunneling amplitude, so we can reuse Eq.~\eqref{Koc}.
This expression also uses $E_C \ll E_J$. The more rigorous statement of the approximate equality in Eq.~\eqref{Koc} is postulated in the conjecture below. In Fig.~\ref{fig:wellwell} we contrast the special regime of these approximations, which we call the {\textit{well approximation}}, with the traditional schedule for quantum annealing.

\paragraph{Reducing the number of parameters.}
We choose the following notation for the ranges of $b$ and $f$:
\begin{equation}
    b:[0,1]\mapsto[1,\mathcal{B}], \quad f:[0,1]\mapsto[0,\mathcal{F}]
\end{equation}
In total, our CSFQ Hamiltonian has five parameters $E_C, E_J, E_\alpha, \mathcal{B}, \mathcal{F}$, i.e., four dimensionless parameters, since $\mathcal{B}$ and $\mathcal{F}$ are already dimensionless. We take $E_C$ to represent an overall energy scale, and define the dimensionless parameter $\mathcal{A}$ as the ratio appearing in $\xi(s)$, 
\beq
\mathcal{A} =\sqrt{\frac{32E_J }{E_C}}, 
\label{eq:def-A}
\eeq
rewriting Eq.~\eqref{Koc} as:
\begin{equation}
    \xi(s) \approx  E_C \sqrt{\frac{2}{\pi}} (\mathcal{A}\sqrt{b(s)})^{3/2} e^{-\mathcal{A}\sqrt{b(s)}}  . \label{xiForm}
\end{equation}
The parameter space can be reduced by setting $\mathcal{F} = \pi/3$. Note that the maximum allowed  $\mathcal{F}$ is $\pi$, at which $f'|_{s=1}$ required to fit the schedules will diverge. Making $\mathcal{F}$ really small just makes the qubit worse by adding additional constraints on other parameters, which justifies our choice. Then $f(1) = \pi/3$, so by Eq.~\eqref{eq:122} we have $E_\alpha =\zeta(1)$.

We now make use of $\omega_\text{q} =\xi(0) = \zeta(1) $. This means that the annealing schedule is such that the start and end energy approximately coincide, as is traditional for the idealized qubit model of annealing $ (1-s)X + sZ$. 
This allows us to write:
\beq
\omega_\text{q} =E_\alpha = \zeta(1) = \xi(0) = E_C \sqrt{\frac{2}{\pi}} \mathcal{A}^{3/2} e^{-\mathcal{A}} ,
\label{eq:126}
\eeq
i.e., the ratio $E_\alpha/E_C$ is also determined by $\mc{A}$. Having fixed the dimensionless parameters $E_J/E_C$ and $E_\alpha/E_C$ in terms of the single parameter $\mc{A}$, and having fixed $\mc{F}$ at a numerical value, we are left only with $\mc{A}$ and $\mc{B}$, i.e., we have reduced the original four dimensionless parameters to two. Let us now state the conjecture that replaces  Eq.~\eqref{Koc} by a rigorous statement:

\begin{myconjecture}
For a desired multiplicative precision $\epsilon$, there exists a minimum $\mathcal{A}_0(\epsilon)$ such that $\forall \mathcal{A}\geq \mathcal{A}_0$:
\begin{equation}
    \xi(s) =   E_C \sqrt{\frac{2}{\pi}} (\mathcal{A}\sqrt{b(s)})^{3/2} e^{-\mathcal{A}\sqrt{b(s)}}(1+ \epsilon'), \quad |\epsilon'| \leq \epsilon . \label{xiWeps}
\end{equation}
The two derivatives $\xi',\xi''$ are also given by the derivatives of Eq.~\eqref{xiForm} to the same multiplicative precision $\epsilon$.
\end{myconjecture}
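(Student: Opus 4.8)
The plan is to reduce the conjecture to a rigorous, quantitatively controlled statement about the charge dispersion (lowest-band width) of the Mathieu operator $H_M(b) = E_C n^2 + E_J b \cos\phi$ on the circle. First I would make the \emph{well approximation} quantitative. The two wells of $H_{\text{CSFQ,sin}}$ at $\phi=\pm\pi$ differ only by the bias term, whose size is $\zeta(s)\le 2E_\alpha = 2\omega_{\text{q}}$, while the local level spacing is $\omega_{\text{pl}}(s) = 2\sqrt{E_C E_J b(s)}\gg \omega_{\text{q}}$. A standard Feshbach/two-level reduction for a symmetric double well then shows that $\xi(s)$ equals the tunneling amplitude of the unbiased symmetric problem up to a multiplicative factor $1 + O(\omega_{\text{q}}^2/\omega_{\text{pl}}^2) + O(E_\alpha/E_J)$; since $E_\alpha/E_C$ and $\omega_{\text{q}}/\omega_{\text{pl}}$ are both governed by $\mathcal{A}$ via Eq.~\eqref{eq:126} together with $\omega_{\text{pl}}\propto\mathcal{A}$, these corrections are $O(\mathrm{poly}(\mathcal{A})\,e^{-c\mathcal{A}})$ and can be absorbed into $\epsilon$ once $\mathcal{A}$ is large. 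The symmetric tunneling amplitude is in turn, as in Ref.~\cite{transmon-invention}, identified with the peak-to-peak charge dispersion $\epsilon_0(b)$ of the lowest band of $H_M(b)$ (the extra local minima created by the parabolic term in the CJJ case are absent here, so this step is cleaner for CSFQ).

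Next I would invoke the rigorous semiclassical (one-instanton) analysis of periodic Schr\"odinger operators — WKB-with-remainder estimates of the Harrell, Simon, and Helffer--Sj\"ostrand type, equivalently the known uniform asymptotic expansion of the Mathieu characteristic exponents — which gives, with $\hbar_{\text{eff}}\sim 1/(\mathcal{A}\sqrt{b})$,
\begin{equation}
\epsilon_0(b) = E_C \sqrt{\tfrac{2}{\pi}}\,(\mathcal{A}\sqrt{b})^{3/2} e^{-\mathcal{A}\sqrt{b}}\bigl(1 + O(1/(\mathcal{A}\sqrt{b}))\bigr),
\label{eq:propstep2}
\end{equation}
uniformly for $b\in[1,\mathcal{B}]$. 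The crucial point is that such results carry a full asymptotic series in $1/\mathcal{A}$ and extend to complex $b$ in a fixed neighborhood $\Omega$ of $[1,\mathcal{B}]$ in $\mathbb{C}$: the instanton action $\mathcal{A}\sqrt{b}$ and the prefactor are analytic there, and $\epsilon_0(b)$ itself is analytic because the gap above the lowest band stays open under small complex perturbations. Hence the relative error $\eta(b)\equiv \xi(b)/g(b) - 1$, with $g$ the right-hand side of Eq.~\eqref{xiForm}, is analytic on $\Omega$ and bounded by $\epsilon$ there once $\mathcal{A}\ge\mathcal{A}_0(\epsilon)$.

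For the derivatives, since $\xi$ depends on $s$ only through $b(s)$ it suffices to bound $d^j\xi/db^j$ for $j=1,2$. Writing $\xi = g(1+\eta)$ gives $\xi' = g'\bigl[1 + \eta + (g/g')\,\eta'\bigr]$, and analogously for $\xi''$. Now $|g/g'| = O(\sqrt{b}/\mathcal{A})$, because differentiating the dominant factor $e^{-\mathcal{A}\sqrt{b}}$ produces the large factor $\mathcal{A}/(2\sqrt{b})$ already present in $g'$; meanwhile Cauchy's estimate on a disk of fixed radius $\rho$ inside $\Omega$ yields $|\eta'|\le \epsilon/\rho$ and $|\eta''|\le 2\epsilon/\rho^2$. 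Thus the bracket is $1 + O(\epsilon)$, so $\xi' = g'(1+O(\epsilon))$, and the same bookkeeping — now involving $g/g'$, $g''/g'$, $\eta$, $\eta'$, $\eta''$ — handles $\xi'' = g''(1+O(\epsilon))$. The derivative part therefore costs nothing beyond the analytic control from the previous step.

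The main obstacle is this analytic control itself: turning the physics-level charge-dispersion formula of Ref.~\cite{transmon-invention} into Eq.~\eqref{eq:propstep2} \emph{with an explicit remainder} and \emph{uniformly in $b$ on a complex neighborhood}, rather than merely as a leading asymptotic on the real axis. This requires either carefully citing and stitching together the existing tunneling-asymptotics / Mathieu-asymptotics literature or redoing the one-instanton computation with rigorous remainder bounds, and then extracting from it the analyticity in $b$ needed to run the Cauchy estimates. Everything else — the well reduction and the differentiation bookkeeping — is routine once that input is secured.
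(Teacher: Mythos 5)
There is nothing in the paper to compare against: the statement you were asked to prove is explicitly labeled a \emph{Conjecture}, introduced precisely because the authors could not (or chose not to) upgrade the physics-level charge-dispersion formula~\eqref{Koc} of Ref.~\cite{transmon-invention} to a rigorous statement with controlled multiplicative error. So your proposal cannot coincide with or diverge from the paper's proof; it is an attempt at a proof the paper does not contain. Judged on its own, it is a sensible program --- reduce to the symmetric tunneling problem, then invoke rigorous one-instanton/Mathieu band-width asymptotics with remainder (Harrell, Helffer--Sj\"ostrand type) and analyticity in $b$ to control $\xi'$, $\xi''$ --- but as written it is a program, not a proof: the decisive quantitative input, a remainder bound of the form $1+O(1/(\mathcal{A}\sqrt{b}))$ that is \emph{uniform in $b\in[1,\mathcal{B}]$ and analytic on a complex neighborhood}, is exactly what you defer to ``carefully citing and stitching together'' the literature, and you say so yourself. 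That is the conjecture restated, not discharged.

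Two concrete weak points in the parts you do spell out. First, the derivative bookkeeping fails as stated: you bound $|\eta'|\le\epsilon/\rho$ by a Cauchy estimate on a disk of \emph{fixed} radius $\rho$, and then need $(g/g')\,\eta'=O(\epsilon)$; but $|g/g'|\approx 2\sqrt{b}/\mathcal{A}$, and the paper explicitly allows $\mathcal{A},\mathcal{B}\to\infty$ with an arbitrary relative rate, so $\sqrt{b}/\mathcal{A}$ is unbounded when $\mathcal{B}\gg\mathcal{A}^2$. You need either disks whose radius scales with $b$, or a remainder whose $b$-derivatives are themselves part of the asymptotic statement (a differentiated expansion), before $\xi'=g'(1+O(\epsilon))$ and especially $\xi''=g''(1+O(\epsilon))$ follow. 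Second, the reduction step is not routine at the multiplicative level: $H_{\text{CSFQ,sin}}$ lives on $\phi\in[-2\pi,2\pi]$ (so the splitting involves tunneling on a $4\pi$-ring, with the attendant path/degeneracy factors entering the prefactor at $O(1)$), and the bias term $E_\alpha\sin\tfrac{\phi}{2}\sin\tfrac{f}{2}$ breaks the $2\pi$-periodicity used in identifying $\xi$ with the Mathieu charge dispersion. Because $\xi$ is exponentially small and the claim is a \emph{relative} error bound, a generic Feshbach/two-level argument with absolute error estimates does not suffice; you must show the exponent and the prefactor are each perturbed only at relative order $\epsilon$, which again requires the same uniform instanton-with-remainder machinery you have postponed.
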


The final transverse field needs to be negligible in quantum annealing. If our tolerance to a finite transverse field is $\delta$, then denote:
\begin{equation}
    \delta = \frac{\xi(1)}{\xi(0)} = (\mathcal{B}_0)^{3/4} e^{-\mathcal{A}(\sqrt{\mathcal{B}_0} -1) } .
    \label{eq:128}
\end{equation}
This implicitly defines $\mathcal{B}_0(\delta, \mathcal{A})>1$.
So our two dimensionless parameters live in a range $\mathcal{A} \in[\mathcal{A}_0(\epsilon), \infty]$ and  $\mathcal{B}\in[\mathcal{B}_0(\delta, \mathcal{A}),\infty]$.
Their physical meaning is: $\mathcal{A}$ is the (root of the) area under the barrier in appropriate dimensionless units at the beginning of the anneal, and $\mathcal{B}$ is how much the barrier has been raised at the end relative to the beginning. We note that both $\mathcal{B}_0, \mathcal{A}_0$ are rather large numbers for reasonable $\epsilon$ and $\delta$.\footnote{For $\epsilon=10^{-1}, \delta=10^{-9}$ we are free to choose $\mathcal{A}_0$ satisfying Eq.~\eqref{xiWeps}. For $b=1$, if we assume $\epsilon\sim \frac{1}{\mathcal{A}_0}$ as well as subleading exponential terms, this would lead to an estimate $\mathcal{A}_0=10$. Now solving Eq.~\eqref{eq:128} for $\mathcal{B}_0$, we find $\mathcal{B}_0\approx 10.6$}, thus we intend to investigate the scaling of the adiabatic timescale $\theta$ in the limit $\mathcal{A}, \mathcal{B} \to \infty$. The relationship between $\mathcal{A}$ and $\mathcal{B}$ as they approach that limit may be arbitrary; we do not make any additional assumptions about this. 

The gap to the non-qubit states is, to the leading order, determined by the plasma frequency:
\begin{equation}
    \omega_{\text{pl}} (b) = 2\sqrt{E_C E_J b} 
    = E_C \mathcal{A}\sqrt{b(s)/8} ,
    \label{eq:omegapl}
\end{equation}
which is the same as Eq.~\eqref{eq:omega_pl-transmon} for the D-Wave qubit.  Even though $\omega_{\text{pl}} (b)$ attains its minimum value at $b(1)=1$, we will find that the terms in the numerator of the adiabatic theorem overwhelm it in such a way that only $ \omega_{\text{pl}}(\mathcal{B})$ at the end of the anneal matters.

Repeating the reasoning of the CJJ qubit case above, $\xi(0)=\omega_{\text{q}}$ serves as the definition of $\omega_{\text{q}}$, 
and the time dependent controls $f(s), b(s)$ should be [approximately, using Eq.~\eqref{xiForm}] chosen as:
\bes
\label{schedCSFQ} 
\begin{align}
   \label{eq:130a}
           \frac{\zeta(s)}{\zeta(1)} =2\sin\frac{1}{2} f(s) &=s , \\
   \frac{\xi(s)}{\xi(0)} = b(s)^{3/4} e^{-\mathcal{A}\sqrt{b(s)} +\mathcal{A}} &= 1-s + \delta_\mathcal{B} . 
   \label{eq:130b}
\end{align}
\ees
Here $\delta_\mathcal{B} \le \delta$\footnote{This holds since $\delta = \delta_{\mathcal{B}_0}$ and $\mathcal{B}>\mathcal{B}_0>1$, and the function $\delta_{\mathcal{B}}$ is monotonically decreasing in $\mathcal{B}$ for $\mathcal{B}>\Theta(1/\mathcal{A}^2)$.} is the precision we get for this choice of $\mathcal{B}$. The quantity $\delta_\mathcal{B}$ together with the ratio of the qubit frequency $\omega_{\text{q}} = \xi(0) = E_{\alpha}$ [Eq.~\eqref{eq:126}] to the plasma frequency at the end of the anneal $\omega_{\text{pl}}(\mathcal{B})  = E_C \mathcal{A}\sqrt{\mathcal{B}/8}$, are the two independent parameters we will use to present the final answer for $ \theta^{\text{new}}$. The relationship of these two parameters with $\mathcal{A},\mathcal{B}$ is given by:
\begin{equation}
    \delta_\mathcal{B} =  \mathcal{B}^{3/4} e^{-\mathcal{A}(\sqrt{\mathcal{B}} -1) }, \quad \frac{\omega_\text{q}}{\omega_{\text{pl}}(\mc{B})} =   \frac{4}{\sqrt{\pi}} \sqrt{\frac{\mathcal{A}}{\mathcal{B}}} e^{-\mathcal{A}} .
    \label{eq:132}
\end{equation}

\paragraph{The derivatives $b', b'', f',f''$.} 
First, from Eq.~\eqref{eq:130a}:
\bes
\begin{align}
\label{eq:f'}
         f'(s) &=\frac{1}{\cos (f/2)} = \frac{1}{\sqrt{1-(s/2)^2}} \leq \frac{2}{\sqrt{3}} \\
 f''(s) &= \frac{s/4}{(1-(s/2)^2)^{3/2}} \leq \frac{2}{3\sqrt{3}} .
         \label{eq:f''}
         \end{align}
\ees
Second, from Eq.~\eqref{eq:130b}: 
\begin{align}
   b' \left(\frac{3}{4b} - \frac{\mathcal{A}}{2\sqrt{b}} \right)b(s)^{3/4} e^{-\mathcal{A}\sqrt{b(s)} +\mathcal{A}}&= -1  \quad \Rightarrow\quad b' \left(\frac{3}{4\sqrt{b}} - \frac{\mathcal{A}}{2} \right) = -b(s)^{-1/4} e^{\mathcal{A}\sqrt{b(s)} - \mathcal{A}} .
\end{align}
Since $\mathcal{A}\gg1, b\geq1$, we can neglect the subleading term $\frac{3}{4\sqrt{b}}$, i.e.,
\beq
b'\approx \frac{2}{\mc{A}b^{1/4}(s)} e^{\mathcal{A}\sqrt{b(s)} - \mathcal{A}} .
\label{eq:b'}
\eeq 
We do the same in the calculation of the second derivative: 
\begin{equation}
\label{derivs}
b'' \approx \frac{b'}{b(s)^{3/4}}  e^{\mathcal{A}(\sqrt{b(s)} -1)} 
    \approx \frac{2}{\mathcal{A}b(s)}  e^{2\mathcal{A}(\sqrt{b(s)} -1)} .
\end{equation}
We will use a change of integration variable
\begin{equation}
\label{changeOfVar}
    ds =\frac{\mathcal{A}b^{1/4}(s)}{2}  e^{-\mathcal{A}(\sqrt{b(s)} -1)} db .
\end{equation}
We also note that $b', b''$ are exponentially large in $\mathcal{A}(\sqrt{b(s)}-1)$, thus they have the potential of becoming the leading terms in our estimate for the adiabatic timescale.  

\paragraph{Completing the proof of the result claimed in Eq.~\eqref{eq:113}.} 
We show below that $\|H'\|$ does not grow with the cutoff $\Lambda$, so we apply Corollary~\ref{cor:JRScomp}. Using the JRS formula~\eqref{eq:theta-JRS} with $d=2$ and $\D \approx \omega_{\text{pl}}/2$, we have:
\begin{align}
    \theta^{\text{JRS}}(s^*) \approx\frac{8\|H'(0)\|}{\omega^2_{\text{pl}}(b(0))} +  \frac{8\|H'(s^*)\|}{\omega^2_{\text{pl}}(b(s^*))} + I  \ , \qquad I \equiv \int_0^{s^*}  \left( \frac{8 \|H''(s)\|}{\omega^2_{\text{pl}} (b(s))} + 7\cdot2^4\sqrt{2} \frac{\|H'(s)\|^2}{\omega^3_{\text{pl}} (b(s))}\right) ds .
    \label{eq:I}
\end{align}

Returning to Eq.~\eqref{eq:121a}, we now substitute the derivatives of $b$ and $f$ we found in terms of $\mathcal{A},b$, using Eqs.~\eqref{eq:def-A},~\eqref{eq:126}, \eqref{eq:f'}, and~\eqref{eq:b'}:
\begin{equation}
    \|H'(s)\| \leq  \frac{E_C}{32}\mathcal{A}^2 \left(\frac{2}{\mathcal{A}b^{1/4}(s)}  e^{\mathcal{A}(\sqrt{b(s)} -1)}\right) (1+ o(1)) + E_C \sqrt{\frac{2}{\pi}} \mathcal{A}^{3/2} e^{-\mathcal{A}}\frac{2}{\sqrt{3}} ,
    \label{eq:139}
\end{equation}
where the $o(1)$ accounts for the term we neglected in approximating $b'$ to arrive at Eq.~\eqref{eq:b'}.
The second term in Eq.~\eqref{eq:139} (arising from $f'$) is subleading, and since we only kept the leading term in the derivatives of $b$, we should omit it. The same happens for the second derivative, for which we use Eqs.~\eqref{eq:121b} and~\eqref{derivs}. Thus:
\bes
\label{Hprwins}
\begin{align}
\label{Hprwins-a}
     \|H'(s)\| &\leq  \frac{E_C}{32}\mathcal{A} \frac{2}{b^{1/4}(s)}  e^{\mathcal{A}(\sqrt{b(s)} -1)} (1+ o(1))\\
      \|H''(s)\| &\leq \frac{E_C}{32}\mathcal{A}\frac{2}{b(s)}  e^{2\mathcal{A}(\sqrt{b(s)} -1)}(1+o(1)) .
      \label{Hprwins-b}
\end{align}
\ees
Here $o(1)$ means going to zero in the limit $\mathcal{A} \to \infty$, or $b\to \infty$. We will omit the $(1+o(1))$ clause below when working with leading order expressions.

Let us substitute the expressions obtained so far into the integral $I$ [Eq.~\eqref{eq:I}], and change variables to $db$ using Eq.~\eqref{changeOfVar}:
\begin{equation}
     I \le \int_1^{b(s^*)}  \left( \frac{8 \frac{E_C}{32}\mathcal{A}\frac{2}{b}  e^{2\mathcal{A}(\sqrt{b} -1)}}{E_C^2 \mathcal{A}^2b/8} + 7\cdot2^4\sqrt{2} \frac{ \frac{E_C^2}{32^2}\mathcal{A}^2 \frac{4}{\sqrt{b}}  e^{2\mathcal{A}(\sqrt{b} -1)}}{ E_C^3 \mathcal{A}^3(b/8)^{3/2}}\right) \frac{\mathcal{A}b^{1/4}}{2}  e^{-\mathcal{A}(\sqrt{b} -1)} db ,
\end{equation}
where we also used Eq.~\eqref{eq:omegapl}.
The two terms depend on $\mathcal{A}$ and $b$ in exactly the same way: 
\begin{equation}
    E_CI \le 9\int_1^{b(s^*)}b^{-7/4}e^{\mathcal{A}(\sqrt{b}-1)}db = 18\mathcal{A}^{3/2}e^{-\mathcal{A}}\int_\mathcal{A}^{\mathcal{A}\sqrt{b(s^*)}}w^{-5/2}e^{w}dw .
    \label{eq:142}
\end{equation}
The integral can be computed analytically in terms of the exponential integral function, but it is more insightful to observe that it is dominated by the upper integration limit, under the assumption that $b(s^*) \gg 1$. Indeed since $B\gg 1$, there is a range of $s^*$ close to $1$ for which Eq. (\ref{eq:130b}) gives $b(s^*)\gg 1$. In that regime:
\begin{equation}
    \int_\mathcal{A}^{\mathcal{A}\sqrt{b(s^*)}}{w^{-5/2}}e^{w} dw \approx \int_{-\infty}^{\mathcal{A}\sqrt{b(s^*)}}\frac{1}{(\mathcal{A}\sqrt{b(s^*)})^{5/2}}e^{w}dw  = \frac{e^{\mathcal{A}\sqrt{b(s^*)}}}{(\mathcal{A}\sqrt{b(s^*)})^{5/2}} .
    \label{eq:143}
\end{equation}
Hence:
\begin{equation}
     I \leq 18  \frac{e^{\mathcal{A}(\sqrt{b(s^*)}-1)}}{E_C\mathcal{A}b(s^*)^{5/4}} .
    \label{eq:144}
\end{equation}
The full bound for $\theta$ is thus, using Eqs.~\eqref{eq:omegapl},~\eqref{eq:I},~\eqref{Hprwins}, and~\eqref{eq:144}:
\begin{align}
    \theta^{\text{JRS}}(s^*) 
    \leq \frac{4}{E_C \mathcal{A}}\left(1+
    \frac{ e^{\mathcal{A}(\sqrt{b(s^*)} -1)}}{b(s^*)^{5/4}}+ 
    \frac{9}{2} \frac{e^{\mathcal{A}(\sqrt{b(s^*)}-1)}}{b(s^*)^{5/4}}  \right) \approx\\\approx 22  \frac{e^{\mathcal{A}(\sqrt{b(s^*)}-1)}}{E_C\mathcal{A}b(s^*)^{5/4}} = 22  \frac{1}{(1-s^* +\delta_{\mathcal{B}})E_C\mathcal{A}\sqrt{b(s^*)}} ,
    \label{eq:145}
\end{align}
where neglecting the subleading first term (arising from $s=0$) means that only the end of the anneal matters, and we used Eq.~\eqref{eq:132} to obtain the last equality.
Reintroducing $\omega_{\text{pl}} (b(s^*))  = E_C \mathcal{A}\sqrt{b(s^*)/8}$ [Eq.~\eqref{eq:omegapl}], we obtain:
\begin{equation}
    \omega_{\text{q}}\theta^{\text{JRS}}(s^*) \leq  \frac{11}{\sqrt{2}}\frac{1}{(1-s^* +\delta_{\mathcal{B}})}\frac{\omega_{\text{q}}}{\omega_{\text{pl}} (b(s^*))}  .
    \label{eq:JRS-CSFQ}
\end{equation}
The ratio of qubit frequency over gap is what one would intuitively expect from the adiabatic theorem, but the other factors can only be obtained after a detailed calculation such as the one performed here.

\paragraph{Completing the proof of the result claimed in Eq.~\eqref{eq:114}.} 
Since we have already shown that $H'$ does not grow with the cutoff $\Lambda$ [Eq.~\eqref{Hprwins-a}] we now use Eq.~\eqref{eq:theta-new} [Corollary~\ref{cor:JRScomp}] for the CSFQ Hamiltonian. 

It turns out that there is no benefit from the projection in $\|PH'P\|$ so we just use $\|PH'P\|\leq \|H'\|$, and focus on the off-diagonal terms $\|PH''Q\|$ and $\|PH'Q\|$ to obtain an improvement over the JRS bound~\eqref{eq:JRS-CSFQ}.  Starting from Eq.~\eqref{eq:H'H''}, we have:
\bes
\begin{align}
    PH'Q &=E_J b' P\cos \hat{\phi} Q - \frac{E_\alpha}{2} f' P\sin\frac{\hat{\phi}}{2} Q\cos \frac{f}{2} \\
    PH''Q &=E_J b'' P\cos\hat{\phi} Q -\frac{E_\alpha}{2}  P\sin\frac{\hat{\phi}}{2}Q \left( f''\cos \frac{f}{2} - \frac{{f'}^2}{2}\sin \frac{f}{2}\right) .
\end{align}
\ees
Thus we need to estimate the leading order of the bound on $\|P\cos \hat{\phi} Q\|$ and $\|P\sin\frac{\hat{\phi}}{2}Q\|$.
For this estimate, we make use of the well approximation: the eigenstates are approximately the states of a harmonic oscillator centered at each well (Fig.~\ref{fig:wellwell}). Indeed, recall that $H_{\text{CSFQ,sin}}$ [Eq.~\eqref{eq:120}] is a Hamiltonian representing a double-well potential centered exactly at $\phi =\pm \pi$ for all $f$. We thus approximate $H_{\text{CSFQ,sin}}$ as the sum of 
\begin{equation}
    H_{L} = E_C{\hat{n}}^2 + \frac{1}{2}E_Jb (\hat{\phi} + \pi)^2\ , \quad H_{R} = E_C{\hat{n}}^2 +\frac{1}{2} E_Jb (\hat{\phi} - \pi)^2 .
\end{equation}
$P$ projects onto the span of the ground states of these two Hamiltonians, while $Q$ projects onto the span of the first and higher excited states. Denote $\delta \hat{\phi}_{L,R}\equiv \hat{\phi}\pm \pi$, then the expression for the position operators $\delta\phi$ in terms of the corresponding harmonic oscillator creation and annihilation operators is:\footnote{To see this, consider the standard 1D quantum harmonic oscillator Hamiltonian $H = \alpha {\hat{p}^2}+\beta\hat{x}^2$, where $\alpha = \frac{1}{2m}$ and $\beta = \frac{1}{2}m\omega^2$, which after the introduction of the standard creation and annihilation operators gives $\hat{x} = \gamma(\hat{a}+\hat{a}^\dagger)$, where $\gamma = ({\frac{\hbar^2\alpha}{4\beta}})^{1/4} = \sqrt{\frac{\hbar}{2m\omega}}$; in our case $\alpha = E_C $ and $\beta =  \frac{1}{2}E_Jb$, so that $\gamma =\Theta\left(\frac{E_C}{E_Jb} \right)^{1/4}$.}
\begin{equation}
    \delta \hat{\phi}_{L,R} = O\left(\frac{E_C}{E_Jb} \right)^{1/4} (a_{L,R} + a_{L,R}^\dag) . 
    \label{coordOp}
\end{equation}
We can now estimate:
\begin{equation}
      \|P\cos \hat{\phi} Q\| \approx \||g_L\rangle\!\langle g_L| \cos \delta\hat{\phi}_L (1-|g_L\rangle\!\langle g_L|) +|g_R\rangle\!\langle g_R| \cos \delta\hat{\phi}_R (1-|g_R\rangle\!\langle g_R|) \| ,
\end{equation}
where $|g_{L,R}\rangle$ are the ground states in the corresponding wells, and we neglected the matrix elements of $\cos\phi$ that mix the wells. We proceed as follows:
\bes
\begin{align}
     \|P\cos \hat{\phi} Q\| &\approx
     \||g_L\rangle\!\langle g_L| (1-\frac{1}{2} \delta\hat{\phi}^2_L) (1-|g_L\rangle\!\langle g_L|) +|g_R\rangle\!\langle g_R| (1-\frac{1}{2} \delta\hat{\phi}^2_R) (1-|g_R\rangle\!\langle g_R|) \| \\ 
     &=
     \frac{1}{2}\||g_L\rangle\!\langle g_L|  \delta\hat{\phi}^2_L (1-|g_L\rangle\!\langle g_L|) +|g_R\rangle\!\langle g_R|  \delta\hat{\phi}^2_R (1-|g_R\rangle\!\langle g_R|) \|  \\ 
     &\leq \frac{1}{2}(\||g_L\rangle\!\langle g_L| \delta\hat{\phi}^2_L (1-|g_L\rangle\!\langle g_L|)\| +\||g_R\rangle\!\langle g_R| \delta\hat{\phi}^2_R (1-|g_R\rangle\!\langle g_R|) \|) .
\end{align}
\ees
Plugging in Eq.~\eqref{coordOp},\footnote{In the number basis we have $|g\rangle\!\langle g| (a + a^\dag)^2 (1-|g\rangle\!\langle g|) = \ket{0}\bra{0}(a + a^\dag)^2\sum_{n=1}^\infty\ket{n}\!\bra{n} = \sqrt{2}\ket{0}\!\bra{2}$, and $\|\ket{0}\!\bra{2}\| = 1$ (largest eigenvalue of $\ket{2}\!\bra{2}$).}
and repeating the same calculation for $\|P\cos\frac{\delta\hat{\phi}}{2}Q\|$, we get:
\bes
\begin{align}
    & \|P\cos \hat{\phi} Q\| = O\left(\frac{E_C}{E_Jb} \right)^{1/2} \\
    & \|P\sin\frac{\hat{\phi}}{2}Q\| \approx \|P\cos\frac{\delta\hat{\phi}}{2}Q\| =  O\left(\frac{E_C}{E_Jb} \right)^{1/2}
\end{align}
\ees
Thus the expressions~\eqref{eq:121} and~\eqref{Hprwins} get multiplied by the same factor $O\left(\frac{E_C}{E_Jb} \right)^{1/2}$:
\bes
\begin{align}
\label{eq:151a}
\|PH'(s)Q\| &\approx \|H'(s)\| O\left(\frac{E_C}{E_Jb(s)} \right)^{1/2}\\
\|PH''(s)Q\| &\approx \|H''(s)\| O\left(\frac{E_C}{E_Jb(s)} \right)^{1/2} .
\end{align}
\ees
Note that since $\sqrt{E_C/E_J} = \sqrt{32}/\mathcal{A}$ [Eq.~\eqref{eq:def-A}] and $b(s)\in [1,\mc{B}]$, we have 
\beq
\left(\frac{E_C}{E_Jb(s)} \right)^{1/2}\ll 1 \ \ \forall s .
\label{eq:152}
\eeq
We can carry the $O\left(\frac{E_C}{E_Jb} \right)^{1/2}$ factor through the calculations all the way until the integration, as in Eq.~\eqref{eq:I}, except that now the integral is the one appearing in Eq.~\eqref{eq:theta-new}. Thus, again using $d=2$ and $\D \approx \omega_{\text{pl}}/2$, and absorbing all numerical factors into $O(1)$ when convenient: 
\bes
\begin{align}
I &\approx  \int_0^{s^*}\left(\frac{8\|PH''(s)Q\|}{\omega^2_{\text{pl}}(b(s))} + 2^4\sqrt{2}\frac{\|PH'(s)Q\|(5\|PH'(s)Q\| + 3\|PH'(s)P\| + 3\|H'(s)\|)}{\omega^3_{\text{pl}} (b(s))} \right)ds \\
      &\le O(1)\int_0^{s^*}  \left( \frac{ \|PH''(s)Q\|}{\omega^2_{\text{pl}} (b(s))} +  \frac{\|PH'(s)Q\|(\|PH'(s)Q\|+\|H'(s)\|)}{\omega^3_{\text{pl}} (b(s))}\right) ds .
\end{align}
\ees
It follows from Eqs.~\eqref{eq:151a} and~\eqref{eq:152} that we may neglect $\|PH'(s)Q\|$ compared to $\|H'(s)\|$.
We may thus proceed from Eq.~\eqref{eq:142} but multiply the right-hand side by $O\left(\frac{E_C}{E_Jb(s)} \right)^{1/2} = O(1)\frac{1}{\mc{A}\sqrt{b(s)}}$:
\begin{align}
    E_CI \le O(1)\left(\frac{E_C}{E_J} \right)^{1/2} \int_1^{b(s^*)}b^{-9/4}e^{\mathcal{A}(\sqrt{b}-1)}db =\\ = O(1)\mathcal{A}^{3/2}e^{-\mc{A}}\int_\mathcal{A}^{\mathcal{A}\sqrt{b(s^*)}}w^{-7/2}e^{w} dw \approx O(1) \frac{e^{\mathcal{A}(\sqrt{b(s^*)}-1)}}{\mathcal{A}^2{b(s^*)}^{7/4}} ,
    \label{eq:154}
\end{align}
where in the last approximate equality we applied the same reasoning as in Eq.~\eqref{eq:143}. 

Comparing to the latter, we see that  the expression gained an overall factor of $\frac{1}{\mathcal{A}\sqrt{b(s^*)}}$. The same happens with the leading boundary term. Using Eq.~\eqref{Hprwins-a}:
\bes
\begin{align}
& \left.\frac{d\|PH'Q\|}{\D^2}\right|_{s=0} +\left.\frac{d\|PH'Q\|}{\D^2}\right|_{s=s^*} \lesssim O(1)\left[\left(\frac{E_C}{E_Jb(0)} \right)^{1/2} \frac{\|H'(0)\|}{\omega^2_{\text{pl}} (b(0))}+ \left(\frac{E_C}{E_Jb(1)} \right)^{1/2} \frac{\|H'(s^*)\|}{\omega^2_{\text{pl}} (b(s^*))} \right] \\
 &\qquad = O(1)  \frac{1}{E_C\mathcal{A}^2} \left(1+\frac{e^{\mc{A}(\sqrt{b(s^*)}-1)}}{{b(s^*)^{7/4}}}\right) \approx O(1)  \frac{e^{\mc{A}(\sqrt{b(s^*)}-1)}}{{E_C\mathcal{A}^2b(s^*)^{7/4}}} ,
\end{align}
\ees
which is of the same order as the integral term.
Thus:
\bes
\label{compare1}
\begin{align}
\theta^{\text{new}}(s^*) &= \left.\frac{d\|PH'Q\|}{\D^2}\right|_{s=0} +\left.\frac{d\|PH'Q\|}{\D^2}\right|_{s=s^*} +I  
\lesssim O(1)  \frac{e^{\mathcal{A}(\sqrt{b(s^*)}-1)}}{E_C\mathcal{A}^2b(s^*)^{7/4}} \\
\label{compare1-b}
&= O(1) \frac{1}{\mathcal{A}\sqrt{b(s^*)}} \theta^{\text{JRS}}(s^*) \\
&= O(1) \frac{1}{(1-s^*+\delta_{\mathcal{B}})E_C\mathcal{A}^2{b(s^*)}}  = O(1)\frac{1}{(1-s^* +\delta_{\mathcal{B}})}
\frac{E_C}{\omega_{\text{pl}}^2 (b(s^*))}   ,
\label{compare1-c}
\end{align}
\ees
where in the second line we used Eq.~\eqref{eq:145} and $\omega_{\text{pl}} (b(s^*))  = \Theta(1) E_C \mathcal{A}\sqrt{b(s^*)}$ [Eq.~\eqref{eq:omegapl}].

Now, using $\omega_{\text{pl}} (b(0)) = E_C \mathcal{A}\sqrt{1/8}$ we have $E_C = \Theta(1) \omega_{\text{pl}} (b(0))/\mc{A}$. Also, from Eqs.~\eqref{eq:126} and~\eqref{eq:omegapl} we have $\mathcal{A}^{1/2}e^{-\mathcal{A}} = \Theta(\omega_{\text{q}}/\omega_{\text{pl}}(b(0)))$, which we can solve  approximately as $\mathcal{A} = \Theta(\text{ln}(\omega_{\text{pl}}(b(0))/\omega_{\text{q}}))$. Combining this with Eq.~\eqref{compare1-c}, we get:
\bes
\begin{align}
    \theta^{\text{new}}(s^*) &\lesssim O(1)\frac{1}{(1-s^* +\delta_{\mathcal{B}})}\frac{E_C}{\omega_{\text{pl}}^2 (b(s^*))}    = 
    O(1)\frac{1}{(1-s^* +\delta_{\mathcal{B}})}\frac{\omega_{\text{pl}} (b(0))}{\omega^2_{\text{pl}} (b(s^*))\text{ln}(\omega_{\text{pl}}(b(0))/\omega_{\text{q}})} \\
    &= O(1) \theta^{\text{JRS}}(s^*) \frac{\omega_{\text{pl}} (b(0))}{\omega_{\text{pl}} (b(s^*))\text{ln}(\omega_{\text{pl}}(b(0))/\omega_{\text{q}})} ,
\end{align}
\ees
where the JRS result is given in Eq.~\eqref{eq:JRS-CSFQ}.

\subsubsection{Comparison of the two bounds for the CSFQ}

To compare the two bounds, it is useful to express everything via two parameters at $s^*$ only: $1-s^* +\delta_{\mathcal{B}}$ and $\frac{\omega_{\text{q}}}{\omega_{\text{pl}} (b(s^*))} $. Note, combining Eqs.~\eqref{eq:126},~\eqref{eq:omegapl} and~\eqref{eq:130b}, that:
\begin{align}
  (1-s^*+ \delta_{\mathcal{B}}) \frac{\omega_{\text{q}}}{\omega_{\text{pl}} (b(s^*))} = \frac{4}{\sqrt{\pi}}  (\mathcal{A}\sqrt{b(s^*)})^{1/2} e^{-\mathcal{A}\sqrt{b(s^*)}} \quad \Rightarrow \\ \Rightarrow \quad \mathcal{A}\sqrt{b(s^*)} = -(1+ o(1))\ln\left[(1-s^* +\delta_{\mathcal{B}}) \frac{\omega_{\text{q}}}{\omega_{\text{pl}} (b(s^*))} .\right]
\end{align}
Thus, since Eq.~\eqref{compare1-b} shows that the new bound is related to the JRS bound by the factor $1/(\mc{A}\sqrt{b(s^*)})$, 
using the new bound leads to a logarithmic correction of the original adiabatic timescale:
\begin{equation}
    \theta^{\text{new}}(s^*) =\theta^{\text{JRS}}(s^*)\frac{ O(1)}{-\ln\left[(1-s^* +\delta_{\mathcal{B}})\frac{\omega_{\text{q}}}{\omega_{\text{pl}} (b(s^*))}\right]} .
\label{eq:159}
\end{equation}
We conclude that there are two competing small numbers, 
$1-s^* +\delta_{\mathcal{B}}$ and $\frac{\omega_{\text{q}}}{\omega_{\text{pl}} (b(s^*))}$. The gap to the third state should be much larger than the qubit frequency, i.e., $\omega_{\text{pl}}(b(s))\gg\omega_{\text{q}}$ $\forall s$. The expression $1-s^* +\delta_{\mathcal{B}}$ [recall its definition in Eq.~\eqref{eq:130b}], times $\omega_{\text{q}}$, can be interpreted as a residual transverse field $h_x$ at $s=s^*$.
This residual transverse field should satisfy $h_x /\omega_{\text{q}} = 1-s^* +\delta_{\mathcal{B}}\ll 1$ in the regime where the expression $\theta(s^*)$ for the adiabatic timescale over the interval $[0,s^*]$ is valid. Using Eqs.~\eqref{eq:JRS-CSFQ} and~\eqref{eq:159} we may rewrite the two bounds as:
\begin{equation}
\label{thetasFinal}
    \omega_{\text{q}}\theta^{\text{JRS}}(s^*) = O(1)\left.\frac{\omega_{\text{q}}^2}{\omega_{\text{pl}} h_x} \right|_{s=s^*}, \quad  \omega_{\text{q}}\theta^{\text{new}}(s^*) = O(1)\left.\frac{\omega_{\text{q}}^2}{\omega_{\text{pl}} h_x \text{ln}\frac{\omega_{\text{pl}} }{h_x}} \right|_{s=s^*} .
\end{equation}
Thus, if the geometric mean $\sqrt{h_x\omega_{\text{pl}}} \gg \omega_{\text{q}}$, then the effective dynamics stays within the qubit approximation well. Our new bound adds a logarithmic correction to this estimate, and is tighter than the JRS bound since $\omega_{\text{pl}}(b(s^*)) > h_x$. Finally, we note that a brute-force calculation we present in Appendix~\ref{app:A} obtains an equivalent bound. 

Since the adiabatic timescale increases as $s^*$ approaches $1$, there is a regime of intermediate anneal times $t_f$ such that:
\begin{equation}
    \theta(0)\leq  t_f \leq \theta(1) \quad \Rightarrow \quad \frac{11/\sqrt{2}}{\omega_{\text{pl}}(0)} \leq t_f \leq \frac{11/\sqrt{2}}{\delta_{\mc{B}}\omega_{\text{pl}}(1)} \ ,
\end{equation}
where we dropped the logarithmic corrections, and also for the purposes of estimation used $\theta(0)$ even though this is outside the range of applicability of our expression for $\theta$. In this regime there is $s^*$ such that $t_f =\theta(s^*)$, and the physical intuition is that the anneal over the interval $[0,s^*]$ stays within the qubit approximation, while the anneal beyond that in the interval $[s^*,1]$ leaves the qubit subspace. We do not know if there is still an effective qubit description of this dynamics, but we note that it is not likely to be given by the dynamics of the lowest levels alone. Indeed, although there will still be tunneling between the wells in $[s^*,1]$, there is no clear way to define a phase of the state in each well, since that state involves several energy levels of that well. Thus the pattern of interference that emerges when the populations of two wells meet after tunneling will no longer be governed by a single phase parameter. This intuition suggests that either a full multilevel description should be utilized instead of a qubit description, or possibly there is an effective stochastic description that arises after we neglect any interference effects but keep the dimension of the qubit model. The development of such a theory is beyond the scope of this work.

\subsubsection{Bound for CJJ}
To obtain a meaningful expression for the adiabatic timescale $\theta$ for the CSFQ qubit case above, we had to use a ``well approximation": the two wells of the $\phi$-potential of the Hamiltonian
\begin{equation}
     H_{\text{CSFQ,sin}} = E_C \hat{n}^2 + E_J b \cos \hat{\phi} - E_\alpha \sin\frac{\hat{\phi}}{2}\sin \frac{f}{2}  \quad \phi \in [-2\pi, 2\pi] 
\end{equation}
are separated by a large enough barrier $\sim bE_J$ throughout the anneal, so that the low energy subspace is approximately given by the ground states of the harmonic approximation of the left and right wells:
\bes
\label{HOequiv}
\begin{align}
    H_{\text{CSFQ,L}} = E_C \hat{n}^2 + E_J b \frac{ (\hat{\phi}+\pi)^2}{2},  \quad \phi \in [-\infty, \infty], \\
    H_{\text{CSFQ,R}} = E_C \hat{n}^2 + E_J b \frac{ (\hat{\phi}-\pi)^2}{2},   \quad \phi \in [-\infty, \infty] .
\end{align}
\ees
Note that we neglected the adjustment of the harmonic potential by the last term, and made a constant energy shift $\pm E_\alpha \sin \frac{f}{2}$. 
If we choose $b(s)$ and $f(s)$ in the same way as in Eq.~\eqref{schedCSFQ}, the derivatives $H_{\text{CSFQ,L}}'$ and $H_{\text{CSFQ,R}}'$ become arbitrarily large with the cutoff $\Lambda$, so the JRS bound will no longer be cutoff-independent. We will need to use the relation
\begin{equation}
   (H_{\text{CSFQ,L}}')^2 \leq  \frac{{b'}^2}{b^2}H_{\text{CSFQ,L}}^2 .
\end{equation}
Applying our adiabatic theorem [Eq.~\eqref{adtime}] to staying in the ground state of $H_{\text{CSFQ,L}}$ and $H_{\text{CSFQ,R}}$, we will find that a term with this extra factor ${b'}/b$ turns out to be subleading. We do not present the entire calculation here, since it follows that of Sec.~\ref{mainCalcSC} almost identically. One obtains exactly the same estimate as for $\omega_{\text{q}} \theta^{\text{new}}$ in Eq.~\eqref{thetasFinal}:
\begin{equation}
   \omega_{\text{q}}\theta_L =\omega_{\text{q}}\theta_R = O(1)\left.\frac{\omega_{\text{q}}^2}{\omega_{\text{pl}} h_x \text{ln}\frac{\omega_{\text{pl}} }{h_x}} \right|_{s=1}, \quad \omega_{\text{q}}\theta^{\text{new}} = O(1)\left.\frac{\omega_{\text{q}}^2}{\omega_{\text{pl}} h_x \text{ln}\frac{\omega_{\text{pl}} }{h_x}} \right|_{s=1}  
\end{equation}
Indeed, the derivatives of $f$ turned out to be subleading in the derivation, and Hamiltonians with the same $b$-dependence will lead to the same bound.

Now recall that:
\begin{equation}
    H_{\text{CJJ}} = E_C \hat{n}^2  + E_J b \cos \hat{\phi} + E_L (\hat{\phi}- f)^2
\end{equation}
Applying the well approximation, we obtain again:
\bes
\label{HOequiv2}
\begin{align}
    H_{\text{CJJ,L}} = E_C \hat{n}^2 + E_J b \frac{ (\hat{\phi}+\pi)^2}{2},  \quad \phi \in [-\infty, \infty], \\
    H_{\text{CJJ,R}} = E_C \hat{n}^2 + E_J b \frac{ (\hat{\phi}-\pi)^2}{2},   \quad \phi \in [-\infty, \infty] ,
\end{align}
\ees
now with energy shifts $E_L (\pm \pi- f)^2$. The schedule for $f$ for CJJ is chosen in a way that results in the same energy shift. The derivative $-2E_L \hat{\phi}f'$ of the term $E_L (\hat{\phi}- f)^2$ is contains an operator diverging with the cutoff $\|\phi\| = \Theta(\sqrt{\Lambda})$. Since $\theta^{\text{JRS}}$ contains $\|H'\|^2= \Theta(\Lambda)$, it diverges while $\theta^{\text{new}} \sim \theta_L =\theta_R$ focusing on the behavior of the low-lying states in the well approximation is the same as for CSFQ:
\begin{equation}
    \theta^{\text{JRS}}_{\text{CJJ}} = \Theta(\Lambda), \quad   \omega_{\text{q}}\theta^{\text{new}}_{\text{CJJ}} = O(1)\left.\frac{\omega_{\text{q}}^2}{\omega_{\text{pl}} h_x \text{ln}\frac{\omega_{\text{pl}} }{h_x}} \right|_{s=1}  
\end{equation}
\section{Effective Hamiltonian}
\label{eHam}
In this section we will show that the effective evolution in a $d_P$-dimensional low-energy subspace that is an image of $P(s)$ is best described by a $d_P\times d_P$ effective Hamiltonian:
\beq
H_{\text{eff}}(s) = V(s)H(s)V^{\dag}(s) ,\quad V(s) = V_0 U^{\dag}_{\text{eff}}(s), \quad \frac{\partial}{\partial s}U_{\text{eff}}(s) =   [P',P]U_{\text{eff}}(s),  \quad U_{\text{eff}}(0)=I 
\eeq
where the isometry $V_0$ describes a freedom of choice of basis in the low-energy subspace at $s=0$.

Consider the equation of the approximate evolution $\Uad (s)|\phi_0\rangle = |\phi(s)\rangle$ generated by $\Had(s)$ [Eq.~\eqref{eq:Had}]:
\begin{equation}
    \frac{\partial}{\partial s}|\phi(s)\rangle = -i \Had(s)|\phi(s)\rangle . 
    \label{phiDef}
\end{equation}
This is written in the full Hilbert space even though we know that $\forall s>0, ~ P(s)|\phi(s)\rangle = \ket{\phi(s)}$ as long as the same holds for the initial state $|\phi_0\rangle$. 

This suggests that we could write the evolution as generated by a $d_P\times d_P$ matrix in the low-energy subspace - the effective Hamiltonian. Of course, one can trivially do this by first undoing the evolution generated by $\Uad$, i.e., by first changing the basis in a time-dependent manner via
\begin{equation}
   |\zeta(s)\rangle  = \Uad^\dag |\phi(s)\rangle  \quad \Longrightarrow \quad\frac{\partial}{\partial s}|\zeta(s)\rangle = \frac{\partial}{\partial s}|\phi_0\rangle = 0 .
\end{equation}
Let the eigenvectors of $H(0)$ in the low energy subspace be $\{|\lambda_i\rangle\}_{i=1}^{d_P}$, and let the basis vectors defining the new $d_P$-dimensional Hilbert space we map into be $\{|e_i\rangle\}_{i=1}^{d_P}$. Then the isometry $V_0$ corresponding to the projection $P_0\equiv P(0)$ can be chosen as:
\begin{equation} 
    V_0 = \sum_{i=1}^{d_P} |e_i\rangle\! \langle \lambda_i| .
\end{equation}
We use $V_0$ to form a ${d_P}$-dimensional Schr\"odinger equation
\begin{equation}
    |\psi(s)\rangle = V_0|\zeta(s)\rangle \quad \Longrightarrow \quad\frac{\partial}{\partial s}|\psi(s)\rangle =0 .
\end{equation}
Thus the effective ${d_P}\times {d_P}$ Hamiltonian governing the dynamics of $|\psi(s)\rangle$ is zero in this basis. The observables $O$ of the original system have to be transformed accordingly:
\begin{equation}
   O_V(s) = V_0\Uad ^\dag(s) O\Uad(s) V_0^\dag ,
\end{equation}
which is $t_f$-dependent. 

We would now like to present another time-dependent basis in which this $t_f$-dependence disappears. There are some additional reasons to consider a different effective Hamiltonian, to be discussed below. Define $U^G_{\text{eff}}(s)$ via:
\begin{equation}
      \frac{\partial}{\partial s}U^G_{\text{eff}}(s) =  (G+ [P',P])U^G_{\text{eff}}(s) ,
      \label{eq:Ueff}
\end{equation}
where $G=G(s)$ is a gauge (geometric connection) term in the generator for the basis change, which we assume to be block-diagonal ($G =PGP +QGQ$). We prove in Appendix~\ref{app:intertwining} that any such $U^G_{\text{eff}}$ will satisfy the intertwining property much like Eq.~\eqref{eq:intertwining} for $U_{\text{ad}}$:
\begin{equation}
    U^G_{\text{eff}}(s)P_0 = P(s)U^G_{\text{eff}}(s) .  
    \label{eq:Gint}
\end{equation}

We then let $U^G_{\text{eff}}$ be our time-dependent change of basis transformation:
\begin{equation}
    |\zeta(s)\rangle = U^{G\dag}_{\text{eff}}(s) |\phi(s)\rangle = U^{G\dag}_{\text{eff}}(s)\Uad (s)|\phi_0\rangle .
\eeq  
Now, $\frac{\partial}{\partial s} U^{G\dag}_{\text{eff}} =  U^{G\dag}_{\text{eff}}(G^\dag+[P,P'])$, so that, using Eq.~\eqref{phiDef}:
\begin{align}
    \frac{\partial}{\partial s}|\zeta\rangle = U^{G\dag}_{\text{eff}}(G^\dag+[P,P']-i\Had)\ket{\phi} = \\= U^{G\dag}_{\text{eff}}(G^\dag+[P,P']-i(t_f H+i[P',P]))U^G_{\text{eff}}|\zeta\rangle = U^{G\dag}_{\text{eff}}(G^\dag-i t_f H)U^G_{\text{eff}}|\zeta\rangle ,
\end{align}
where $H(s)$ is the full Hamiltonian in Eq.~\eqref{eq:Had}. Note that combining Eqs.~\eqref{eq:intertwining} and~\eqref{eq:Gint}, we see that $\ket{\zeta(s)}$  remains in the $s=0$ low-energy subspace: $\ket{\zeta(s)} =P_0\ket{\zeta(s)}$ for all $s$. Thus the isometry $V_0$ defined as before completes the mapping into the effective (e.g., qubit) $d_P$-dimensional Hilbert space:
\begin{align}
\label{eq:170a}
    |\psi(s)\rangle = V_0|\zeta(s)\rangle  
    \end{align}
Thus:
\beq
\frac{\partial}{\partial s}|\psi(s)\rangle =-i t_f H^G_{\text{eff}}(s) |\psi(s)\rangle  ,
\label{eq:172}
\end{equation}
where 
\beq
H^G_{\text{eff}}(s) = V^G(s)\left(H(s)+\frac{i}{t_f}G^\dag\right)V^{G\dag}(s) ,
\label{eq:HGeff}
\eeq
and we defined the time-dependent isometry 
\beq
V^G(s) = V_0 U^{G\dag}_{\text{eff}}(s) 
\eeq
into the effective basis at any $s$. Note that, combining our notation, we can write:
\begin{equation}
    |\psi(s)\rangle = V^G |\phi(s)\rangle, \quad |\phi(s)\rangle = V^{G\dag}|\psi(s)\rangle 
\end{equation}

The adiabatic theorem (Theorem~\ref{th:AT}) we have proven gives the bound [recall Eq.~\eqref{eq:ineq-th}]
\begin{equation}
    \|\ket{\phi(s)} - \ket{\phi_{\text{tot}}(s)}\| \leq b  = \theta/t_f \ ,
\end{equation}
where $\ket{\phi(s)}$ is the approximate evolution from Eq.~\eqref{phiDef}, while $\ket{\phi_{\text{tot}}(s)} = U_{\text{tot}}(s)\ket{\phi_0}$ is the true evolution generated by the Hamiltonian $H(s)$ in the full Hilbert space. Applying the expression for $|\phi(s)\rangle$ in terms of $|\psi(s)\rangle$, we get:
\begin{equation}
    \|V^{G\dag}\ket{\psi(s)} - \ket{\phi_{\text{tot}}(s)}\| \leq b  = \theta/t_f \ .
\end{equation}

This inequality means that $\ket{\psi(s)}$, the state evolving according to the effective Hamiltonian, after an isometry back to the total Hilbert space is close to the true state $\ket{\phi_{\text{tot}}(s)}$. Since $V_0V_0^\dag = I$ and since $V_0$ is an isometry (hence norm reducing), we have
\begin{align}
\| \ket{\psi(s)} - V^G(s)\ket{\phi_{\text{tot}}(s)} =
    \|V^G(s) (V^{G\dag}(s)\ket{\psi(s)} - \ket{\phi_{\text{tot}}(s)} ) \| \leq \\ \leq \|V^{G\dag}(s)\ket{\psi(s)} - \ket{\phi_{\text{tot}}(s)}\|\leq b  = \theta/t_f
    \ . \label{eq:173}
\end{align}

Let $u(s)$ be generated by $t_f H_{\text{eff}}(s)$, i.e., $\ket{\psi(s)} = u(s)\ket{\psi(0)}$ [Eq.~\eqref{eq:172}]. Note that $\ket{\phi(0)} =V^{G\dag}\ket{\psi(0)}$. We can rewrite Eq.~\eqref{eq:173} as:
\begin{equation}
    \forall \ket{\psi(0)}: \quad  \| (u(s) - V^G(s) U_{\text{tot}}(s)V^{G\dag}(s)\ket{\psi(0)}  \|\leq b  = \theta/t_f 
    \ .
\end{equation}
It follows immediately that the same bound holds for the evolution operators, as stated in the Introduction [recall Eq.~\eqref{eq:4}]:
\begin{equation}
    \|u(s) - V^G(s)U_{\text{tot}}(s)V^{G\dag}(s)\| \leq b\ .
\end{equation}

The observables of the original system transform as:
\begin{equation}
   O^G_{\text{eff}}(s) = V^G(s) OV^{G\dag}(s)\ .
   \label{eq:O_eff}
\end{equation}

In practice, $H_{\text{eff}}$ and $O_{\text{eff}}$ can be found by truncation of the total Hilbert space to some large cutoff, and working with truncated finite dimensional matrices $O,H, U,V$. The error introduced by the cutoff may be estimated by trying several cutoffs and extrapolating. 
We defer a more rigorous treatment of this error to future work. 

Let us now discuss the gauge $G$. There are two natural reasons for choosing $G=0$. The first is that 
if we wish to keep the basis change (and thus the operators $O_{\text{eff}}^G = V^G(s)OV^{G\dag}(s)$) $t_f$-independent, then $G$ itself must be $t_f$-independent. Thus, by Eq.~\eqref{eq:HGeff}, the only choice that leads to $t_f$-independent $H_{\text{eff}}^G(s)$ is $G =0$.

The second is that the choice $G=0$ is the one that minimizes the norm of the derivative of any observable. This can be interpreted as the desirable consequence of not imparting any additional geometric phases that artificially speed up the evolution of observables in the given observation frame. To show this explicitly, note first that since we assumed that $G$ is block-diagonal, we cannot choose the block-off-diagonal form $G = -[P',P]$ to cancel the time-dependence of the operators. 
Now, by Eq.~\eqref{eq:Ueff}:
\begin{equation}
     \frac{\partial}{\partial s}{O_{\text{eff}}^G} =  V^G [O, G + [P',P]]V^{G\dag} .
\end{equation}
When an operator $X$ is block-diagonal so that in particular $PXP =0$, then also $V^GXV^{G\dag}=0$ since $V^G$ just maps onto the space the projector selects. With this, it is clear that since $P[P,P']P=0$, we have:
\begin{equation}
     \|\frac{\partial}{\partial s}{O_{\text{eff}}^G}\| =  \|V^G [O, G]V^{G\dag}\| \geq 0 ,
\end{equation}
with the norm vanishing in general only when $G=0$.

\section{Conclusions}
\label{sec:conc}

Starting with Kato's work in the 1950's, work on the adiabatic theorem of quantum mechanics has resulted in rigorous bounds on the convergence between the actual evolution and the approximate, adiabatic evolution. These bounds were initially derived for Hamiltonians with bounded-norm derivatives, then conjectured without presenting the explicit form for the unbounded case, subject to assumptions restricting the class of Hamiltonians to being `admissible', which essentially meant that norms of certain functions of $H$ and its derivatives were not allowed to diverge. In this work we obtained new bounds which are presented in the explicit form, and can be applied after the introduction of an appropriate cutoff to Hamiltonians whose derivatives are unbounded. After the cutoff all the derivatives are bounded by a function of the cutoff scale, but our bounds capture the physically relevant cases where the adiabatic timescale is independent of the cutoff. To achieve this we introduced a different assumption, relating $H'$ to a power of $H$ via a simple-to-check positivity condition [Eq.~\eqref{eq:main-assump}]. With this assumption, we derived a new form of the adiabatic theorem. We expect this adiabatic theorem to prove to be useful in a variety of situations, e.g., in the context of adiabatic quantum computing using superconducting qubits or trapped ions, where the physical degrees of freedom correspond to (perturbed) harmonic oscillators. 

To demonstrate and illustrate the latter, we performed a calculation of the adiabatic timescale characterizing the accuracy of the qubit approximation of the circuit Hamiltonian of a capacitively shunted flux qubit. Specifically we considered a time evolution fashioned after quantum annealing that attempts to reduce the qubit transverse field $X$ linearly as $(1-s)X$. The result shows that after some $s^*$ close to $1$ the state generally escapes from the qubit approximation. Specifically, higher oscillator states become populated in each well. We do not expect this leakage effect to introduce a significant change in the outcome of a single-qubit quantum anneal, since the end-measurement is just a binary measurement of which well the flux is in, not the projection onto the lowest eigenstates. Thus, the non-qubit eigenstates become categorized as 0 or 1 depending on the sign of the flux. It remains an open question what the effect of this type of leakage is in the case of multi-qubit quantum dynamics, and whether it impacts the prospects of  a quantum speedup.

We thank Marius Lemm for insightful comments. This material is based upon work supported by the National Science Foundation the Quantum Leap Big Idea under Grant No.~OMA-1936388. 
Research was also sponsored by the Army Research Office and was
accomplished under Grant Number W911NF-20-1-0075. 
This research is also based upon work (partially) supported by the Office of
the Director of National Intelligence (ODNI), Intelligence Advanced
Research Projects Activity (IARPA) and the Defense Advanced Research Projects Agency (DARPA), via the U.S. Army Research Office contract W911NF-17-C-0050. The views and conclusions contained herein are those of the authors and should not be interpreted as necessarily
representing the official policies or endorsements, either expressed or
implied, of the ODNI, IARPA, DARPA, ARO, or the U.S. Government. The U.S. Government
is authorized to reproduce and distribute reprints for Governmental
purposes notwithstanding any copyright annotation thereon.

\appendix

\section{Time-dependent harmonic oscillator: a brute-force estimate \textit{vs.} the bound of Sec~\ref{EjEcSec}}
\label{app:A}

\begin{figure}
\centering
\includegraphics[width=0.5\columnwidth]{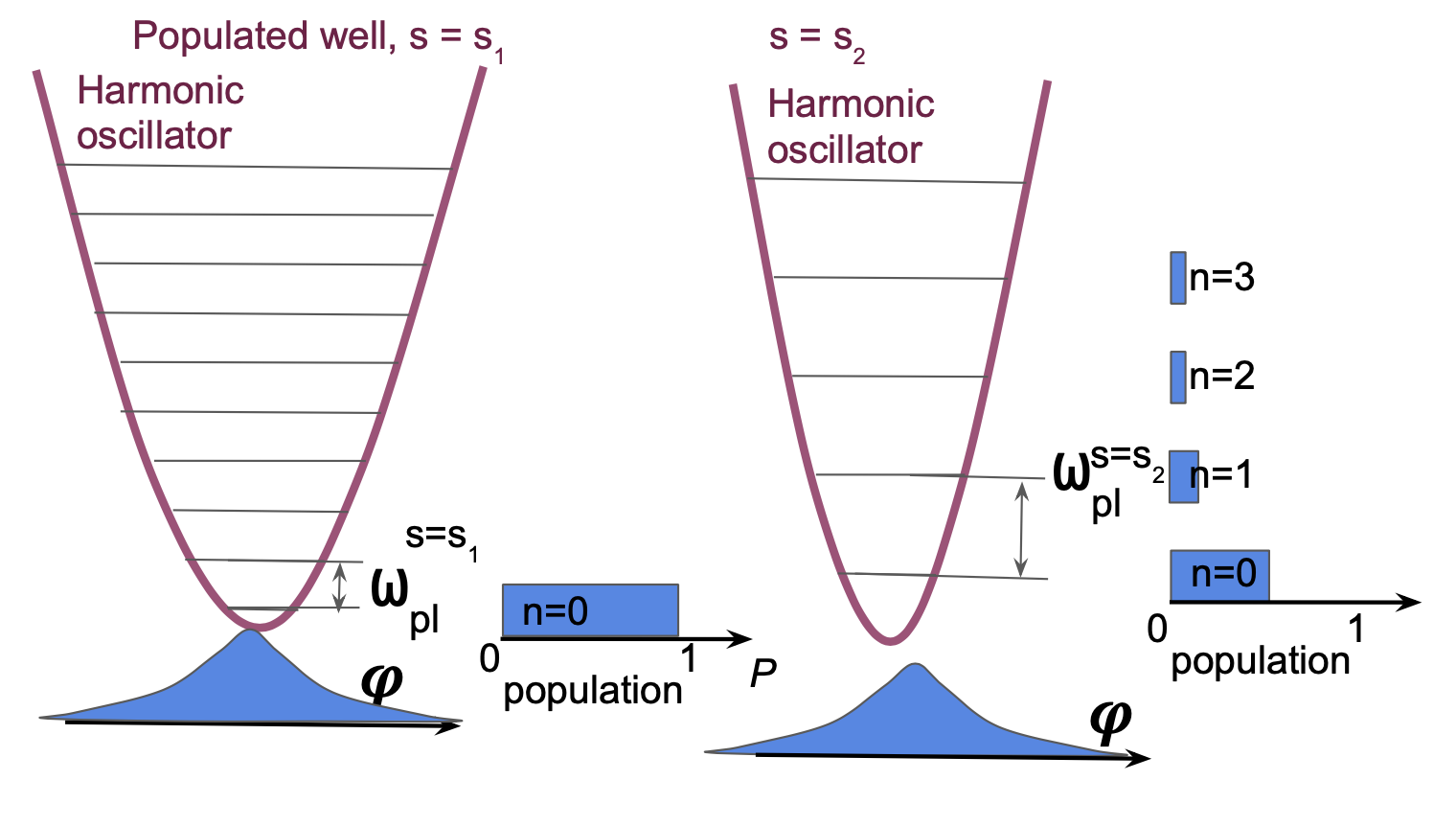}
\caption{For this figure, we assume that up to $s_1$ the evolution was fully adiabatic, but then the instantaneous approximation is applied to go from $s_1$ to $s_2$. The wavefunction is preserved, but since the eigenstates change it gets projected into the excited states. In the text, a more careful calculation of the leakage is carried out.}
\label{fig:physM}
\end{figure}

The well approximation of Eq.~\eqref{HOequiv} at every point $s$ along the anneal contains just harmonic potentials of different width, thus the leading order of leakage can be well described by changing the width of the harmonic potential by a dilation to $1/\sqrt{b(s)}$ of the $s=0$ width. We illustrate the leakage due to this effect in Fig.~\ref{fig:physM}. The diabatic evolution subject to the corresponding Hamiltonian
\begin{equation}
     H_{\text{0}} = E_C \hat{n}^2 + E_J b \frac{ \hat{\phi}^2}{2},  \quad \phi \in [-\infty, \infty],
\end{equation}
where we shifted the minimum to $\phi =0$, can be investigated in a brute-force manner, since we know the eigenstates $|m\rangle$ at every $s$, as well as their derivatives $|m'\rangle$. Indeed if we use the dilation operator: 
\begin{equation}
    U_d = e^{\frac{-i \text{ln} b}{4}(\hat{n}\hat{\phi} + \hat{\phi}\hat{n} )}, \quad U_d \psi(\phi) = b^{-1/4}\psi( \phi/\sqrt{b})
\end{equation}
we can express:
\begin{equation}
   |m_s\rangle = U_d |m_{s=0}\rangle, \quad |m_s'\rangle = U_d\frac{-ib'}{4b}(\hat{n}\hat{\phi} + \hat{\phi}\hat{n} ) |m_{s=0}\rangle . 
\end{equation}
We now write the time-dependent Schr\"odinger equation in the time-dependent eigenbasis, thus acquiring a geometric term:
\begin{align}
    &|\psi'(s)\rangle = -it_fH|\psi(s)\rangle, \quad |\psi(s)\rangle = \sum_m c_m(s)|m_s\rangle \\
    &\sum_m {c'_m}(s)|m_s\rangle + c_m(s)|{m_s}'\rangle = -\sum_mit_fHc_m(s)|m_s\rangle \\
     &{c'_k}(s) +\sum_m c_m(s)\langle k_s|{m_s}'\rangle = -\sum_mit_f c_m(s)\langle k_s|H|m_s\rangle \\
     &|c'(s)\rangle -\frac{ib'}{4b}(\hat{n}_{s=0}\hat{\phi}_{s=0} + \hat{\phi}_{s=0}\hat{n}_{s=0} )|c(s)\rangle = -it_f\omega_{\text{pl}}^s \hat{m}|c(s)\rangle ,
\end{align}
where $\hat{m}$ is just a diagonal matrix with $0,1,2\dots$ on the diagonal, and
\begin{equation}
    \hat{\phi}_{s=0} \sim \left(\frac{E_C}{E_J}\right)^{1/4} (a + a^\dag) ,\quad \hat{n}_{s=0}\sim \left(\frac{E_J}{E_C}\right)^{1/4} i(a - a^\dag) .
\end{equation}
and $a, a^\dag$ are 
the usual bosonic annihilation and creation operators. With this, we can estimate the leakage. Let us call 
\begin{equation}
    V(s) =\frac{-b'}{4t_fb}(\hat{n}_{s=0}\hat{\phi}_{s=0} + \hat{\phi}_{s=0}\hat{n}_{s=0} )
\end{equation}
a perturbation to the Hamiltonian. We split the interval $[0,t_f]$ into periods $2\pi/\omega_{\text{pl}}^s$. Over one period, we approximately consider $\omega_{\text{pl}}^s$ to be constant. We transform into the interaction picture: 
\begin{equation}
   V(t) =\frac{iO(1)b'}{t_fb}(a^2 e^{i2\omega_{\text{pl}}^s t} - a^{\dag2}e^{-i2\omega_{\text{pl}}^s t}) .
\end{equation}
We do not keep track of the numerical factors at this point. The leakage over one period is given by 
\begin{equation}
    |\delta c_s\rangle \approx i\int_0^{2\pi/\omega_{\text{pl}}^s} V(t) dt |0\rangle, \quad \| \delta c_s\| = O(1)\left| \left(\frac{\partial}{\partial t}\frac{b'}{2t_fb}\right)\int_0^{2\pi/\omega_{\text{pl}}^s} e^{-i2\omega_{\text{pl}}^s t} tdt \right|  = O(1)\frac{1}{t_f^2}\left(\text{ln} b\right)''\frac{1}{\omega_{\text{pl}}^{s2}} ,
\end{equation}
where the constant-in-$t$ term cancels in the rotating integral. Now what remains is to add contributions of all $s$ from intervals $2\pi/\omega_{\text{pl}}^s t_f$:
\begin{equation}
    \|\delta c\| = \int_0^1 ds (2\pi/\omega_{\text{pl}}^s t_f)^{-1}\| \delta c_s\| = \frac{O(1)}{t_f}\int_0^1 ds \left(\text{ln} b\right)''\frac{1}{\omega_{\text{pl}}^{s}} = \frac{O(1)}{t_f\omega_{\text{pl}}(0)}\int_0^1 ds \left(\text{ln} b\right)''\frac{1}{\sqrt{b}} .
\end{equation}
Taking the integral using Eqs.~\eqref{eq:b'} and~\eqref{changeOfVar}:
\begin{align}
   \int_0^1 ds \left(\text{ln} b\right)''\frac{1}{\sqrt{b}} &=  \left.\left(\text{ln} b\right)'\frac{1}{\sqrt{b}}\right|_0^1 - \int_0^1 ds \frac{b'}{b} {\left(\frac{1}{\sqrt{b}}\right)'} = \left.\frac{b'}{b^{3/2}}\right|_0^1  +\int_0^1 ds \frac{{b'}^2}{2b^{5/2}} \\
    &= \left.\frac{2}{\mathcal{A}(b)^{7/4}}  e^{\mathcal{A}(\sqrt{b} -1)} \right|_0^1  + \int_1^{\mathcal{B}} \frac{2}{\mathcal{A}^2b^{3}}  e^{2\mathcal{A}(\sqrt{b} -1)} \frac{\mathcal{A}(b)^{1/4}}{2}  e^{-\mathcal{A}(\sqrt{b} -1)} db  \\
  &  = \frac{2+o(1)}{\mathcal{A}\mathcal{B}^{7/4}}  e^{\mathcal{A}(\sqrt{\mathcal{B}} -1)}   +\frac{1+o(1)}{\mathcal{A}^2\mathcal{B}^{9/4}}  e^{\mathcal{A}(\sqrt{\mathcal{B}} -1)} .
\end{align}

The second term is subleading, thus
\begin{equation}
    E_C\theta = O(1) \frac{1}{\mathcal{A}^2\mathcal{B}^{7/4}}  e^{\mathcal{A}(\sqrt{\mathcal{B}} -1)} ,
\end{equation}
which exactly matches Eq.~\eqref{compare1} for $s^*=1, ~ b(s^*) = \mc{B}$. In other words, our brute-force calculation produces the same result as our bound.

\section{Proof of the intertwining relation, Eqs.~\eqref{eq:intertwining} and~\eqref{eq:Gint}}
\label{app:intertwining}

\begin{proof} 

It suffices to prove that $J(s)$ defined via
\begin{equation}
    J(s) \equiv U^G_{\text{eff}} (s) P_0 - P(s) U^G_{\text{eff}}(s)
    \label{eq:Jdef}
\end{equation}
vanishes for all $s$. Thus $J(s)$ is the ``integral of motion" of the differential equation satisfied by $\Uad (s)$. 

We can find the derivative using Eq.~\eqref{eq:Ueff}:
\bes
\begin{align}
    J' &= {U^G_{\text{eff}} }' P_0 - P'U^G_{\text{eff}}  -P{U^G_{\text{eff}} }' \\
    &=  G J +[P',P]U^G_{\text{eff}} P_0 - P'U^G_{\text{eff}}  -P[P',P]U^G_{\text{eff}} ,
\end{align}
\ees
where in the second equality we used $[P,G]=0$, which follows from $G$ being block-diagonal ($G = PGP +QGQ$).
Using the fact that $P'$ is block-off-diagonal [Eqs.~\eqref{eq:P'-offD} and~\eqref{eq:P'offdiag}], we simplify the last two terms as 
\beq
P'+P[P',P] = P'-PP' = QP' = QP'Q+QP'P = P'P ,
\eeq
where in the last equality we used $P'P = (PP'Q+QP'P)P = QP'P$.
Thus:
\bes
\begin{align}
    J'-G J &=  [P',P]U^G_{\text{eff}} P_0 - P'P U^G_{\text{eff}} \\
    &=[P',P]U^G_{\text{eff}} P_0 - (P'P-PP')P U^G_{\text{eff}}  = [P',P]J ,
\end{align}
\ees
i.e.,     
\beq
J' =(G +[P',P])J .
\eeq
Since $J(s)=0$ satisfies this equation and by definition of $J(s)$ [Eq.~\eqref{eq:Jdef}] we have $J(0)=0$, by uniqueness of the solution of a linear differential equation we obtain that $J(s)=0$ is the unique solution. This proves the desired property of $U^G_{\text{eff}}$. 

In the special case of $G(s) = -it_f H(s)$ we have $U_{\text{eff}}^G = U_{\text{ad}}$; thus proving Eq.~\eqref{eq:Gint} also proves Eq.~\eqref{eq:intertwining}.

\end{proof}


\bibliography{refs}

\begin{thebibliography}{50}%
\makeatletter
\providecommand \@ifxundefined [1]{%
 \@ifx{#1\undefined}
}%
\providecommand \@ifnum [1]{%
 \ifnum #1\expandafter \@firstoftwo
 \else \expandafter \@secondoftwo
 \fi
}%
\providecommand \@ifx [1]{%
 \ifx #1\expandafter \@firstoftwo
 \else \expandafter \@secondoftwo
 \fi
}%
\providecommand \natexlab [1]{#1}%
\providecommand \enquote  [1]{``#1''}%
\providecommand \bibnamefont  [1]{#1}%
\providecommand \bibfnamefont [1]{#1}%
\providecommand \citenamefont [1]{#1}%
\providecommand \href@noop [0]{\@secondoftwo}%
\providecommand \href [0]{\begingroup \@sanitize@url \@href}%
\providecommand \@href[1]{\@@startlink{#1}\@@href}%
\providecommand \@@href[1]{\endgroup#1\@@endlink}%
\providecommand \@sanitize@url [0]{\catcode `\\12\catcode `\$12\catcode
  `\&12\catcode `\#12\catcode `\^12\catcode `\_12\catcode `\%12\relax}%
\providecommand \@@startlink[1]{}%
\providecommand \@@endlink[0]{}%
\providecommand \url  [0]{\begingroup\@sanitize@url \@url }%
\providecommand \@url [1]{\endgroup\@href {#1}{\urlprefix }}%
\providecommand \urlprefix  [0]{URL }%
\providecommand \Eprint [0]{\href }%
\providecommand \doibase [0]{https://doi.org/}%
\providecommand \selectlanguage [0]{\@gobble}%
\providecommand \bibinfo  [0]{\@secondoftwo}%
\providecommand \bibfield  [0]{\@secondoftwo}%
\providecommand \translation [1]{[#1]}%
\providecommand \BibitemOpen [0]{}%
\providecommand \bibitemStop [0]{}%
\providecommand \bibitemNoStop [0]{.\EOS\space}%
\providecommand \EOS [0]{\spacefactor3000\relax}%
\providecommand \BibitemShut  [1]{\csname bibitem#1\endcsname}%
\let\auto@bib@innerbib\@empty
\bibitem [{\citenamefont {Einstein}(1914)}]{Einstein:adiabatic}%
  \BibitemOpen
  \bibfield  {author} {\bibinfo {author} {\bibfnamefont {A.}~\bibnamefont
  {Einstein}},\ }\href@noop {} {\bibfield  {journal} {\bibinfo  {journal}
  {Verh. d. D. phys. Ges.}\ }\textbf {\bibinfo {volume} {16}},\ \bibinfo
  {pages} {826} (\bibinfo {year} {1914})}\BibitemShut {NoStop}%
\bibitem [{\citenamefont {Ehrenfest}(1916)}]{Ehrenfest:adiabatic}%
  \BibitemOpen
  \bibfield  {author} {\bibinfo {author} {\bibfnamefont {P.}~\bibnamefont
  {Ehrenfest}},\ }\href
  {http://www.ffn.ub.es/luisnavarro/nuevo_maletin/Ehrenfest_1916.pdf}
  {\bibfield  {journal} {\bibinfo  {journal} {Verslagen Kon. Akad. Amsterdam}\
  }\textbf {\bibinfo {volume} {25}},\ \bibinfo {pages} {412 } (\bibinfo {year}
  {1916})}\BibitemShut {NoStop}%
\bibitem [{\citenamefont {Farhi}\ \emph {et~al.}(2000)\citenamefont {Farhi},
  \citenamefont {Goldstone}, \citenamefont {Gutmann},\ and\ \citenamefont
  {Sipser}}]{farhi_quantum_2000}%
  \BibitemOpen
  \bibfield  {author} {\bibinfo {author} {\bibfnamefont {E.}~\bibnamefont
  {Farhi}}, \bibinfo {author} {\bibfnamefont {J.}~\bibnamefont {Goldstone}},
  \bibinfo {author} {\bibfnamefont {S.}~\bibnamefont {Gutmann}},\ and\ \bibinfo
  {author} {\bibfnamefont {M.}~\bibnamefont {Sipser}},\ }\href
  {http://arxiv.org/abs/quant-ph/0001106} {\bibfield  {journal} {\bibinfo
  {journal} {arXiv:quant-ph/0001106}\ } (\bibinfo {year} {2000})}\BibitemShut
  {NoStop}%
\bibitem [{\citenamefont {Morita}\ and\ \citenamefont
  {Nishimori}(2008)}]{morita:125210}%
  \BibitemOpen
  \bibfield  {author} {\bibinfo {author} {\bibfnamefont {S.}~\bibnamefont
  {Morita}}\ and\ \bibinfo {author} {\bibfnamefont {H.}~\bibnamefont
  {Nishimori}},\ }\href {https://doi.org/10.1063/1.2995837} {\bibfield
  {journal} {\bibinfo  {journal} {J. Math. Phys.}\ }\textbf {\bibinfo {volume}
  {49}},\ \bibinfo {pages} {125210} (\bibinfo {year} {2008})}\BibitemShut
  {NoStop}%
\bibitem [{\citenamefont {Albash}\ and\ \citenamefont
  {Lidar}(2018)}]{Albash-Lidar:RMP}%
  \BibitemOpen
  \bibfield  {author} {\bibinfo {author} {\bibfnamefont {T.}~\bibnamefont
  {Albash}}\ and\ \bibinfo {author} {\bibfnamefont {D.~A.}\ \bibnamefont
  {Lidar}},\ }\href {https://link.aps.org/doi/10.1103/RevModPhys.90.015002}
  {\bibfield  {journal} {\bibinfo  {journal} {Reviews of Modern Physics}\
  }\textbf {\bibinfo {volume} {90}},\ \bibinfo {pages} {015002} (\bibinfo
  {year} {2018})}\BibitemShut {NoStop}%
\bibitem [{\citenamefont {{R. Bhatia}}(1997)}]{Bhatia:book}%
  \BibitemOpen
  \bibfield  {author} {\bibinfo {author} {\bibnamefont {{R. Bhatia}}},\ }\href
  {https://doi.org/10.1007/978-1-4612-0653-8} {\emph {\bibinfo {title} {{Matrix
  Analysis}}}},\ \bibinfo {series} {{Graduate Texts in Mathematics}}\ No.\
  \bibinfo {number} {169}\ (\bibinfo  {publisher} {Springer-Verlag},\ \bibinfo
  {address} {{New York}},\ \bibinfo {year} {1997})\BibitemShut {NoStop}%
\bibitem [{\citenamefont {Avron}\ and\ \citenamefont
  {Elgart}(1999)}]{avron_adiabatic_1999}%
  \BibitemOpen
  \bibfield  {author} {\bibinfo {author} {\bibfnamefont {J.~E.}\ \bibnamefont
  {Avron}}\ and\ \bibinfo {author} {\bibfnamefont {A.}~\bibnamefont {Elgart}},\
  }\href {https://doi.org/10.1007/s002200050620} {\bibfield  {journal}
  {\bibinfo  {journal} {Commun. Math. Phys.}\ }\textbf {\bibinfo {volume}
  {203}},\ \bibinfo {pages} {445} (\bibinfo {year} {1999})}\BibitemShut
  {NoStop}%
\bibitem [{\citenamefont {Teufel}(2003)}]{Teufel:book}%
  \BibitemOpen
  \bibfield  {author} {\bibinfo {author} {\bibfnamefont {S.}~\bibnamefont
  {Teufel}},\ }\href {https://doi.org/10.1007/b13355} {\emph {\bibinfo {title}
  {Adiabatic Perturbation Theory in Quantum Dynamics}}},\ \bibinfo {series}
  {Lecture Notes in Mathematics}, Vol.\ \bibinfo {volume} {1821}\ (\bibinfo
  {publisher} {Springer-Verlag},\ \bibinfo {address} {Berlin},\ \bibinfo {year}
  {2003})\BibitemShut {NoStop}%
\bibitem [{\citenamefont {{A. Messiah}}(1962)}]{Messiah}%
  \BibitemOpen
  \bibfield  {author} {\bibinfo {author} {\bibnamefont {{A. Messiah}}},\
  }\href@noop {} {\emph {\bibinfo {title} {{Quantum Mechanics, Vol. II}}}}\
  (\bibinfo  {publisher} {{North-Holland Publishing Company}},\ \bibinfo
  {address} {{Amsterdam}},\ \bibinfo {year} {1962})\BibitemShut {NoStop}%
\bibitem [{\citenamefont {Garrido}\ and\ \citenamefont
  {Sancho}(1962)}]{Garrido:62}%
  \BibitemOpen
  \bibfield  {author} {\bibinfo {author} {\bibfnamefont {L.~M.}\ \bibnamefont
  {Garrido}}\ and\ \bibinfo {author} {\bibfnamefont {F.~J.}\ \bibnamefont
  {Sancho}},\ }\href
  {https://doi.org/http://dx.doi.org/10.1016/0031-8914(62)90109-X} {\bibfield
  {journal} {\bibinfo  {journal} {Physica}\ }\textbf {\bibinfo {volume} {28}},\
  \bibinfo {pages} {553} (\bibinfo {year} {1962})}\BibitemShut {NoStop}%
\bibitem [{\citenamefont {Campos~Venuti}\ and\ \citenamefont
  {Lidar}(2018)}]{Venuti:2018aa}%
  \BibitemOpen
  \bibfield  {author} {\bibinfo {author} {\bibfnamefont {L.}~\bibnamefont
  {Campos~Venuti}}\ and\ \bibinfo {author} {\bibfnamefont {D.~A.}\ \bibnamefont
  {Lidar}},\ }\href {https://doi.org/10.1103/PhysRevA.98.022315} {\bibfield
  {journal} {\bibinfo  {journal} {Phys. Rev. A}\ }\textbf {\bibinfo {volume}
  {98}},\ \bibinfo {pages} {022315} (\bibinfo {year} {2018})}\BibitemShut
  {NoStop}%
\bibitem [{\citenamefont {Nenciu}(1993)}]{Nenciu:93}%
  \BibitemOpen
  \bibfield  {author} {\bibinfo {author} {\bibfnamefont {G.}~\bibnamefont
  {Nenciu}},\ }\href {https://doi.org/10.1007/BF02096616} {\bibfield  {journal}
  {\bibinfo  {journal} {Commun.Math. Phys.}\ }\textbf {\bibinfo {volume}
  {152}},\ \bibinfo {pages} {479} (\bibinfo {year} {1993})}\BibitemShut
  {NoStop}%
\bibitem [{\citenamefont {Hagedorn}\ and\ \citenamefont
  {Joye}(2002)}]{Hagedorn:02}%
  \BibitemOpen
  \bibfield  {author} {\bibinfo {author} {\bibfnamefont {G.~A.}\ \bibnamefont
  {Hagedorn}}\ and\ \bibinfo {author} {\bibfnamefont {A.}~\bibnamefont
  {Joye}},\ }\href {https://doi.org/dx.doi.org/10.1006/jmaa.2001.7765}
  {\bibfield  {journal} {\bibinfo  {journal} {J. Math. Anal. Appl.}\ }\textbf
  {\bibinfo {volume} {267}},\ \bibinfo {pages} {235} (\bibinfo {year}
  {2002})}\BibitemShut {NoStop}%
\bibitem [{\citenamefont {Lidar}\ \emph {et~al.}(2009)\citenamefont {Lidar},
  \citenamefont {Rezakhani},\ and\ \citenamefont {Hamma}}]{lidar:102106}%
  \BibitemOpen
  \bibfield  {author} {\bibinfo {author} {\bibfnamefont {D.~A.}\ \bibnamefont
  {Lidar}}, \bibinfo {author} {\bibfnamefont {A.~T.}\ \bibnamefont
  {Rezakhani}},\ and\ \bibinfo {author} {\bibfnamefont {A.}~\bibnamefont
  {Hamma}},\ }\href {https://doi.org/10.1063/1.3236685} {\bibfield  {journal}
  {\bibinfo  {journal} {J. Math. Phys.}\ }\textbf {\bibinfo {volume} {50}},\
  \bibinfo {pages} {102106} (\bibinfo {year} {2009})}\BibitemShut {NoStop}%
\bibitem [{\citenamefont {Rezakhani}\ \emph {et~al.}(2010)\citenamefont
  {Rezakhani}, \citenamefont {Pimachev},\ and\ \citenamefont {Lidar}}]{RPL:10}%
  \BibitemOpen
  \bibfield  {author} {\bibinfo {author} {\bibfnamefont {A.~T.}\ \bibnamefont
  {Rezakhani}}, \bibinfo {author} {\bibfnamefont {A.~K.}\ \bibnamefont
  {Pimachev}},\ and\ \bibinfo {author} {\bibfnamefont {D.~A.}\ \bibnamefont
  {Lidar}},\ }\href {https://doi.org/10.1103/PhysRevA.82.052305} {\bibfield
  {journal} {\bibinfo  {journal} {Phys. Rev. A}\ }\textbf {\bibinfo {volume}
  {82}},\ \bibinfo {pages} {052305} (\bibinfo {year} {2010})}\BibitemShut
  {NoStop}%
\bibitem [{\citenamefont {Cheung}\ \emph {et~al.}(2011)\citenamefont {Cheung},
  \citenamefont {H{\o}yer},\ and\ \citenamefont {Wiebe}}]{Cheung:2011aa}%
  \BibitemOpen
  \bibfield  {author} {\bibinfo {author} {\bibfnamefont {D.}~\bibnamefont
  {Cheung}}, \bibinfo {author} {\bibfnamefont {P.}~\bibnamefont {H{\o}yer}},\
  and\ \bibinfo {author} {\bibfnamefont {N.}~\bibnamefont {Wiebe}},\ }\href
  {https://doi.org/10.1088/1751-8113/44/41/415302} {\bibfield  {journal}
  {\bibinfo  {journal} {Journal of Physics A: Mathematical and Theoretical}\
  }\textbf {\bibinfo {volume} {44}},\ \bibinfo {pages} {415302} (\bibinfo
  {year} {2011})}\BibitemShut {NoStop}%
\bibitem [{\citenamefont {Ge}\ \emph {et~al.}(2016)\citenamefont {Ge},
  \citenamefont {Moln{\'a}r},\ and\ \citenamefont {Cirac}}]{Ge:2015wo}%
  \BibitemOpen
  \bibfield  {author} {\bibinfo {author} {\bibfnamefont {Y.}~\bibnamefont
  {Ge}}, \bibinfo {author} {\bibfnamefont {A.}~\bibnamefont {Moln{\'a}r}},\
  and\ \bibinfo {author} {\bibfnamefont {J.~I.}\ \bibnamefont {Cirac}},\ }\href
  {https://doi.org/10.1103/PhysRevLett.116.080503} {\bibfield  {journal}
  {\bibinfo  {journal} {Physical Review Letters}\ }\textbf {\bibinfo {volume}
  {116}},\ \bibinfo {pages} {080503} (\bibinfo {year} {2016})}\BibitemShut
  {NoStop}%
\bibitem [{\citenamefont {Yurke}\ and\ \citenamefont
  {Denker}(1984)}]{Yurke:1984aa}%
  \BibitemOpen
  \bibfield  {author} {\bibinfo {author} {\bibfnamefont {B.}~\bibnamefont
  {Yurke}}\ and\ \bibinfo {author} {\bibfnamefont {J.~S.}\ \bibnamefont
  {Denker}},\ }\href {https://doi.org/10.1103/PhysRevA.29.1419} {\bibfield
  {journal} {\bibinfo  {journal} {Physical Review A}\ }\textbf {\bibinfo
  {volume} {29}},\ \bibinfo {pages} {1419} (\bibinfo {year}
  {1984})}\BibitemShut {NoStop}%
\bibitem [{\citenamefont {Cirac}\ and\ \citenamefont
  {Zoller}(1995)}]{Cirac:95}%
  \BibitemOpen
  \bibfield  {author} {\bibinfo {author} {\bibfnamefont {J.~I.}\ \bibnamefont
  {Cirac}}\ and\ \bibinfo {author} {\bibfnamefont {P.}~\bibnamefont {Zoller}},\
  }\href {https://doi.org/10.1103/PhysRevLett.74.4091} {\bibfield  {journal}
  {\bibinfo  {journal} {Physical Review Letters}\ }\textbf {\bibinfo {volume}
  {74}},\ \bibinfo {pages} {4091} (\bibinfo {year} {1995})}\BibitemShut
  {NoStop}%
\bibitem [{\citenamefont {Kato}(1950)}]{Kato:50}%
  \BibitemOpen
  \bibfield  {author} {\bibinfo {author} {\bibfnamefont {T.}~\bibnamefont
  {Kato}},\ }\href {https://doi.org/10.1143/JPSJ.5.435} {\bibfield  {journal}
  {\bibinfo  {journal} {J. Phys. Soc. Jap.}\ }\textbf {\bibinfo {volume} {5}},\
  \bibinfo {pages} {435} (\bibinfo {year} {1950})}\BibitemShut {NoStop}%
\bibitem [{\citenamefont {Born}\ and\ \citenamefont
  {Fock}(1928)}]{born_beweis_1928}%
  \BibitemOpen
  \bibfield  {author} {\bibinfo {author} {\bibfnamefont {M.}~\bibnamefont
  {Born}}\ and\ \bibinfo {author} {\bibfnamefont {V.}~\bibnamefont {Fock}},\
  }\href {https://doi.org/10.1007/BF01343193} {\bibfield  {journal} {\bibinfo
  {journal} {Z. Physik}\ }\textbf {\bibinfo {volume} {51}},\ \bibinfo {pages}
  {165} (\bibinfo {year} {1928})}\BibitemShut {NoStop}%
\bibitem [{\citenamefont {Avron}\ \emph {et~al.}(1987)\citenamefont {Avron},
  \citenamefont {Seiler},\ and\ \citenamefont {Yaffe}}]{Avron:87}%
  \BibitemOpen
  \bibfield  {author} {\bibinfo {author} {\bibfnamefont {J.~E.}\ \bibnamefont
  {Avron}}, \bibinfo {author} {\bibfnamefont {R.}~\bibnamefont {Seiler}},\ and\
  \bibinfo {author} {\bibfnamefont {L.~G.}\ \bibnamefont {Yaffe}},\ }\href
  {https://doi.org/10.1007/BF01209015} {\bibfield  {journal} {\bibinfo
  {journal} {Commun. Math. Phys.}\ }\textbf {\bibinfo {volume} {110}},\
  \bibinfo {pages} {33} (\bibinfo {year} {1987})},\ \bibinfo {note} {erratum:
  \textit{ibid}, \textbf{156}, 649 (1993).}\BibitemShut {Stop}%
\bibitem [{\citenamefont {Klein}\ and\ \citenamefont
  {Seiler}(1990)}]{klein_power-law_1990}%
  \BibitemOpen
  \bibfield  {author} {\bibinfo {author} {\bibfnamefont {M.}~\bibnamefont
  {Klein}}\ and\ \bibinfo {author} {\bibfnamefont {R.}~\bibnamefont {Seiler}},\
  }\href {https://doi.org/10.1007/BF02097050} {\bibfield  {journal} {\bibinfo
  {journal} {Commun.Math. Phys.}\ }\textbf {\bibinfo {volume} {128}},\ \bibinfo
  {pages} {141} (\bibinfo {year} {1990})}\BibitemShut {NoStop}%
\bibitem [{\citenamefont {Avron}\ and\ \citenamefont
  {Elgart}(1998)}]{Avron_1998}%
  \BibitemOpen
  \bibfield  {author} {\bibinfo {author} {\bibfnamefont {J.~E.}\ \bibnamefont
  {Avron}}\ and\ \bibinfo {author} {\bibfnamefont {A.}~\bibnamefont {Elgart}},\
  }\href {https://doi.org/10.1103/physreva.58.4300} {\bibfield  {journal}
  {\bibinfo  {journal} {Physical Review A}\ }\textbf {\bibinfo {volume} {58}},\
  \bibinfo {pages} {4300} (\bibinfo {year} {1998})}\BibitemShut {NoStop}%
\bibitem [{\citenamefont {Jansen}\ \emph {et~al.}(2007)\citenamefont {Jansen},
  \citenamefont {Ruskai},\ and\ \citenamefont {Seiler}}]{Jansen:07}%
  \BibitemOpen
  \bibfield  {author} {\bibinfo {author} {\bibfnamefont {S.}~\bibnamefont
  {Jansen}}, \bibinfo {author} {\bibfnamefont {M.-B.}\ \bibnamefont {Ruskai}},\
  and\ \bibinfo {author} {\bibfnamefont {R.}~\bibnamefont {Seiler}},\ }\href
  {https://doi.org/10.1063/1.2798382} {\bibfield  {journal} {\bibinfo
  {journal} {J. Math. Phys.}\ }\textbf {\bibinfo {volume} {48}},\ \bibinfo
  {pages} {102111} (\bibinfo {year} {2007})}\BibitemShut {NoStop}%
\bibitem [{\citenamefont {Venuti}\ \emph {et~al.}(2016)\citenamefont {Venuti},
  \citenamefont {Albash}, \citenamefont {Lidar},\ and\ \citenamefont
  {Zanardi}}]{Venuti:2015kq}%
  \BibitemOpen
  \bibfield  {author} {\bibinfo {author} {\bibfnamefont {L.~C.}\ \bibnamefont
  {Venuti}}, \bibinfo {author} {\bibfnamefont {T.}~\bibnamefont {Albash}},
  \bibinfo {author} {\bibfnamefont {D.~A.}\ \bibnamefont {Lidar}},\ and\
  \bibinfo {author} {\bibfnamefont {P.}~\bibnamefont {Zanardi}},\ }\href
  {https://doi.org/10.1103/PhysRevA.93.032118} {\bibfield  {journal} {\bibinfo
  {journal} {{Phys. Rev. A}}\ }\textbf {\bibinfo {volume} {93}},\ \bibinfo
  {pages} {032118} (\bibinfo {year} {2016})}\BibitemShut {NoStop}%
\bibitem [{\citenamefont {Schmid}(2014)}]{Schmid:2014}%
  \BibitemOpen
  \bibfield  {author} {\bibinfo {author} {\bibfnamefont {J.}~\bibnamefont
  {Schmid}},\ }in\ \href {https://doi.org/10.1142/9789814618144_0031} {\emph
  {\bibinfo {booktitle} {Mathematical Results in Quantum Mechanics}}}\
  (\bibinfo {year} {2014})\ pp.\ \bibinfo {pages} {355--362}\BibitemShut
  {NoStop}%
\bibitem [{\citenamefont {Khezri}\ \emph {et~al.}(2021)\citenamefont {Khezri},
  \citenamefont {Dai}, \citenamefont {Yang}, \citenamefont {Albash},
  \citenamefont {Lupascu},\ and\ \citenamefont {Lidar}}]{khezri2021customized}%
  \BibitemOpen
  \bibfield  {author} {\bibinfo {author} {\bibfnamefont {M.}~\bibnamefont
  {Khezri}}, \bibinfo {author} {\bibfnamefont {X.}~\bibnamefont {Dai}},
  \bibinfo {author} {\bibfnamefont {R.}~\bibnamefont {Yang}}, \bibinfo {author}
  {\bibfnamefont {T.}~\bibnamefont {Albash}}, \bibinfo {author} {\bibfnamefont
  {A.}~\bibnamefont {Lupascu}},\ and\ \bibinfo {author} {\bibfnamefont {D.~A.}\
  \bibnamefont {Lidar}},\ }\href {https://arxiv.org/abs/2103.06461} {\bibinfo
  {title} {Customized quantum annealing schedules}} (\bibinfo {year} {2021}),\
  \Eprint {https://arxiv.org/abs/2103.06461} {arXiv:2103.06461 [quant-ph]}
  \BibitemShut {NoStop}%
\bibitem [{\citenamefont {{J.E. Mooij, T.P. Orlando, L. Levitov, L. Tian, C.H.
  v.d. Wal, and S. Lloyd}}(1999)}]{Mooij:99}%
  \BibitemOpen
  \bibfield  {author} {\bibinfo {author} {\bibnamefont {{J.E. Mooij, T.P.
  Orlando, L. Levitov, L. Tian, C.H. v.d. Wal, and S. Lloyd}}},\ }\href
  {https://doi.org/10.1126/science.285.5430.1036} {\bibfield  {journal}
  {\bibinfo  {journal} {Science}\ }\textbf {\bibinfo {volume} {285}},\ \bibinfo
  {pages} {1036} (\bibinfo {year} {1999})}\BibitemShut {NoStop}%
\bibitem [{\citenamefont {Wendin}(2017)}]{Wendin:2017aa}%
  \BibitemOpen
  \bibfield  {author} {\bibinfo {author} {\bibfnamefont {G.}~\bibnamefont
  {Wendin}},\ }\href {https://doi.org/10.1088/1361-6633/aa7e1a} {\bibfield
  {journal} {\bibinfo  {journal} {Reports on Progress in Physics}\ }\textbf
  {\bibinfo {volume} {80}},\ \bibinfo {pages} {106001} (\bibinfo {year}
  {2017})}\BibitemShut {NoStop}%
\bibitem [{\citenamefont {Harris}\ \emph {et~al.}(2008)\citenamefont {Harris},
  \citenamefont {Johnson}, \citenamefont {Han}, \citenamefont {Berkley},
  \citenamefont {Johansson}, \citenamefont {Bunyk}, \citenamefont {Ladizinsky},
  \citenamefont {Govorkov}, \citenamefont {Thom}, \citenamefont {Uchaikin},
  \citenamefont {Bumble}, \citenamefont {Fung}, \citenamefont {Kaul},
  \citenamefont {Kleinsasser}, \citenamefont {Amin},\ and\ \citenamefont
  {Averin}}]{Harris:2008lp}%
  \BibitemOpen
  \bibfield  {author} {\bibinfo {author} {\bibfnamefont {R.}~\bibnamefont
  {Harris}}, \bibinfo {author} {\bibfnamefont {M.~W.}\ \bibnamefont {Johnson}},
  \bibinfo {author} {\bibfnamefont {S.}~\bibnamefont {Han}}, \bibinfo {author}
  {\bibfnamefont {A.~J.}\ \bibnamefont {Berkley}}, \bibinfo {author}
  {\bibfnamefont {J.}~\bibnamefont {Johansson}}, \bibinfo {author}
  {\bibfnamefont {P.}~\bibnamefont {Bunyk}}, \bibinfo {author} {\bibfnamefont
  {E.}~\bibnamefont {Ladizinsky}}, \bibinfo {author} {\bibfnamefont
  {S.}~\bibnamefont {Govorkov}}, \bibinfo {author} {\bibfnamefont {M.~C.}\
  \bibnamefont {Thom}}, \bibinfo {author} {\bibfnamefont {S.}~\bibnamefont
  {Uchaikin}}, \bibinfo {author} {\bibfnamefont {B.}~\bibnamefont {Bumble}},
  \bibinfo {author} {\bibfnamefont {A.}~\bibnamefont {Fung}}, \bibinfo {author}
  {\bibfnamefont {A.}~\bibnamefont {Kaul}}, \bibinfo {author} {\bibfnamefont
  {A.}~\bibnamefont {Kleinsasser}}, \bibinfo {author} {\bibfnamefont
  {M.~H.~S.}\ \bibnamefont {Amin}},\ and\ \bibinfo {author} {\bibfnamefont
  {D.~V.}\ \bibnamefont {Averin}},\ }\href
  {https://doi.org/10.1103/PhysRevLett.101.117003} {\bibfield  {journal}
  {\bibinfo  {journal} {{Phys.~Rev.~Lett.}}\ }\textbf {\bibinfo {volume}
  {101}},\ \bibinfo {pages} {117003} (\bibinfo {year} {2008})}\BibitemShut
  {NoStop}%
\bibitem [{\citenamefont {Yan}\ \emph {et~al.}(2016)\citenamefont {Yan},
  \citenamefont {Gustavsson}, \citenamefont {Kamal}, \citenamefont {Birenbaum},
  \citenamefont {Sears}, \citenamefont {Hover}, \citenamefont {Gudmundsen},
  \citenamefont {Yoder}, \citenamefont {Orlando}, \citenamefont {Clarke},
  \citenamefont {Kerman},\ and\ \citenamefont {Oliver}}]{Yan15a}%
  \BibitemOpen
  \bibfield  {author} {\bibinfo {author} {\bibfnamefont {F.}~\bibnamefont
  {Yan}}, \bibinfo {author} {\bibfnamefont {S.}~\bibnamefont {Gustavsson}},
  \bibinfo {author} {\bibfnamefont {A.}~\bibnamefont {Kamal}}, \bibinfo
  {author} {\bibfnamefont {J.}~\bibnamefont {Birenbaum}}, \bibinfo {author}
  {\bibfnamefont {A.}~\bibnamefont {Sears}}, \bibinfo {author} {\bibfnamefont
  {D.}~\bibnamefont {Hover}}, \bibinfo {author} {\bibfnamefont
  {T.}~\bibnamefont {Gudmundsen}}, \bibinfo {author} {\bibfnamefont
  {J.}~\bibnamefont {Yoder}}, \bibinfo {author} {\bibfnamefont
  {T.}~\bibnamefont {Orlando}}, \bibinfo {author} {\bibfnamefont
  {J.}~\bibnamefont {Clarke}}, \bibinfo {author} {\bibfnamefont
  {A.}~\bibnamefont {Kerman}},\ and\ \bibinfo {author} {\bibfnamefont
  {W.}~\bibnamefont {Oliver}},\ }\href {https://doi.org/10.1038/ncomms12964}
  {\bibfield  {journal} {\bibinfo  {journal} {Nature Communications}\ }\textbf
  {\bibinfo {volume} {7}},\ \bibinfo {pages} {12964} (\bibinfo {year}
  {2016})}\BibitemShut {NoStop}%
\bibitem [{\citenamefont {Khezri}\ \emph {et~al.}(2020)\citenamefont {Khezri},
  \citenamefont {Grover}, \citenamefont {Basham}, \citenamefont {Disseler},
  \citenamefont {Chen}, \citenamefont {Novikov}, \citenamefont {Zick},\ and\
  \citenamefont {Lidar}}]{khezri2020annealpath}%
  \BibitemOpen
  \bibfield  {author} {\bibinfo {author} {\bibfnamefont {M.}~\bibnamefont
  {Khezri}}, \bibinfo {author} {\bibfnamefont {J.~A.}\ \bibnamefont {Grover}},
  \bibinfo {author} {\bibfnamefont {J.~I.}\ \bibnamefont {Basham}}, \bibinfo
  {author} {\bibfnamefont {S.~M.}\ \bibnamefont {Disseler}}, \bibinfo {author}
  {\bibfnamefont {H.}~\bibnamefont {Chen}}, \bibinfo {author} {\bibfnamefont
  {S.}~\bibnamefont {Novikov}}, \bibinfo {author} {\bibfnamefont {K.~M.}\
  \bibnamefont {Zick}},\ and\ \bibinfo {author} {\bibfnamefont {D.~A.}\
  \bibnamefont {Lidar}},\ }\href {https://arxiv.org/abs/2002.11217} {\bibinfo
  {title} {Anneal-path correction in flux qubits}} (\bibinfo {year} {2020}),\
  \Eprint {https://arxiv.org/abs/2002.11217} {arXiv:2002.11217 [quant-ph]}
  \BibitemShut {NoStop}%
\bibitem [{\citenamefont {Das}\ and\ \citenamefont
  {Chakrabarti}(2008)}]{RevModPhys.80.1061}%
  \BibitemOpen
  \bibfield  {author} {\bibinfo {author} {\bibfnamefont {A.}~\bibnamefont
  {Das}}\ and\ \bibinfo {author} {\bibfnamefont {B.~K.}\ \bibnamefont
  {Chakrabarti}},\ }\href {https://doi.org/10.1103/RevModPhys.80.1061}
  {\bibfield  {journal} {\bibinfo  {journal} {Rev. Mod. Phys.}\ }\textbf
  {\bibinfo {volume} {80}},\ \bibinfo {pages} {1061} (\bibinfo {year}
  {2008})}\BibitemShut {NoStop}%
\bibitem [{\citenamefont {Hauke}\ \emph {et~al.}(2020)\citenamefont {Hauke},
  \citenamefont {Katzgraber}, \citenamefont {Lechner}, \citenamefont
  {Nishimori},\ and\ \citenamefont {Oliver}}]{Hauke:2020}%
  \BibitemOpen
  \bibfield  {author} {\bibinfo {author} {\bibfnamefont {P.}~\bibnamefont
  {Hauke}}, \bibinfo {author} {\bibfnamefont {H.~G.}\ \bibnamefont
  {Katzgraber}}, \bibinfo {author} {\bibfnamefont {W.}~\bibnamefont {Lechner}},
  \bibinfo {author} {\bibfnamefont {H.}~\bibnamefont {Nishimori}},\ and\
  \bibinfo {author} {\bibfnamefont {W.~D.}\ \bibnamefont {Oliver}},\ }\href
  {https://doi.org/10.1088/1361-6633/ab85b8} {\bibfield  {journal} {\bibinfo
  {journal} {Reports on Progress in Physics}\ }\textbf {\bibinfo {volume}
  {83}},\ \bibinfo {pages} {054401} (\bibinfo {year} {2020})}\BibitemShut
  {NoStop}%
\bibitem [{\citenamefont {Chakrabarti}\ and\ \citenamefont
  {Mukherjee}(2022)}]{Chakrabarti:2022}%
  \BibitemOpen
  \bibfield  {author} {\bibinfo {author} {\bibfnamefont {B.~K.}\ \bibnamefont
  {Chakrabarti}}\ and\ \bibinfo {author} {\bibfnamefont {S.}~\bibnamefont
  {Mukherjee}},\ }\href {https://doi.org/10.48550/ARXIV.2203.15839} {\bibinfo
  {title} {Quantum annealing and computation}} (\bibinfo {year}
  {2022})\BibitemShut {NoStop}%
\bibitem [{\citenamefont {Salem}(2007)}]{salem_quasi-static_2007}%
  \BibitemOpen
  \bibfield  {author} {\bibinfo {author} {\bibfnamefont {W.~K.~A.}\
  \bibnamefont {Salem}},\ }\href {https://doi.org/10.1007/s00023-006-0316-2}
  {\bibfield  {journal} {\bibinfo  {journal} {Ann. Henri Poincar{\'e}}\
  }\textbf {\bibinfo {volume} {8}},\ \bibinfo {pages} {569} (\bibinfo {year}
  {2007})}\BibitemShut {NoStop}%
\bibitem [{\citenamefont {Bachmann}\ \emph {et~al.}(2017)\citenamefont
  {Bachmann}, \citenamefont {De~Roeck},\ and\ \citenamefont
  {Fraas}}]{Bachmann:2017aa}%
  \BibitemOpen
  \bibfield  {author} {\bibinfo {author} {\bibfnamefont {S.}~\bibnamefont
  {Bachmann}}, \bibinfo {author} {\bibfnamefont {W.}~\bibnamefont {De~Roeck}},\
  and\ \bibinfo {author} {\bibfnamefont {M.}~\bibnamefont {Fraas}},\ }\href
  {https://doi.org/10.1103/PhysRevLett.119.060201} {\bibfield  {journal}
  {\bibinfo  {journal} {Physical Review Letters}\ }\textbf {\bibinfo {volume}
  {119}},\ \bibinfo {pages} {060201} (\bibinfo {year} {2017})}\BibitemShut
  {NoStop}%
\bibitem [{\citenamefont {Bachmann}\ \emph {et~al.}(2018)\citenamefont
  {Bachmann}, \citenamefont {De~Roeck},\ and\ \citenamefont {M.}}]{Fraas:18}%
  \BibitemOpen
  \bibfield  {author} {\bibinfo {author} {\bibfnamefont {S.}~\bibnamefont
  {Bachmann}}, \bibinfo {author} {\bibfnamefont {W.}~\bibnamefont {De~Roeck}},\
  and\ \bibinfo {author} {\bibfnamefont {F.}~\bibnamefont {M.}},\ }\href
  {https://doi.org/10.1007/s00220-018-3117-9} {\bibfield  {journal} {\bibinfo
  {journal} {Communications in Mathematical Physics}\ }\textbf {\bibinfo
  {volume} {361}},\ \bibinfo {pages} {997} (\bibinfo {year}
  {2018})}\BibitemShut {NoStop}%
\bibitem [{\citenamefont {O'Hara}\ and\ \citenamefont
  {O'Leary}(2008)}]{PhysRevA.77.042319}%
  \BibitemOpen
  \bibfield  {author} {\bibinfo {author} {\bibfnamefont {M.~J.}\ \bibnamefont
  {O'Hara}}\ and\ \bibinfo {author} {\bibfnamefont {D.~P.}\ \bibnamefont
  {O'Leary}},\ }\href {https://doi.org/10.1103/PhysRevA.77.042319} {\bibfield
  {journal} {\bibinfo  {journal} {Phys. Rev. A}\ }\textbf {\bibinfo {volume}
  {77}},\ \bibinfo {pages} {042319} (\bibinfo {year} {2008})}\BibitemShut
  {NoStop}%
\bibitem [{\citenamefont {Albash}\ \emph {et~al.}(2012)\citenamefont {Albash},
  \citenamefont {Boixo}, \citenamefont {Lidar},\ and\ \citenamefont
  {Zanardi}}]{ABLZ:12-SI}%
  \BibitemOpen
  \bibfield  {author} {\bibinfo {author} {\bibfnamefont {T.}~\bibnamefont
  {Albash}}, \bibinfo {author} {\bibfnamefont {S.}~\bibnamefont {Boixo}},
  \bibinfo {author} {\bibfnamefont {D.~A.}\ \bibnamefont {Lidar}},\ and\
  \bibinfo {author} {\bibfnamefont {P.}~\bibnamefont {Zanardi}},\ }\href
  {http://dx.doi.org/10.1088/1367-2630/14/12/123016} {\bibfield  {journal}
  {\bibinfo  {journal} {New J. of Phys.}\ }\textbf {\bibinfo {volume} {14}},\
  \bibinfo {pages} {123016} (\bibinfo {year} {2012})}\BibitemShut {NoStop}%
\bibitem [{\citenamefont {Osborne}(2007)}]{osborne2007simulating}%
  \BibitemOpen
  \bibfield  {author} {\bibinfo {author} {\bibfnamefont {T.~J.}\ \bibnamefont
  {Osborne}},\ }\href {http://dx.doi.org/10.1103/PhysRevA.75.032321} {\bibfield
   {journal} {\bibinfo  {journal} {Phys. Rev. A}\ }\textbf {\bibinfo {volume}
  {75}},\ \bibinfo {pages} {032321} (\bibinfo {year} {2007})}\BibitemShut
  {NoStop}%
\bibitem [{\citenamefont {{M. B. Hastings and X. -G.
  Wen}}(2007)}]{HastingsQuasiAdiabatic}%
  \BibitemOpen
  \bibfield  {author} {\bibinfo {author} {\bibnamefont {{M. B. Hastings and X.
  -G. Wen}}},\ }\href {https://doi.org/10.1103/PhysRevB.72.045141} {\bibfield
  {journal} {\bibinfo  {journal} {Phys. Rev. B}\ }\textbf {\bibinfo {volume}
  {75}},\ \bibinfo {pages} {032321} (\bibinfo {year} {2007})}\BibitemShut
  {NoStop}%
\bibitem [{\citenamefont {{M. Reed and B. Simon}}(1978)}]{Reed-Simon:book4}%
  \BibitemOpen
  \bibfield  {author} {\bibinfo {author} {\bibnamefont {{M. Reed and B.
  Simon}}},\ }\href {https://doi.org/10.1016/B978-0-12-585001-8.X5001-6} {\emph
  {\bibinfo {title} {{Methods of Modern Mathematical Physics IV: Analysis of
  Operators}}}}\ (\bibinfo  {publisher} {{Academic Press}},\ \bibinfo {address}
  {{San Diego}},\ \bibinfo {year} {1978})\BibitemShut {NoStop}%
\bibitem [{\citenamefont {Johnson}\ \emph {et~al.}(2011)\citenamefont
  {Johnson}, \citenamefont {Amin}, \citenamefont {Gildert}, \citenamefont
  {Lanting}, \citenamefont {Hamze}, \citenamefont {Dickson}, \citenamefont
  {Harris}, \citenamefont {Berkley}, \citenamefont {Johansson}, \citenamefont
  {Bunyk}, \citenamefont {Chapple}, \citenamefont {Enderud}, \citenamefont
  {Hilton}, \citenamefont {Karimi}, \citenamefont {Ladizinsky}, \citenamefont
  {Ladizinsky}, \citenamefont {Oh}, \citenamefont {Perminov}, \citenamefont
  {Rich}, \citenamefont {Thom}, \citenamefont {Tolkacheva}, \citenamefont
  {Truncik}, \citenamefont {Uchaikin}, \citenamefont {Wang}, \citenamefont
  {Wilson},\ and\ \citenamefont {Rose}}]{Dwave}%
  \BibitemOpen
  \bibfield  {author} {\bibinfo {author} {\bibfnamefont {M.~W.}\ \bibnamefont
  {Johnson}}, \bibinfo {author} {\bibfnamefont {M.~H.~S.}\ \bibnamefont
  {Amin}}, \bibinfo {author} {\bibfnamefont {S.}~\bibnamefont {Gildert}},
  \bibinfo {author} {\bibfnamefont {T.}~\bibnamefont {Lanting}}, \bibinfo
  {author} {\bibfnamefont {F.}~\bibnamefont {Hamze}}, \bibinfo {author}
  {\bibfnamefont {N.}~\bibnamefont {Dickson}}, \bibinfo {author} {\bibfnamefont
  {R.}~\bibnamefont {Harris}}, \bibinfo {author} {\bibfnamefont {A.~J.}\
  \bibnamefont {Berkley}}, \bibinfo {author} {\bibfnamefont {J.}~\bibnamefont
  {Johansson}}, \bibinfo {author} {\bibfnamefont {P.}~\bibnamefont {Bunyk}},
  \bibinfo {author} {\bibfnamefont {E.~M.}\ \bibnamefont {Chapple}}, \bibinfo
  {author} {\bibfnamefont {C.}~\bibnamefont {Enderud}}, \bibinfo {author}
  {\bibfnamefont {J.~P.}\ \bibnamefont {Hilton}}, \bibinfo {author}
  {\bibfnamefont {K.}~\bibnamefont {Karimi}}, \bibinfo {author} {\bibfnamefont
  {E.}~\bibnamefont {Ladizinsky}}, \bibinfo {author} {\bibfnamefont
  {N.}~\bibnamefont {Ladizinsky}}, \bibinfo {author} {\bibfnamefont
  {T.}~\bibnamefont {Oh}}, \bibinfo {author} {\bibfnamefont {I.}~\bibnamefont
  {Perminov}}, \bibinfo {author} {\bibfnamefont {C.}~\bibnamefont {Rich}},
  \bibinfo {author} {\bibfnamefont {M.~C.}\ \bibnamefont {Thom}}, \bibinfo
  {author} {\bibfnamefont {E.}~\bibnamefont {Tolkacheva}}, \bibinfo {author}
  {\bibfnamefont {C.~J.~S.}\ \bibnamefont {Truncik}}, \bibinfo {author}
  {\bibfnamefont {S.}~\bibnamefont {Uchaikin}}, \bibinfo {author}
  {\bibfnamefont {J.}~\bibnamefont {Wang}}, \bibinfo {author} {\bibfnamefont
  {B.}~\bibnamefont {Wilson}},\ and\ \bibinfo {author} {\bibfnamefont
  {G.}~\bibnamefont {Rose}},\ }\href {http://dx.doi.org/10.1038/nature10012}
  {\bibfield  {journal} {\bibinfo  {journal} {Nature}\ }\textbf {\bibinfo
  {volume} {473}},\ \bibinfo {pages} {194} (\bibinfo {year}
  {2011})}\BibitemShut {NoStop}%
\bibitem [{\citenamefont {Harris}\ \emph {et~al.}(2010)\citenamefont {Harris},
  \citenamefont {Johnson}, \citenamefont {Lanting}, \citenamefont {Berkley},
  \citenamefont {Johansson}, \citenamefont {Bunyk}, \citenamefont {Tolkacheva},
  \citenamefont {Ladizinsky}, \citenamefont {Ladizinsky}, \citenamefont {Oh},
  \citenamefont {Cioata}, \citenamefont {Perminov}, \citenamefont {Spear},
  \citenamefont {Enderud}, \citenamefont {Rich}, \citenamefont {Uchaikin},
  \citenamefont {Thom}, \citenamefont {Chapple}, \citenamefont {Wang},
  \citenamefont {Wilson}, \citenamefont {Amin}, \citenamefont {Dickson},
  \citenamefont {Karimi}, \citenamefont {Macready}, \citenamefont {Truncik},\
  and\ \citenamefont {Rose}}]{Harris:2010kx}%
  \BibitemOpen
  \bibfield  {author} {\bibinfo {author} {\bibfnamefont {R.}~\bibnamefont
  {Harris}}, \bibinfo {author} {\bibfnamefont {M.~W.}\ \bibnamefont {Johnson}},
  \bibinfo {author} {\bibfnamefont {T.}~\bibnamefont {Lanting}}, \bibinfo
  {author} {\bibfnamefont {A.~J.}\ \bibnamefont {Berkley}}, \bibinfo {author}
  {\bibfnamefont {J.}~\bibnamefont {Johansson}}, \bibinfo {author}
  {\bibfnamefont {P.}~\bibnamefont {Bunyk}}, \bibinfo {author} {\bibfnamefont
  {E.}~\bibnamefont {Tolkacheva}}, \bibinfo {author} {\bibfnamefont
  {E.}~\bibnamefont {Ladizinsky}}, \bibinfo {author} {\bibfnamefont
  {N.}~\bibnamefont {Ladizinsky}}, \bibinfo {author} {\bibfnamefont
  {T.}~\bibnamefont {Oh}}, \bibinfo {author} {\bibfnamefont {F.}~\bibnamefont
  {Cioata}}, \bibinfo {author} {\bibfnamefont {I.}~\bibnamefont {Perminov}},
  \bibinfo {author} {\bibfnamefont {P.}~\bibnamefont {Spear}}, \bibinfo
  {author} {\bibfnamefont {C.}~\bibnamefont {Enderud}}, \bibinfo {author}
  {\bibfnamefont {C.}~\bibnamefont {Rich}}, \bibinfo {author} {\bibfnamefont
  {S.}~\bibnamefont {Uchaikin}}, \bibinfo {author} {\bibfnamefont {M.~C.}\
  \bibnamefont {Thom}}, \bibinfo {author} {\bibfnamefont {E.~M.}\ \bibnamefont
  {Chapple}}, \bibinfo {author} {\bibfnamefont {J.}~\bibnamefont {Wang}},
  \bibinfo {author} {\bibfnamefont {B.}~\bibnamefont {Wilson}}, \bibinfo
  {author} {\bibfnamefont {M.~H.~S.}\ \bibnamefont {Amin}}, \bibinfo {author}
  {\bibfnamefont {N.}~\bibnamefont {Dickson}}, \bibinfo {author} {\bibfnamefont
  {K.}~\bibnamefont {Karimi}}, \bibinfo {author} {\bibfnamefont
  {B.}~\bibnamefont {Macready}}, \bibinfo {author} {\bibfnamefont {C.~J.~S.}\
  \bibnamefont {Truncik}},\ and\ \bibinfo {author} {\bibfnamefont
  {G.}~\bibnamefont {Rose}},\ }\href
  {https://doi.org/10.1103/PhysRevB.82.024511} {\bibfield  {journal} {\bibinfo
  {journal} {Phys. Rev. B}\ }\textbf {\bibinfo {volume} {82}},\ \bibinfo
  {pages} {024511} (\bibinfo {year} {2010})}\BibitemShut {NoStop}%
\bibitem [{\citenamefont {Weber}\ \emph {et~al.}(2017)\citenamefont {Weber},
  \citenamefont {Samach}, \citenamefont {Hover}, \citenamefont {Gustavsson},
  \citenamefont {Kim}, \citenamefont {Melville}, \citenamefont {Rosenberg},
  \citenamefont {Sears}, \citenamefont {Yan}, \citenamefont {Yoder},
  \citenamefont {Oliver},\ and\ \citenamefont {Kerman}}]{Weber:2017aa}%
  \BibitemOpen
  \bibfield  {author} {\bibinfo {author} {\bibfnamefont {S.~J.}\ \bibnamefont
  {Weber}}, \bibinfo {author} {\bibfnamefont {G.~O.}\ \bibnamefont {Samach}},
  \bibinfo {author} {\bibfnamefont {D.}~\bibnamefont {Hover}}, \bibinfo
  {author} {\bibfnamefont {S.}~\bibnamefont {Gustavsson}}, \bibinfo {author}
  {\bibfnamefont {D.~K.}\ \bibnamefont {Kim}}, \bibinfo {author} {\bibfnamefont
  {A.}~\bibnamefont {Melville}}, \bibinfo {author} {\bibfnamefont
  {D.}~\bibnamefont {Rosenberg}}, \bibinfo {author} {\bibfnamefont {A.~P.}\
  \bibnamefont {Sears}}, \bibinfo {author} {\bibfnamefont {F.}~\bibnamefont
  {Yan}}, \bibinfo {author} {\bibfnamefont {J.~L.}\ \bibnamefont {Yoder}},
  \bibinfo {author} {\bibfnamefont {W.~D.}\ \bibnamefont {Oliver}},\ and\
  \bibinfo {author} {\bibfnamefont {A.~J.}\ \bibnamefont {Kerman}},\ }\href
  {https://doi.org/10.1103/PhysRevApplied.8.014004} {\bibfield  {journal}
  {\bibinfo  {journal} {Physical Review Applied}\ }\textbf {\bibinfo {volume}
  {8}},\ \bibinfo {pages} {014004} (\bibinfo {year} {2017})}\BibitemShut
  {NoStop}%
\bibitem [{\citenamefont {Novikov}\ \emph {et~al.}(2018)\citenamefont
  {Novikov}, \citenamefont {Hinkey}, \citenamefont {Disseler}, \citenamefont
  {Basham}, \citenamefont {Albash}, \citenamefont {Risinger}, \citenamefont
  {Ferguson}, \citenamefont {Lidar},\ and\ \citenamefont
  {Zick}}]{Novikov:2018aa}%
  \BibitemOpen
  \bibfield  {author} {\bibinfo {author} {\bibfnamefont {S.}~\bibnamefont
  {Novikov}}, \bibinfo {author} {\bibfnamefont {R.}~\bibnamefont {Hinkey}},
  \bibinfo {author} {\bibfnamefont {S.}~\bibnamefont {Disseler}}, \bibinfo
  {author} {\bibfnamefont {J.~I.}\ \bibnamefont {Basham}}, \bibinfo {author}
  {\bibfnamefont {T.}~\bibnamefont {Albash}}, \bibinfo {author} {\bibfnamefont
  {A.}~\bibnamefont {Risinger}}, \bibinfo {author} {\bibfnamefont
  {D.}~\bibnamefont {Ferguson}}, \bibinfo {author} {\bibfnamefont {D.~A.}\
  \bibnamefont {Lidar}},\ and\ \bibinfo {author} {\bibfnamefont {K.~M.}\
  \bibnamefont {Zick}},\ }in\ \href {https://doi.org/10.1109/ICRC.2018.8638625}
  {\emph {\bibinfo {booktitle} {2018 IEEE International Conference on Rebooting
  Computing (ICRC)}}}\ (\bibinfo {year} {2018})\ pp.\ \bibinfo {pages}
  {1--7}\BibitemShut {NoStop}%
\bibitem [{\citenamefont {Orlando}\ \emph {et~al.}(1999)\citenamefont
  {Orlando}, \citenamefont {Mooij}, \citenamefont {Tian}, \citenamefont
  {van~der Wal}, \citenamefont {Levitov}, \citenamefont {Lloyd},\ and\
  \citenamefont {Mazo}}]{Orlando:99}%
  \BibitemOpen
  \bibfield  {author} {\bibinfo {author} {\bibfnamefont {T.~P.}\ \bibnamefont
  {Orlando}}, \bibinfo {author} {\bibfnamefont {J.~E.}\ \bibnamefont {Mooij}},
  \bibinfo {author} {\bibfnamefont {L.}~\bibnamefont {Tian}}, \bibinfo {author}
  {\bibfnamefont {C.~H.}\ \bibnamefont {van~der Wal}}, \bibinfo {author}
  {\bibfnamefont {L.~S.}\ \bibnamefont {Levitov}}, \bibinfo {author}
  {\bibfnamefont {S.}~\bibnamefont {Lloyd}},\ and\ \bibinfo {author}
  {\bibfnamefont {J.~J.}\ \bibnamefont {Mazo}},\ }\href
  {https://doi.org/10.1103/PhysRevB.60.15398} {\bibfield  {journal} {\bibinfo
  {journal} {Phys. Rev. B}\ }\textbf {\bibinfo {volume} {60}},\ \bibinfo
  {pages} {15398} (\bibinfo {year} {1999})}\BibitemShut {NoStop}%
\bibitem [{\citenamefont {Koch}\ \emph {et~al.}(2007)\citenamefont {Koch},
  \citenamefont {Yu}, \citenamefont {Gambetta}, \citenamefont {Houck},
  \citenamefont {Schuster}, \citenamefont {Majer}, \citenamefont {Blais},
  \citenamefont {Devoret}, \citenamefont {Girvin},\ and\ \citenamefont
  {Schoelkopf}}]{transmon-invention}%
  \BibitemOpen
  \bibfield  {author} {\bibinfo {author} {\bibfnamefont {J.}~\bibnamefont
  {Koch}}, \bibinfo {author} {\bibfnamefont {T.~M.}\ \bibnamefont {Yu}},
  \bibinfo {author} {\bibfnamefont {J.}~\bibnamefont {Gambetta}}, \bibinfo
  {author} {\bibfnamefont {A.~A.}\ \bibnamefont {Houck}}, \bibinfo {author}
  {\bibfnamefont {D.~I.}\ \bibnamefont {Schuster}}, \bibinfo {author}
  {\bibfnamefont {J.}~\bibnamefont {Majer}}, \bibinfo {author} {\bibfnamefont
  {A.}~\bibnamefont {Blais}}, \bibinfo {author} {\bibfnamefont {M.~H.}\
  \bibnamefont {Devoret}}, \bibinfo {author} {\bibfnamefont {S.~M.}\
  \bibnamefont {Girvin}},\ and\ \bibinfo {author} {\bibfnamefont {R.~J.}\
  \bibnamefont {Schoelkopf}},\ }\href
  {https://doi.org/10.1103/PhysRevA.76.042319} {\bibfield  {journal} {\bibinfo
  {journal} {Physical Review A}\ }\textbf {\bibinfo {volume} {76}},\ \bibinfo
  {pages} {042319} (\bibinfo {year} {2007})}\BibitemShut {NoStop}%
\end{thebibliography}%
\end{document}